\newcommand{\defparaproblem}[4]{
 \vspace{2mm}
\noindent\fbox{
 \begin{minipage}{0.96\textwidth}
 \begin{tabular*}{\textwidth}{@{\extracolsep{\fill}}lr} #1 & \\ \end{tabular*}
 {\textbf{Input:}} #2 \\
 {\textbf{Parameter:}} #3 \\
 {\textbf{Question:}} #4
 \end{minipage}
 }
 \vspace{2mm}
}
\newtheorem{lemma}{Lemma}[section]
\newtheorem{theorem}[lemma]{Theorem}
\newtheorem{claim}[lemma]{Claim}
\newtheorem{corollary}[lemma]{Corollary}
\newtheorem{observation}[lemma]{Observation}
\DeclareMathOperator{\tpw}{\textup{\textbf{tpw}}}
\DeclareMathOperator{\tcw}{\textup{\textbf{tcw}}}
\DeclareMathOperator{\tw}{\textup{\textbf{tw}}}
\DeclareMathOperator{\wn}{\textup{\textbf{wn}}}
\DeclareMathOperator{\gn}{\textup{\textbf{gn}}}
\DeclareMathOperator{\cut}{cut}
\author[Hans L. Bodlaender \and Carla Groenland \and Hugo Jacob]{Hans L. Bodlaender\affiliationmark{1}
  \and Carla Groenland\affiliationmark{2}
  \and Hugo Jacob\affiliationmark{3}}
\title[On the parameterized complexity of computing tree-partitions]{On the parameterized complexity of computing tree-partitions}
\affiliation{
  % one line per affiliation, no postal codes, grant numbers or similar
  Universiteit Utrecht, Utrecht, Netherlands\\
  Technische Universiteit Delft, Delft, Netherlands\\
  LIRMM, Université de Montpellier, CNRS, Montpellier, France}
\keywords{parameterized algorithms, tree-partitions, tree-partition-width, tree-cut width, domino tree\-width, tree\-width, approximation algorithms, parameterized complexity}
\begin{document}
% This is the new form of collecting the data, starting with vol 25
\publicationdata{vol. 26:3}{2024}{3}{10.46298/dmtcs.12540}{2023-11-12; 2023-11-12; 2024-04-30; 2024-10-03}{2024-07-29}

\maketitle

\begin{abstract}
    We study the parameterized complexity of computing the tree-partition-width, a graph parameter equivalent to tree\-width on graphs of bounded maximum degree. 
    
    On one hand, we can obtain approximations of the 
    tree-partition-width efficiently: we show that there is an algorithm that, given an $n$-vertex graph $G$ and an integer $k$, constructs a tree-partition of width $O(k^7)$ for $G$ or reports that $G$ has tree-partition-width more than $k$, in time $k^{O(1)}n^2$. We can improve slightly on the approximation factor by sacrificing the dependence on $k$, or on $n$.

On the other hand, we show the problem of computing
tree-partition-width exactly is XALP-complete, which implies that it is $W[t]$-hard for all $t$. We deduce XALP-completeness of the problem of computing the domino treewidth. 

Next, we adapt some known results on the parameter tree-partition-width and the topological minor relation, and use them to compare tree-partition-width to tree-cut width.

Finally, for the related parameter weighted
tree-partition-width, we give a similar approximation algorithm (with ratio now $O(k^{15})$)
and show XALP-completeness for the special case where vertices
and edges have weight 1.
\end{abstract}

\section{Introduction}

Graph decompositions have been a very useful tool to draw the line between tractability and intractability of computational problems. There are many meta-theorems showing that a collection of problems can be solved efficiently if a decomposition of some form is given (e.g. for treewidth \cite{Courcelle90}, for clique-width \cite{CourcelleMR00}, for twin-width \cite{tww1}, for mim-width \cite{mimwidthmetathm}). By finding efficient algorithms to compute a decomposition if it exists, we deduce the existence of efficient algorithms even if the decomposition is not given.
In particular, this proves useful when designing win-win arguments: for some problems, the existence of a solution and the existence of a decomposition are not independent, so that we can either use the decomposition for an efficient computation of the solution, or conclude that a solution must (or cannot) exist when there is no decomposition of small enough width.

The most successful notion of graph decomposition to date is certainly tree decompositions, and its corresponding parameter treewidth. 
Any problem expressible in MSO$_2$\footnote{Formulae with quantification over sets of edges or vertices, quantification over vertices and edges, and with the incidence predicate.} can be solved in linear time in graphs of bounded treewidth due to a meta-theorem of Courcelle \cite{Courcelle90} and the algorithm of Bodlaender for computing an optimal tree decomposition \cite{Bodlaender96}.
Treewidth is a central tool in the study of minor-closed graph classes. A minor-closed graph class has bounded treewidth if and only if it contains no large grid minor.

In this paper, we focus on the parameter tree-partition-width (also called strong treewidth) which was independently introduced by Seese \cite{Seese85} and Halin \cite{Halin91}. It is known to have simple relations to treewidth \cite{DingO95,Wood09}: $\tw = O(\tpw)$, and $\tpw = O(\Delta \tw)$, where $\tw,\tpw,\Delta$ denote the treewidth, the tree-partition-width, and the maximum degree respectively. 
Applications of tree-partition-width include graph drawing and graph coloring \cite{CarmiDMW08,GiacomoLM05,DujmovicMW05,DujmovicSW07,WoodT06,BaratW08,AlonDV03}.
Recently, Bodlaender, Cornelissen and Van der Wegen~\cite{BodlaenderCW22} showed for a number of problems (in particular, problems related to network flow) that these are intractable (XNLP-complete) when the
pathwidth is used as parameter, but become fixed parameter tractable when parameterized by the width of a given tree-partition. This raises the question of the complexity of finding tree-partitions. We show that computing tree-partitions of approximate width is tractable. 
\begin{theorem}
\label{thm:approx}
There is an algorithm that given an $n$-vertex graph $G$ and an integer $k$, constructs a tree-partition of width $O(k^7)$ for $G$ or reports that $G$ has tree-partition-width more than $k$, in time $k^{O(1)}n^2$.
\end{theorem}
Thus, this removes the requirement from the results from \cite{BodlaenderCW22} that a tree-partition
of small width is part of the input. Our technique is modular and allows us to also give alternatives running in FPT time or polynomial time with an improved approximation factor (see Theorem \ref{thm:main}). Although not formulated as an algorithm, a construction of Ding and Oporowski \cite{DingO96} implies an FPT algorithm to compute tree-partitions of width $f(k)$ for graphs of tree-partition-width $k$, for some fixed computable function $f$. We adapt their construction and give some new arguments designed for our purposes. This significantly improves on the upper bounds to the width, and the running time. 

\smallskip

The results from \cite{BodlaenderCW22} are stated in terms of the notions of stable gonality,
stable tree-partition-width and a new parameter
called weighted tree-partition-width\footnote{In earlier versions of \cite{BodlaenderCW22}, the parameter was called treebreadth, but to avoid confusion, the term weighted tree-partition-width is now used.}.
The notion of stability comes from the algebraic geometry origins of the notion of gonality;
in graph-theoretic terms, this implies that we look at the minimum over all possible subdivisions
of edges. It turns out
that tree-partition-width, stable tree-partition-width, 
and weighted tree-partition-width (with edge weights one) are bounded by polynomial functions of each other; see Section~\ref{section:tcw} and Corollary~\ref{coro:stable-tpw}. 
In Section~\ref{section:weightedtpw}, we obtain
some results on the complexity of computing and approximating the weighted tree-partition-width, as corollaries of earlier results.

\smallskip

Related to tree-partition-width is the notion of domino treewidth, first studied by Bodlaender and Engelfriet~\cite{BodlaenderE97}. A \textit{domino tree decomposition} is a tree decomposition where each vertex is in at most
two bags. 
Where graphs of small tree-partition-width can have large degree, a graph of domino treewidth $k$ has maximum degree at 
most $2k$. Bodlaender and Engelfriet show that \textsc{Domino Treewidth} is hard for each class $W[t]$, $t\in \mathbb{N}$; we improve this result and show XALP-completeness.
\begin{theorem}
\label{thm:xalpcomplthms}
\textsc{Domino Treewidth} and 
\textsc{Tree-Partition-Width} are XALP-complete.
\end{theorem}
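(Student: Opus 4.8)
\emph{Membership in XALP.} For \textsc{Tree Partition Width}, an XALP machine can guess and check a tree-partition by a depth-first traversal of its decomposition tree $T$. On the stack it stores the (at most $k$) vertices of the bag of every node on the current root-to-node path; since the stack is not counted towards the $f(k)\log n$ space bound, this is affordable no matter how deep $T$ is. On the work tape it verifies the local conditions: every bag has at most $k$ vertices; for each vertex $u$ of the current bag, every neighbour of $u$ lies in the current bag, in the parent bag (available on the stack), or in some child of the current node --- the last being tracked, for each of the $\le k$ vertices of the bag, by a counter of still-unserved incident edges that must drop to $0$ once all children have been explored; and the bags partition $V(G)$. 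This last point needs a little care, since a dishonest guess could place a vertex in several bags or in none, but it is handled by the bookkeeping standard in XALP-membership proofs for other tree-structured width measures, combining a global count of placed vertices with the requirement that a vertex placed at a node appears in no ancestor bag. \textsc{Domino Treewidth} is treated identically, now allowing a vertex to occur in a node's bag and in (at most one of) its incident bags; the degree bound is then automatic.

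\emph{Hardness.} The main work is XALP-hardness, which strengthens the result of \cite{BodlaenderE97} that \textsc{Domino Treewidth} is $W[t]$-hard for all $t$. I would reduce from an XALP-complete problem with an explicit tree structure --- for instance \textsc{List Colouring} parameterised by treewidth, or a similar tree-decomposition-parameterised binary-constraint problem, where an instance consists of a tree $T'$, a set of variables of parameter-bounded size at each node, a domain, and binary constraints between variables sharing a node, and one asks for a satisfying assignment. Given such an instance I would build a graph $G$ and an integer $k$, polynomial in the parameter, so that $G$ has a tree-partition of width at most $k$ if and only if the instance is satisfiable. The construction would use four kinds of gadget: (i) a \emph{skeleton} subgraph that forces the underlying tree of every width-$k$ tree-partition to contain a subdivision of $T'$, realised by long paths of ``forced'' bags so that the shape cannot deviate without exceeding the budget; (ii) \emph{variable gadgets} --- for each variable $x$ at a node $p$ of $T'$, a vertex set that in any width-$k$ tree-partition must occupy the bags associated with $p$, arranged so that the only remaining freedom is which of $|\mathrm{domain}|$ configurations is selected, which encodes the value of $x$; (iii) \emph{constraint gadgets} --- extra vertices and edges that make every incompatible pair of configurations overflow the width; and (iv) \emph{consistency gadgets} forcing a variable occurring at two adjacent nodes of $T'$ to pick the same configuration at both. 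With $k$ tuned so that a satisfying assignment yields the evident tree-partition of width exactly $k$, the converse direction would show that a width-$k$ tree-partition is so rigid that the configurations chosen by the variable gadgets read off a satisfying assignment.

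\emph{From tree-partitions to domino tree decompositions.} To deduce XALP-hardness for \textsc{Domino Treewidth}, I would re-run essentially the same reduction but design $G$ to have maximum degree bounded by a polynomial in the parameter, and engineer the gadgets so that they certify a domino tree decomposition of the intended width exactly when they certify a tree-partition of the intended width: a satisfying assignment again yields an explicit narrow domino tree decomposition mirroring $T'$, and conversely any sufficiently narrow domino tree decomposition is forced into the same rigid shape. Combined with the membership argument, this gives XALP-completeness of both problems.

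\emph{Main obstacle.} The crux is the hardness construction. Unlike for path-structured decompositions, one does not get to choose the underlying tree, so the skeleton must at the same time pin down its shape and leave exactly the intended freedom; and the width budget must be accounted for to the last unit, since a tree-partition can ``cheat'' in many ways by redistributing slack between neighbouring bags. For \textsc{Domino Treewidth} there is the further burden of keeping the maximum degree bounded and respecting the rule that each vertex lies in at most two bags, while all gadgets continue to behave as designed.
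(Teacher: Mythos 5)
Your membership sketch and the paper's are superficially similar (guess-and-check along the decomposition tree) but differ in a way that matters. You propose to store the entire root-to-node path of bags on the stack and verify per-vertex conditions with counters, conceding that avoiding a vertex being placed twice or never ``needs a little care'' and is ``handled by the bookkeeping standard in XALP-membership proofs.'' That is exactly the part that is \emph{not} standard here, because with a pushdown you can only see the top of the stack, so ancestor bags are not inspectable, and a vertex could reappear deep in a subtree with nothing local to catch it. The paper sidesteps this entirely: it keeps only the current bag $B$ and its parent bag $P$ in workspace, and uses Reingold's logspace undirected reachability to enumerate the connected components of $G-B$ not containing $P$; each such component must get its own child bag, and every vertex is placed exactly once by construction, so no global double-counting argument is needed. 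For \textsc{Domino Treewidth} you say it is ``treated identically'' with the degree bound ``automatic,'' but the paper must first explicitly reject instances of degree more than $2k$ and then tracks, for each neighbour of the current bag, whether it has been covered --- a different and more delicate certificate than yours.

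On hardness, your plan is at the right level of abstraction (reduce from a tree-structured binary-constraint problem; the paper's source problem, \textsc{Tree-Chained Multicolor Independent Set}, is indeed of this type) and your four gadget roles --- skeleton, variable, constraint, consistency --- map loosely onto the paper's trunk, clique chains, and the adjusted clique sizes along the trunk. But the construction is the entire content of the theorem, and it is not present: nothing in the proposal explains how to force a clique into a single bag (the paper's \emph{cluster gadget}), how to encode a choice of one of $r$ values by the ``slack'' of a clique chain that is longer than the trunk segment it spans and must fold at its ends, nor how to budget bag sizes down to the unit so that two incompatible choices overflow precisely at the verification node $i_{e_j}$. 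You yourself flag this as ``the crux,'' which is accurate --- but it means what you have is a research plan, not a proof.

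For \textsc{Domino Treewidth} your route genuinely diverges from the paper's: you propose re-engineering the constraint-problem reduction so that its output directly certifies a domino tree decomposition. The paper instead proves TPW hardness \emph{with the maximum degree as an extra parameter} and then gives a short, separate pl-reduction from bounded-degree \textsc{Tree Partition Width} to \textsc{Domino Treewidth} (replace each vertex $v$ by a clique $C_v$ of size $L=kd+1$, attach stars $S_w$ of $2M-2$ pendant-like vertices to each $w\in C_v$ to pin each clique to a unique bag, and add an edge-vertex $z_e$ joined to $C_v\cup C_w$ for each edge $e=vw$). The paper's modular route is cleaner --- it isolates the only property of the TPW reduction you need (bounded degree) rather than re-verifying an entire bespoke construction --- and it explains why the paper bothers to state TPW-hardness in the combined parameterization.
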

In \cite{Bodlaender99}, Bodlaender gave an algorithm to compute a domino tree decomposition of width $O(\tw\Delta^2)$ in $f(\tw)n^2$ time for $n$-vertex graphs of treewidth $\tw$ and maximum degree $\Delta$, where $f$ is a fixed computable function. This implies an approximation algorithm for domino treewidth.

We also consider the parameter \emph{tree-cut width} introduced by Wollan in \cite{Wollan15}.
As the tractability results of Bodlaender et al.~\cite{BodlaenderCW22} use techniques similar to a previous work on algorithmic applications of tree-cut width \cite{Ganian0S15}, one may wonder whether there is a relationship between tree-cut width and tree-partition-width.

We show the following results.
\begin{itemize}
    \item We obtain a parameter that is polynomially tied to tree-partition-width and is topological minor monotone (see Theorem~\ref{thm:tpw-obstructions}).  We use this to show that tree-partition-width is relatively stable with respect to subdivisions: if we define $\underline{\tpw}(G)$ (resp. $\overline{\tpw}(G)$) as the minimum (resp. maximum) tree-partition-width  over subdivisions of $G$, then $\underline{\tpw},~\tpw$ and $\overline{\tpw}$ are polynomially tied. The parameter $\underline{\tpw}(G)$ corresponds to `stable tree-partition-width'.
    \item We show tree-partition-width is polynomially upper bounded by tree-cut width (see Theorem~\ref{thm:tpw-tcw}) by relating the tree-cut width to the tree-partition-width of a subdivision. 
    \item On the other hand, a bound on tree-partition-width does not imply a bound on tree-cut width (see Observation~\ref{obs:tcwK3n}).
\end{itemize}

\paragraph*{Paper overview}
In Section~\ref{section:approx}, we provide our results on approximating the tree-partition-width.
In Section~\ref{section:exact}, we show that computing the tree-partition-width is XALP-complete. We then derive XALP-completeness of computing the domino treewidth. In Section~\ref{section:tcw}, we give our results relating 
tree-cut width to tree-partition-width.
In Section~\ref{section:weightedtpw}, we give the
results for weighted tree-partition-width.
Some concluding remarks are made in Section~\ref{section:conclusion}.

\section{Preliminaries}
\label{section:preliminaries}
The set of positive integers is denoted by
$\mathbb{Z}^+$; the set of non-negative integers
is denoted by $\mathbb{N}$.

A \emph{tree-partition} of a graph $G=(V,E)$ is a tuple $(T, (B_i)_{i\in V(T)})$, where $B_i \subseteq V(G)$, with the following properties.
\begin{itemize}
    \item $T$ is a tree.
    \item For each $v\in V$ there is a unique $i(v)\in V(T)$ such that $v\in B_{i(v)}$.
    \item For any edge $uv\in E$, either $i(v)=i(u)$ or $i(u)i(v)$ is an edge of $T$. 
\end{itemize}
The \emph{size} of a \emph{bag} $B_i$ is $|B_i|$, the number of vertices it contains.
The \emph{width} of the decomposition is given by $\max_{i \in V(T)} |B_i|$. The tree-partition-width ($\tpw$) of a graph $G$ is the minimum width of a tree-partition of $G$.

We also consider a variant of the notion for weighted graphs. The notion was introduced
in~\cite{BodlaenderCW22}. We use a slight generalization where we allow also weights of vertices, to facilitate some of our proofs.

Let $G=(V,E)$ be a graph, with a weight function
for vertices $w_V: V \rightarrow \mathbb{Z}^+$,
and a weight function $w_E: E \rightarrow \mathbb{N}$. 
The \emph{breadth} of a tree-partition
$(T, (B_i)_{i\in V(T)})$
of $G$ is 
the maximum over total weights of bags: $$\max_{i\in V(T)} \sum_{v\in B_i} w_V(v)$$ and total weights of edge cuts between pairs of adjacent bags: $$\max_{ii'\in E(T)} \sum_{vw\in E, v\in B_i, w'\in B_{i'}} w_E(vw)$$
The \emph{weighted tree-partition-width} of a
graph $G$ 
with vertex and edge weights is the minimum
breadth over all tree-partitions of $G$.

In other words, we take the maximum over all bags of the total weight of the vertices in the bag,
and maximum over all pairs of adjacent bags of the total weight of edges between these bags; and then, we take the maximum over these two values. The
tree-partition-width of a graph equals the weighted
tree-partition-width where all vertices have weight 1 and all edges have weight 0.

A {\em tree decomposition} of a graph $G=(V,E)$ is a pair $(T=(I,F)$, $\{X_i~|~ i\in T\})$ with
$T=(I,F)$ a tree and $\{X_i~|~i\in I\})$ a family of (not necessarily disjoint)
subsets of $V$ (called {\em bags}) such that $\bigcup_{i\in I} X_i = V$,
for all edges $vw\in E$, there is an $i$ with $v,w\in X_i$, and for all
$v$, the nodes $\{i\in I~|~v\in X_i \}$ form a connected subtree of $T$.
The {\em width} of a tree decomposition $(T, \{X_i~|~ i\in T\})$ is
$\max_{i\in I} |X_i|-1$, and the {\em treewidth} ($\tw$) of a graph $G$ 
is the minimum width over all tree decompositions of $G$. The \emph{domino treewidth} is the minimum width over all tree decompositions of $G$ such that each vertex appears in at most two bags.

We say that two parameters $\alpha,\beta$ are \emph{(polynomially) tied} if there exist (polynomial) functions $f,g$ such that $\alpha \leq f(\beta)$ and $\beta \leq g(\alpha)$.

\section{Approximation algorithm for tree-partition-width}\label{section:approx}

We first describe our algorithm, then prove correctness and finally discuss the trade-offs between running time and solution quality.

\subsection{Description of the algorithm}\label{subsec:scheme}
Let $G$ be a graph, and $k$ any positive integer.
We describe a scheme that produces a tree-partition of $G$ of width $O(wbk^3)=$
$k^{O(1)}$, or reports that $\tpw(G) > k$. We will use various different functions of $k$ for $b$ and $w$, depending on the quality/time trade-offs of the black-box algorithms inserted into our algorithm (e.g. for approximating treewidth).

\subparagraph*{Step 1} \emph{Compute a tree decomposition for $G$ of width $w(k)$ or conclude that $\tpw(G)>k$.}\\
As mentioned above, we do not directly specify the function $w=w(k)$, since different algorithms for step 1 give different solution qualities (bounds for $w(k)$) and running times. Since $\tw+1\leq 2\cdot \tpw$, if $\tw(G)>2k-1$ it follows that $\tpw(G)>k$.
Suppose that we obtained a decomposition of width $w$ (which could be larger than $\tw(G)$). If $\tpw(G)\leq k$, then $\tw(G) \leq 2k-1$, and so there are at most $(2k-1)n$ edges in $G$. If $G$ has more than $(2k-1)n$ edges, we directly reject.

We set a threshold $b\geq \max\{2k-1,w+1\}$. We define an auxiliary graph $G^b$ as follows. The vertex set of $G^b$ is $V(G)$.
The edges of $G^b$ are given by the pairs of vertices $u,v\in V(G)$ with minimum $u$-$v$ separator of size at least $b$.

\subparagraph*{Step 2} \textit{Construct the auxiliary graph $G^b$ with connected components of size at most $k$ or report that $\tpw(G) > k$.}\\
We later describe several ways of computing the edges of $G^b$.
We will show in Claim~\ref{claim:large-cuts} that vertices in the same connected component of $G^b$ must be in the same bag for any tree-partition of width at most $k$. For this reason, we conclude that $\tpw(G) > k$ if a  component of $G^b$ has more than $k$ vertices.

We define $H$, the \emph{$b$-reduction} of $G$, which is the graph obtained from $G$ by identifying the connected components of $G^b$.

\subparagraph*{Step 3} \textit{We compute a tree decomposition of width $w$ for each 2-connected component of $H$.}\\
Given the components of $G^b$, we can compute $H$, and its 2-connected components in time $O(k^{O(1)}n)$. Using Claim~\ref{claim:dec-red}, we obtain a tree decomposition of $H$ by replacing vertices of $G$, in the tree decomposition of $G$, by their component in $G^b$.

By Claim~\ref{claim:degree-bound}, the maximum degree $\Delta_H$ within the 2-connected components of $H$ is at most $Cbk^2$ for some constant $C$ when $\tpw(G) \leq k$.

\subparagraph*{Step 4} \textit{If $\Delta_H>Cbk^2$, report $\tpw(G)>k$. Else, compute a tree partition of width $O(w\Delta_H)=O(wbk^2)$ for $H$.}\\
By rooting the decomposition of $H$ in 2-connected components, we can define a \emph{parent} cut-vertex for each 2-connected component except the root.
We separately compute tree partitions for each 2-connected component of $H$ with the constraint that their parent cut-vertex should be the single vertex of its bag. A construction of Wood \cite{Wood09} enables us to compute a tree-partition of width $O(\Delta w)$ for any graph of maximum degree $\Delta$ and treewidth $w$; this can be adjusted to allow for this isolation constraint without increasing the upper bound on the width. We give the details of this in Corollary~\ref{corol:combine-blocks}. After doing this, the partitions of each component can be combined without increasing the width. Indeed, although cut-vertices are shared, only one 2-connected component will consider putting other vertices in its bag. We obtain a tree-partition of $H$ of width $O(wbk^2)$.

\subparagraph*{Step 5} \textit{Deduce a tree partition of width $O(wbk^3)$ for $G$.}\\
We `expand' the vertices of $H$. In the tree-partition of $H$, each vertex of $H$ is replaced by the vertices of the corresponding connected component of $G^b$. This gives a tree-partition of $G$ of width $O(wbk^3)$.

\subsection{Correctness}\label{subsec:scheme-correctness}
For $s,t\in V(G)$ we denote by $\mu(s,t)$ the size of a minimum $s$-$t$ separator in $G-st$.
\begin{claim}\label{claim:large-cuts}
Let $G$ be a graph and $s,t\in V(G)$.
    \begin{itemize}
        \item If $\mu(s,t) \geq k+1$, then in any tree-partition of width at most $k$, $s$ and $t$ must be in adjacent bags or the same bag;
        \item If $\mu(s,t) \geq 2k-1$, then in any tree-partition of width at most $k$, $s$ and $t$ must be in the same bag.
    \end{itemize}
\end{claim}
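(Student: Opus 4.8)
The plan is to prove each bullet by a direct counting/separator argument, exploiting the defining property of a tree-partition: for any edge $uv\in E(G)$, the indices $i(u),i(v)$ are equal or adjacent in $T$.

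\textbf{First bullet.} Suppose toward a contradiction that $s$ and $t$ lie in bags whose indices $i(s), i(t)$ are at distance at least $2$ in $T$. Let $P$ be the path in $T$ from $i(s)$ to $i(t)$, and let $a$ be the neighbour of $i(s)$ on $P$. Removing the edge $i(s)a$ from $T$ splits $T$ into two subtrees $T_s \ni i(s)$ and $T_a \ni a \ni i(t)$; write $A = \bigcup_{j\in V(T_s)} B_j$ and $B = \bigcup_{j\in V(T_a)} B_j$ for the corresponding vertex sets, so $s\in A$, $t\in B$, and $(A,B)$ partitions $V(G)$. By the tree-partition property, every edge of $G$ with one endpoint in $A\setminus B_{i(s)}$ and one in $B$ is impossible (indices would be non-adjacent), so every $s$--$t$ path must pass through $B_{i(s)}\setminus\{s\}$ — but wait, $s$ itself is in $B_{i(s)}$, so more carefully: I would argue that $B_{i(s)}\setminus\{s\}$ together with possibly $B_a$ forms a cut. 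The cleaner choice is to cut at the edge $ab$ further along $P$ where $b$ is the neighbour of $i(s)$ \emph{is not} used; instead, take the first edge of $P$ incident to $i(s)$, obtaining a separation where the ``side of $t$'' meets the ``side of $s$'' only through vertices of $B_{i(s)}$. Since $|B_{i(s)}|\le k$ and the distance is at least $2$, in fact $t\notin B_{i(s)}$, so $B_{i(s)}\setminus\{s\}$ (size $\le k-1$) plus the single bag $B_a$? No — I want a bound of $k$. The right statement: let $C$ be the set of vertices in bags $B_j$ for $j$ on $P$ at distance exactly $1$ from $i(s)$, i.e.\ $C=B_a$. Then removing $B_a$ (if it does not contain $s$ or $t$) separates $s$ from $t$ in $G$, and $|B_a|\le k$, contradicting $\mu(s,t)\ge k+1$. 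Since $\operatorname{dist}_T(i(s),i(t))\ge 2$ we have $a\ne i(s), a\ne i(t)$, so $s,t\notin B_a$, and this works; hence $G - st$ has an $s$--$t$ separator of size $\le k$, contradiction. Therefore $\operatorname{dist}_T(i(s),i(t))\le 1$.

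\textbf{Second bullet.} Now assume $\mu(s,t)\ge 2k-1$. By the first bullet, $i(s)$ and $i(t)$ are equal or adjacent; suppose for contradiction they are adjacent, say $i(s)i(t)\in E(T)$ with $s\ne t$. Remove the edge $i(s)i(t)$ from $T$ to get subtrees $T_s\ni i(s)$ and $T_t\ni i(t)$, with corresponding vertex sets $A\ni s$ and $B\ni t$ partitioning $V(G)$. Every edge of $G$ between $A$ and $B$ has its endpoints in $B_{i(s)}$ and $B_{i(t)}$ respectively (indices must be equal or adjacent, and the only cross-edge of $T$ is $i(s)i(t)$). Consider the set $S=(B_{i(s)}\cup B_{i(t)})\setminus\{s,t\}$. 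Then $S$ separates $s$ from $t$ in $G-st$: any $s$--$t$ path in $G-st$ must cross from $A$ to $B$, and the first such crossing edge goes from some vertex of $B_{i(s)}$ to some vertex of $B_{i(t)}$; if that path avoids $S$ it would have to use $s$ or $t$ as the crossing endpoint, but the edge $st$ is deleted and a path is simple, so at least one internal vertex in $S$ is hit — one has to check the endpoints of the crossing edge are not both in $\{s,t\}$, which holds precisely because $st\notin E(G-st)$. Hence $\mu(s,t)\le |S|\le |B_{i(s)}|+|B_{i(t)}|-2 \le k + k - 2 = 2k-2$, contradicting $\mu(s,t)\ge 2k-1$. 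So $i(s)=i(t)$, i.e.\ $s$ and $t$ share a bag.

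\textbf{Main obstacle.} The delicate point in both parts is the bookkeeping around whether $s$ or $t$ themselves lie in the separating bag(s), and making sure the crossing edge of an $s$--$t$ path in $G-st$ genuinely contributes an internal vertex to the claimed separator rather than being ``absorbed'' by $s$ or $t$; this is exactly why the first bullet needs $\operatorname{dist}\ge 2$ (so the cut bag $B_a$ is disjoint from $\{s,t\}$) and why the second needs the $-2$ saving (we subtract $s$ and $t$ from $B_{i(s)}\cup B_{i(t)}$), which is what turns the naive bound $2k$ into $2k-2$ and hence forces the contradiction at the stated thresholds. Everything else is a routine application of the fact that deleting one edge of $T$ yields a separation of $G$ whose separator is contained in the union of the two bags at the endpoints of that edge.
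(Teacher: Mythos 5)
Your proof is correct and follows essentially the same strategy as the paper: argue by contraposition, taking an internal bag on the $T$-path between $i(s)$ and $i(t)$ as a size-$\le k$ separator for the first bullet, and $(B_{i(s)}\cup B_{i(t)})\setminus\{s,t\}$ as a size-$\le 2k-2$ separator of $G-st$ for the second. The paper's writeup is just more direct, skipping the false starts you worked through in the first bullet before settling on the bag $B_a$.
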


\begin{proof}
    Assume that $s$ and $t$ are not in adjacent bags nor in the same bag of a tree-partition of width at most $k$, then any internal bag on the path between their respective bags is an $s$-$t$ separator. In particular, $\mu(s,t) \leq k$.
    This proves the first point by contraposition.

    Assume that $s$ and $t$ are in adjacent bags but not in the same bag for some tree-partition of width at most $k$. We denote their respective bags by $B_s$ and $B_t$. Then, $(B_s \cup B_t) \setminus \{s,t\}$ is an $s$-$t$ separator of $G - st$. Consequently, $\mu(s,t) \leq 2k-2$.
    This proves the second point.
\end{proof}

\begin{claim}\label{claim:dec-red}
    Consider $(T,(X_i)_{i \in V(T)})$ a tree decomposition of width $w$ of $G$, $b \geq w+1$, and let $Y_i$ be the set of connected components of $G^b$ that intersect with $X_i$. Then $(T,(Y_i)_{i \in V(T)})$ is a tree decomposition of the $b$-reduction $H$ of $G$.
\end{claim}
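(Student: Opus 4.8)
The plan is to verify the three defining axioms of a tree decomposition for $(T,(Y_i)_{i\in V(T)})$ with respect to the $b$-reduction $H$: every vertex of $H$ lies in some bag $Y_i$; every edge of $H$ has both endpoints in a common bag; and for every vertex of $H$, the set of bags containing it induces a (connected) subtree of $T$. Throughout, write $T_v=\{i\in V(T): v\in X_i\}$ for $v\in V(G)$, so that by definition of $Y_i$ we have, for a connected component $c$ of $G^b$, the identity $\{i: c\in Y_i\}=\bigcup_{v\in c}T_v$, a union of nonempty subtrees of $T$.

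The first two axioms follow almost immediately from the corresponding properties of $(T,(X_i))$ for $G$. A vertex of $H$ is a component $c$ of $G^b$, which is nonempty; picking $v\in c$ and a bag $X_i\ni v$ gives $c\in Y_i$. An edge of $H$ arises from an edge $uv\in E(G)$ whose endpoints lie in distinct components $c\neq c'$ of $G^b$; a bag $X_i$ containing both $u$ and $v$ then has $c,c'\in Y_i$.

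The substance is the subtree axiom, for which I would first establish a tree-decomposition analogue of Claim~\ref{claim:large-cuts}: if $u,v\in V(G)$ share no bag of $(T,(X_i))$, then $\mu(u,v)\le w$. To see this, note $uv\notin E(G)$ (every edge of $G$ lies in a bag), so $G-uv=G$. If $T_u\cap T_v=\emptyset$, let $P=t_0\cdots t_m$ be the shortest path in $T$ with $t_0\in T_u$ and $t_m\in T_v$; since $T_u,T_v$ are disjoint nonempty subtrees, $m\ge 1$. Standard arguments show $T_u$ lies entirely in the component of $T-t_0t_1$ containing $t_0$ and $T_v$ in the one containing $t_1$, so $X_{t_0}\cap X_{t_1}$ is a $u$-$v$ separator; as $u\in X_{t_0}\setminus X_{t_1}$, this separator has size at most $|X_{t_0}|-1\le w$. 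Contrapositively, any edge $uv$ of $G^b$ has $\mu(u,v)\ge b\ge w+1$, hence $T_u\cap T_v\neq\emptyset$. Now for a component $c$ of $G^b$ and any $v,v'\in c$, a path $v=u_0,u_1,\dots,u_\ell=v'$ in $G^b$ yields a chain of subtrees $T_{u_0},\dots,T_{u_\ell}$ with consecutive ones intersecting, so $\bigcup_j T_{u_j}$ is connected, contained in $\bigcup_{w\in c}T_w$, and links $T_v$ to $T_{v'}$; therefore $\bigcup_{v\in c}T_v=\{i:c\in Y_i\}$ is connected.

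I expect the main obstacle to be the separator bookkeeping inside the subtree axiom: arguing cleanly that a single adjacent pair of bags $X_{t_0},X_{t_1}$ on the path between $T_u$ and $T_v$ already yields a $u$-$v$ separator of size at most $w$, and then chaining this fact along a path in $G^b$ to get connectivity of the union of subtrees. The remaining verifications are routine.
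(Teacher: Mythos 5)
Your proof is correct and follows essentially the same approach as the paper: the first two axioms are immediate, and the substance is the subtree axiom, handled via the observation that $b \geq w+1$ vertex-disjoint paths force the bag-subtrees $T_u,T_v$ of any $G^b$-edge $uv$ to intersect. The paper phrases this as a contradiction (a bag $X_i$ separating the set of nodes whose bags meet a component $C$ would be a small $u$-$v$ separator for some $G^b$-edge $uv$ in $C$), whereas you state the separator lemma for tree decompositions explicitly and then chain intersecting subtrees along a $G^b$-path; these are interchangeable phrasings of the same argument, and your version is if anything a bit more careful about the $\leq w$ versus $\leq w+1$ bookkeeping, which the paper glosses over.
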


\begin{proof}
    Every component of $G^b$ appears in at least one $Y_i$, because it contains a vertex which must appear in at least one $X_i$. Furthermore, for each edge $UV$ of $H$, there must be vertices $u \in U, v \in V$ such that $uv$ is an edge of $G$. Hence, there is a bag $X_i$ containing $u$ and $v$ so $Y_i$ contains $U$ and $V$. Finally, suppose that there is a bag $Y_i$ not containing a component $C$ of $G^b$, and several components of $T-i$ have bags containing $C$. There must be an edge of $G^b$ connecting vertices $u$ and $v$ of $C$ such that $u$ is in bags of $X$ and $v$ is in bags of $X'$, where $X$ and $X'$ are in different components of $T-i$. By definition of $G^b$, the minimal size of a separator of $u$ and $v$ in $G$ is at least $b\geq w+1$. However, since the tree decomposition $(T,(X_i))$ has width $w$ and the bags containing $u$ are disjoint from the bags containing $v$ (in particular $X_i$ separates them), there is a separator of $u$ and $v$ of size at most $w$, a contradiction. This concludes the proof that $(T,(Y_i))$ is a tree decomposition of $H$.
\end{proof}

\begin{claim}\label{claim:degree-bound}
    If $H$ is the $b$-reduction of $G$, $\tpw(G) \leq k$, and $B$ is one of its 2-connected components, then the maximum degree in $B$ is $O(bk^2)$.
\end{claim}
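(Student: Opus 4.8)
The plan is to fix a tree-partition $(T,(B_i)_{i\in V(T)})$ of $G$ of width at most $k$ and to bound the degree of a single vertex $U$ of $B$. By the second item of \autoref{claim:large-cuts}, any two vertices in a common component $C$ of $G^b$ are linked by a path of $G^b$, each edge of which (having $\mu\ge b\ge 2k-1$) forces its endpoints into the same bag; hence $C$ lies inside a single bag, and in particular $U\subseteq B_{i_0}$ for some node $i_0$, which I would root $T$ at. Every $B$-neighbour of $U$ is a $G^b$-component lying in a single bag, necessarily $B_{i_0}$ or a neighbour of $i_0$; at most $k-1$ of them lie in $B_{i_0}$ and at most $k$ lie in each neighbouring bag (disjoint components inside one bag), so $\deg_B(U)\le(k-1)+k\cdot r$, where $r$ is the number of subtrees (``branches'') of $T-i_0$ that host a $B$-neighbour of $U$. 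It therefore suffices to bound $r$.

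The delicate point is that a component of $G^b$ need not be connected in $G$, so I would first establish a lifting lemma: if a component $C$ of $G^b$ lies in a bag $B_m$ with $m$ in a branch $\beta$ of $T-i_0$, then any two vertices of $C$ are joined by a path of $G$ using only vertices of bags in $\beta$. Indeed, walking along a path of $G^b$ inside $C$, for each of its edges $xy$ the endpoints lie in the same bag $B_m$ and $\mu(x,y)\ge b$ gives $\ge b$ internally disjoint $x$--$y$ paths in $G-xy$; a path that leaves $\beta$ must traverse $B_{i_0}$ and hence uses one of its at most $k$ internal vertices, so at most $k$ of those paths leave $\beta$ and at least $b-k\ge 1$ stay inside. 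Consequently, any path of $H$ confined to the bags of a single branch lifts to a path of $G$ confined to that branch, by traversing each visited component inside the branch and joining consecutive components by a single edge of $G$ realising the corresponding edge of $H$.

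Finally I would invoke $2$-connectivity, assuming $|V(B)|\ge 3$ (otherwise the degree is at most $1$). Then $B-U$ is connected, and since $G$ has no edge between distinct branches of $T-i_0$, the only components of $G^b$ through which $B-U$ can join different branches are the at most $k-1$ ones lying in $B_{i_0}\setminus U$; write $Z$ for their union. For a branch $\beta$ hosting a $B$-neighbour $V_\beta$ of $U$, the component $\Pi_\beta$ of $B-U-Z$ containing $V_\beta$ is (when $r\ge 2$) a proper part of the connected graph $B-U$, hence has an edge of $B$ to $Z$. Combining the edge from $U$ to $V_\beta$, a route through $\Pi_\beta$ inside $\beta$ supplied by the lifting lemma, and the edge from $\Pi_\beta$ to $Z$, I obtain for each such $\beta$ a path of $G$ from some $u_\beta\in U$ to some $z_\beta\in B_{i_0}\setminus U$ whose interior lies in $\beta$; for distinct branches these interiors are disjoint and avoid $B_{i_0}$. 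If $r$ were larger than $(b-1)k(k-1)$, pigeonholing over the at most $k(k-1)$ pairs $(u_\beta,z_\beta)\in U\times(B_{i_0}\setminus U)$ would yield $\ge b$ such paths with a common pair $(u^*,z^*)$, none of them equal to the edge $u^*z^*$; then $\mu(u^*,z^*)\ge b$, so $u^*z^*\in E(G^b)$ and $z^*$ would lie in the component $U$, contradicting $z^*\in B_{i_0}\setminus U$. Hence $r$, and therefore $\deg_B(U)$, is polynomially bounded in $b$ and $k$; matching the precise bound $O(bk^2)$ of the statement amounts to organising this last pigeonhole so as not to lose the factor $|U|\le k$. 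The lifting lemma is the part I expect to be the real obstacle; everything else is bookkeeping with the tree-partition together with the pigeonhole above.
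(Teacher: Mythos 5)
Your route is genuinely different from the paper's and the main ideas are sound, but the proof as organised yields only $\deg_B(U)=O(bk^3)$, not the $O(bk^2)$ that the claim asserts, and the remark that closing this gap ``amounts to organising this last pigeonhole'' understates the difficulty. You bound $\deg_B(U)\le(k-|U|)+kr$ where $r$ is the number of branches hosting a $B$-neighbour of $U$, and then bound $r\le(b-1)|U|(k-|U|)$ by pigeonholing over pairs $(u_\beta,z_\beta)$; multiplying the branch count by the per-branch cap $k$ is exactly where the extra factor of $k$ enters. Pigeonholing only on $z_\beta$ would give $r\le(b-1)(k-1)$, but then you get $b$ internally disjoint paths from \emph{various} vertices $u_\beta\in U$ to the same $z^*$, and since distinct vertices of $U$ need not be $G^b$-adjacent (only in the same $G^b$-component) you cannot conclude $\mu(u^*,z^*)\ge b$ for a single $u^*$; a Menger-type argument over the whole set $U$ only gives $\mu(u^*,z^*)\ge b-|U|+1$, which is not enough. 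I do not see that this is mere bookkeeping.

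The paper's argument sidesteps the branch count entirely: it takes a spanning tree $T$ of $B-u$ (which exists by $2$-connectivity), prunes it to a reduced tree $T'$ whose leaves all lie in $N=N_B(u)$, and shows that $T'$ has maximum degree at most $b-1$ (a degree-$b$ vertex would extend to $b$ disjoint paths reaching $N$ and then $u$, which would force a $G^b$-edge). Deleting the at most $k-|u|$ vertices of $T'$ lying in the bag of $u$ then creates at most $1+(k-1)(b-2)$ components, each containing vertices of $N$ of total weight at most $k$ by the tree-partition, and $|N|\le k\bigl(1+(k-1)(b-2)\bigr)=O(bk^2)$ follows by counting $N$-vertices directly per component rather than branches times a per-branch factor of $k$. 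You should note, though, that your lifting lemma is a real contribution: the paper is very terse about the fact that a component of $G^b$ need not induce a connected subgraph of $G$, so that disjoint paths in $H$ (or in $T'$) do not automatically lift to disjoint paths in $G$, and your observation that at most $|B_{i_0}|\le k$ of the $\ge b$ witnessing paths of a $G^b$-edge inside a branch $\beta$ can escape $\beta$ is exactly what is needed to make that separator step rigorous in either argument.
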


\begin{proof}
    Consider $u$ a vertex achieving maximum degree in $B$. By definition of $B$, $B-u$ is connected.
    We denote by $N$ the neighborhood of $u$ in $B$.
    Let $T$ be a spanning tree of $B-u$. We iteratively remove leaves that are not in $N$. This produces the reduced tree $T'$.
    The maximum degree in this tree is $b-1$ as the set of edges incident to a given vertex can be extended to disjoint paths leading to vertices in $N$, hence leading to $u$. Since $H$ is a $b$-reduction, the number of disjoint paths between two vertices must be less than $b$.

    Clearly the neighbors of $u$ must be either in the same bag as $u$ or in a neighboring bag. Since the bag of $u$ will be a separator of vertices that are in distinct neighboring bags, in particular, it splits the graph into several components each containing at most $k$ neighbors of $u$.

    There must exist a subset of vertices of $T'$ of size at most $k-1$ whose removal splits $T'$ in components containing vertices of $N$ of total weight at most $k$. Since the degree of a vertex of $T'$ is at most $b-1$, removing one of its vertices adds at most $b-2$ new components. Hence, after removing $k - 1$ vertices, there are at most $1+(k-1)(b-2)$ components. We conclude that $|N| \leq k(1 + (k-1)(b-2))$. Since $u$ had maximum degree in $B$, we conclude.
\end{proof}

In \cite{Wood09}, Wood shows the following lemma.

\begin{lemma}
\label{lem:wood}
    Let $\alpha = 1 + 1/\sqrt{2}$ and $\gamma = 1 +\sqrt{2}$. Let $G$ be a graph with treewidth at most $k \geq 1$ and maximum degree at most $\Delta \geq 1$. Then $G$ has tree-partition-width $\tpw(G) \leq \gamma (k+1) (3\gamma\Delta -1)$.

    Moreover, for each set $S \subseteq V(G)$ such that $(\gamma+1)(k+1) \leq |S| \leq 3(\gamma + 1)(k+1)\Delta$, there is a tree-partition of $G$ with width at most $\gamma (k+1)(3\gamma\Delta -1)$ such that $S$ is contained in a single bag containing at most $\alpha|S| - \gamma(k+1)$ vertices.
\end{lemma}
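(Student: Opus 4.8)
The plan is to reconstruct Wood's construction, which turns a width-$k$ tree decomposition into a tree-partition by carving the decomposition tree into small connected pieces. First I would fix a \emph{smooth} rooted tree decomposition $(T,\mathcal{X})$ of $G$ of width $k$: every bag has exactly $k+1$ vertices, every two adjacent bags share exactly $k$ vertices, and consequently every non-root node introduces exactly one vertex (the unique vertex of its bag missing from its parent's bag), while the root introduces $k+1$. For $v\in V(G)$ write $t(v)$ for the node introducing $v$, equivalently the node of $v$'s bag-subtree closest to the root. I would prove only the ``moreover'' statement and deduce the first inequality from its extremal case $|S|=3(\gamma+1)(k+1)\Delta$ (padding the vertex set when $G$ is too small), using the numerical identities $\gamma-1=\sqrt2$, $\alpha-1=1/\sqrt2$, $\gamma\sqrt2=\gamma+1$ and hence $\alpha(\gamma+1)=\gamma^2$, which make $\alpha|S|-\gamma(k+1)$ collapse exactly to $\gamma(k+1)(3\gamma\Delta-1)$.

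Given $S$ with $(\gamma+1)(k+1)\le|S|\le 3(\gamma+1)(k+1)\Delta$, root $T$ at a node whose bag meets $S$ and carve $T$ greedily from the root downwards into connected pieces $C_0\ni r, C_1, C_2,\dots$: grow the current piece in breadth-first order, measure its \emph{weight} by the number of vertices it introduces, and close it once the weight first reaches a threshold $\tau$ of order $(\gamma+1)(k+1)\Delta$, at which point every node of $T$ just below the closed piece starts a new piece. The only change for $S$ is that $C_0$ keeps growing until it has introduced all of $S$; its weight is then at most roughly $\alpha|S|$, the slack coming from the $k+1$ introductions at the root and the breadth-first overshoot, and optimizing this slack is what yields the bound $\alpha|S|-\gamma(k+1)$ for the bag that will contain $S$. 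Smoothness guarantees that each added node raises the weight by at most one, so every piece other than $C_0$ has weight at most $\tau$.

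The tree-partition is the quotient: one node of $T'$ per piece, two pieces adjacent in $T'$ when some edge of $T$ joins them, and the bag of a piece $C$ consists of the vertices $C$ introduces together with the \emph{boundary} vertices it is forced to absorb. The checks, in order, are: (i) $T'$ is a tree, being a contraction of $T$; (ii) every edge $uv$ of $G$ has both endpoints in a single bag or in adjacent bags --- if $u,v$ share a bag $X_i$ then $t(u),t(v)$ are comparable ancestors of $i$, the one introduced higher up occurs in every bag on the path down to the introducing node of the other, and the only way the two pieces can fail to be equal or adjacent is that some vertex has its bag-subtree meeting three consecutive pieces; (iii) the bags can be made pairwise disjoint. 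Resolving (ii) and (iii) at once --- choosing which piece owns each boundary vertex so that no $G$-edge is stranded two pieces apart --- is the main obstacle, and this is exactly where the degree bound enters: a piece of weight $O(k\Delta)$ can be forced to import only $O(k\Delta)$ boundary vertices (a separator of size $O(k)$ toward each neighbouring piece, the number and spread of these being controlled by $\Delta$), so every bag has size $O(\tau+k\Delta)=O(k\Delta)$. Carrying the weight bookkeeping through carefully and balancing the ``introduced'' and ``imported'' contributions to the bag size via the above identities among $\alpha$ and $\gamma$ gives the constants $\gamma(k+1)(3\gamma\Delta-1)$ and $\alpha|S|-\gamma(k+1)$; the lower bound $(\gamma+1)(k+1)\le|S|$ is precisely what lets the $S$-bag have room for $S$ while still obeying $\alpha|S|-\gamma(k+1)\ge|S|$, and the upper bound keeps this bag within the generic width.
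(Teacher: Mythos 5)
The paper does not prove this lemma; it is quoted from Wood~\cite{Wood09}, and Wood's actual argument is a recursion on balanced separators, not a top-down carving. The inductive statement is your ``moreover'' clause: one finds a set $X$ of at most $k+1$ vertices (via the standard fact that a treewidth-$k$ graph admits $(k+1)$-vertex balanced separators for any weight function, here concentrated on the neighbourhood of $S$) so that every component $G_i$ of $G-S-X$ sees at most half that weight; $S\cup X$ becomes the bag holding $S$, and one recurses on each $G_i$ with $S_i:=N(S\cup X)\cap V(G_i)$. The upper bound on $|S|$ and the constants $\alpha,\gamma$ are calibrated exactly so that $|S_i|\le \frac{1}{2}|N(S)\setminus S|+(k+1)\Delta$ falls back into the admissible range for $G_i$. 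Your algebraic observations, in particular $\alpha(\gamma+1)=\gamma^2$ and that the first claim follows from the second at $|S|=3(\gamma+1)(k+1)\Delta$, are correct and do match the intended calculation.

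The genuine gap is in your steps (ii)/(iii). A greedy breadth-first carving of the decomposition tree $T$ into pieces of bounded introduced-weight does not in general quotient to a tree-partition, and the degree bound does not rescue it in the way you claim. Take a smooth path decomposition $X_1,\dots,X_m$ and a vertex $v$ whose bag-interval spans pieces $P_0,P_1,\dots,P_r$ with $r\ge 2$; any neighbour $u$ of $v$ whose introducing node $t(u)$ lies below $t(v)$ has $v\in X_{t(u)}$, so $u$ may be introduced anywhere along that interval, and two neighbours of $v$ can be introduced in $P_0$ and $P_r$. Whichever single piece you assign $v$ to, one of these edges is stranded two or more pieces away, and since the carving is chosen with no reference to where $v$'s neighbours are introduced, $\Delta$ bounds only the number of such neighbours and not their spread along $T_v$. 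The assertion that ``a piece of weight $O(k\Delta)$ can be forced to import only $O(k\Delta)$ boundary vertices'' is therefore unsubstantiated and is where the argument breaks. The separator recursion avoids this obstruction precisely because the cut is a separator of $G$, not of $T$: each component of $G-S-X$ meets the rest of the graph only through $S\cup X$, so no edge of $G$ can skip a level of the recursion.
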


We deduce this slightly stronger version of \cite[Theorem 1]{Wood09}

\begin{corollary}\label{corol:combine-blocks}
    From a tree decomposition of width $w$ in a graph $G$ of maximum degree $\Delta$, for any vertex $v$ of $G$, we can produce a tree-partition of $G$ of width $O(\Delta w)$ in which $v$ is the only vertex of its bag.
\end{corollary}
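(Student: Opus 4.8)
The plan is to derive Corollary~\ref{corol:combine-blocks} from the ``moreover'' clause of Lemma~\ref{lem:wood} by choosing with some care the set $S$ that gets confined to a single bag. Throughout, let $\alpha = 1+1/\sqrt2$ and $\gamma = 1+\sqrt2$ as in Lemma~\ref{lem:wood}. If $G$ has no edge we are trivially done (give each vertex its own bag), so assume $\Delta \geq 1$; then the given tree decomposition has width $w \geq 1$. If moreover $|V(G)| < \lceil(\gamma+1)(w+1)\rceil$, then the tree-partition with the two bags $\{v\}$ and $V(G)\setminus\{v\}$ joined by an edge is valid and has width at most $(\gamma+1)(w+1)-1 = O(w) = O(\Delta w)$, so we may also assume $|V(G)| \geq \lceil(\gamma+1)(w+1)\rceil$.

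First I would pick a vertex set $S$ with $\{v\}\cup N(v) \subseteq S$ and $|S| = \max\{\deg(v)+1,\ \lceil(\gamma+1)(w+1)\rceil\}$; such an $S$ exists by the assumption on $|V(G)|$. Using $w\geq 1$, $\Delta\geq 1$ and $\deg(v)\leq\Delta$ one checks that $|S|$ lies in the interval $[(\gamma+1)(w+1),\ 3(\gamma+1)(w+1)\Delta]$ required by Lemma~\ref{lem:wood} (the lower bound is immediate; for the upper bound note $\deg(v)+1\leq\Delta+1\leq 3(\gamma+1)(w+1)\Delta$ and $\lceil(\gamma+1)(w+1)\rceil\leq 3(\gamma+1)(w+1)\Delta$). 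Since the given tree decomposition witnesses $\tw(G)\leq w$, Lemma~\ref{lem:wood} yields a tree-partition $(T,(B_i)_{i\in V(T)})$ of $G$ of width at most $\gamma(w+1)(3\gamma\Delta-1)$ in which some bag $B_j$ contains $S$ and has $|B_j|\leq\alpha|S|-\gamma(w+1)$.

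Next I would surgically peel $v$ off: add a new node $j^\star$ adjacent only to $j$ in $T$, set its bag to $\{v\}$, and replace $B_j$ by $B_j\setminus\{v\}$, leaving every other node and bag untouched. The resulting pair is still a tree-partition of $G$: the vertex $v$ lies in a unique bag, and for any edge of $G$ incident to $v$, say $uv$, we have $u\in N(v)\subseteq S\subseteq B_j$, hence $u\in B_j\setminus\{v\}$, whose node $j$ is the unique neighbour of $j^\star$; every edge not incident to $v$ is unaffected because the rest of the tree and its bags are as before. In this tree-partition $v$ is the only vertex of its bag, and the width is $\max\{1,\ |B_j|,\ \gamma(w+1)(3\gamma\Delta-1)\}$, which is $O(\Delta w)$ since $|B_j| \leq \alpha|S| - \gamma(w+1) \leq \alpha\cdot 3(\gamma+1)(w+1)\Delta = O(\Delta w)$.

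I do not expect a genuine obstacle here: the one design choice is that $S$ must contain all of $N(v)$ so that $v$ can be moved to a pendant bag without breaking any edge, while still being large enough to meet the threshold of Lemma~\ref{lem:wood}; everything else is bookkeeping and elementary inequalities. Since Wood's proof of Lemma~\ref{lem:wood} is constructive, the whole argument is algorithmic, which is what is needed for the approximation algorithm of Section~\ref{section:approx}.
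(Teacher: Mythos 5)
Your proof is correct and takes essentially the same approach as the paper: both invoke the ``moreover'' clause of Lemma~\ref{lem:wood} with a set $S \supseteq N(v)$ padded up to meet the lower bound $(\gamma+1)(w+1)$. The only cosmetic difference is that the paper applies the lemma to $G-v$ and then attaches a pendant bag $\{v\}$, whereas you apply it to $G$ itself (with $v \in S$) and then peel $v$ off into a new pendant bag; the two are interchangeable.
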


\begin{proof}
We wish to apply Lemma \ref{lem:wood} to $S \supseteq N(v)$. Let $\gamma=1+\sqrt{2}$.
We have $|N(v)| \leq \Delta$, so in particular, $|N(v)| \leq 3(\gamma +1)(w+1)\Delta$. In case $|N(v)| < (\gamma +1)(w+1)$, we can add arbitrary vertices to $N(v)$ to form $S$ satisfying $|S| \geq (\gamma +1)(w+1)$. Otherwise, we simply set $S=N(v)$.
    We then apply the lemma to $S$ in $G-v$.
    There is a single bag that contains $N(v)$, and so we may add the bag $\{v\}$ adjacent to this in order to deduce a tree partition of $G$ of width $O(\Delta w)$ in which $v$ is the only vertex of its bag.
\end{proof}

\subsection{Time/quality trade-offs}

For Step 1, we consider the following algorithms to compute tree decompositions:
\begin{itemize}
    \item An algorithm of Korhonen \cite{Korhonen21} computes a tree decomposition of width at most $2k+1$ or reports that $\tw(G)>k$ in time $2^{O(k)}n$.
    \item An algorithm of Fomin et al. \cite{FominLSPW18} computes a tree decomposition of width $O(k^2)$ or reports that $\tw(G)>k$ in time $O(k^7n\log n)$.
    \item An algorithm of Feige et al. \cite{FeigeHL08} computes a tree decomposition of width $O(k \sqrt{\log k})$ or reports that $\tw(G)>k$ in time $O(n^{O(1)})$.
\end{itemize}

Recall that we denote by $w$ the width of the computed tree decomposition of $G$.

We give two methods to compute $G^b$ in step 2 of the algorithm.
\begin{itemize}
    \item We can use a maximum-flow algorithm (e.g. Ford-Fulkerson \cite{FordFulkerson}) to compute for each pair $\{s,t\}$ of vertices of $G$ whether there are at least $b$ vertex disjoint paths from $s$ to $t$, in time $O(bkn)$. To compute a minimum vertex cut, replace each vertex $v$ by two vertices $v_{\text{in}},v_{\text{out}}$ with an arc from $v_{\text{in}}$ to $v_{\text{out}}$. All arcs going to $v$ should go to $v_{\text{in}}$, and all arcs leaving $v$ should leave $v_{\text{out}}$. All arcs are given capacity 1. We may stop the maximum flow algorithm as soon as a flow of at least $b$ was found. Furthermore, we can reduce the number of pairs $\{s,t\}$ of vertices to check to $O(w n)$, as each pair must be contained in a bag due to $b \geq w+1$. This results in a total time of $O(wbkn^2)$.

    \item We can also use dynamic programming to enumerate all possible ways of connecting pairs of vertices that are in the same bag in time $2^{O(w \log w)}n$, which is sufficient to compute $G^{b}$. A state of the dynamic programming consists of the subset of vertices of the bag that are used by the partial solution, a matching on some of these vertices, up to two vertices that were decided as endpoints of the constructed paths, the number of already constructed paths between the endpoints, and two disjoint subsets of the used vertices that are not endpoints, nor in the matching such that we found a disjoint path from the first or second endpoint to them. The bound of $2^{O(w \log w)}$ follows from the fact that this is a subset of the labeled forests on $w$ vertices. We may assume that our tree decomposition is rooted and binary. We first tabulate answers for each subtree of the decomposition by starting from the leaves, and then tabulate answers for each complement of a subtree by starting from the root and, when branching to some child, combining with the partial solutions of the subtree of the other child. By combining tabulated values for subtrees and their complements, we obtain the sought information.
\end{itemize}

The $b$-reduction $H$ of $G$ and its 2-connected components can be computed in $O(k^{O(1)}n)$ time (see e.g. \cite{Biconnected}), since the size of the graph is $O(k^{O(1)}n)$ here.

We will now make use of the following result due to Bodlaender and Hagerup \cite{BodlaenderH98}:

\begin{lemma}\label{lem:balanced-tw}
There is an algorithm that given a tree decomposition of width $k$ with $O(n)$ nodes of a graph $G$, finds a rooted binary tree decomposition of $G$ of width at most $3k+2$ with depth $O(\log n)$ in $O(kn)$ time.
\end{lemma}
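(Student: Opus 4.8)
The plan is to follow the standard approach of Bodlaender and Hagerup \cite{BodlaenderH98}, which produces a balanced binary tree decomposition via repeated centroid-style splitting of the tree underlying the decomposition. First I would preprocess the given tree decomposition into a \emph{nice} or at least \emph{small} and binary one: using standard transformations, from a width-$k$ tree decomposition with $O(n)$ nodes one obtains, in $O(kn)$ time, a width-$k$ tree decomposition whose tree is binary and still has $O(n)$ nodes (each internal node has at most two children). This is routine and can be cited.

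The core of the argument is the recursive splitting. Given the binary tree $T$ of the decomposition, I would repeatedly find a node (or edge) whose removal splits $T$ into parts each of size at most a constant fraction (say $2/3$) of $|T|$ — a centroid. Removing such a node $i$ and restricting the decomposition to each of the at most three resulting subtrees gives a decomposition of the corresponding induced subgraph; to glue the recursively produced balanced decompositions back together, one forces the bag $X_i$ into the root of each recursive call. Concretely, each recursive subproblem gets the relevant subtree decomposition together with an extra set of at most $k+1$ vertices (the ``interface'' $X_i$) that must appear in the root bag and in every bag along the path it is propagated — this is why the width degrades from $k$ to roughly $2k$ in the first instance, and one has to be careful that the interface sets do not accumulate across levels of recursion. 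The key observation that keeps the width bounded by $3k+2$ is that at each node of the recursion we need to carry at most the current bag $X_i$ plus the interface inherited from the parent, and by choosing the split point appropriately (and re-selecting a fresh interface at each level rather than letting old ones persist), the total never exceeds $3(k+1)$, i.e.\ width $3k+2$. Since each split reduces the size by a constant factor, the recursion depth is $O(\log n)$, which bounds the depth of the output tree.

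For the running time, the centroid of a tree with $m$ nodes can be found in $O(m)$ time, and at depth $d$ of the recursion the total size of all subproblems is $O(n)$; summed over $O(\log n)$ levels this would naively give $O(n\log n)$, so to reach $O(kn)$ one must be slightly more careful, for instance by charging the work of locating each split point to the nodes removed and using a linear-time selection of \emph{all} split points in a single bottom-up pass, or by the amortization argument of \cite{BodlaenderH98}. The factor $k$ enters because copying a bag or testing membership costs $O(k)$ per node touched. I expect the main obstacle to be exactly this bookkeeping: ensuring simultaneously that (i) the interface sets propagated into recursive calls do not cause the width to grow beyond $3k+2$ across all $O(\log n)$ levels, and (ii) the total work is $O(kn)$ rather than $O(kn\log n)$. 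Both are handled in \cite{BodlaenderH98}, so I would present the construction and the width/depth bounds in detail and cite that paper for the tighter running-time analysis.
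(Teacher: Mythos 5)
The paper does not prove this lemma at all: it is stated as a black-box citation to Bodlaender and Hagerup~\cite{BodlaenderH98}, and the text immediately preceding it reads ``We will now make use of the following result due to Bodlaender and Hagerup.'' So there is no in-paper proof to compare against. Your sketch correctly identifies the source and reconstructs the standard centroid-splitting argument from that paper: binarize the input decomposition, repeatedly split at a node whose removal leaves subtrees of size at most a constant fraction, propagate the split bag as an interface into the recursive calls, and bound the accumulated interface by keeping at most one inherited boundary set plus the current split bag. The $O(\log n)$ depth from the constant-factor shrinkage and the $3(k+1)-1 = 3k+2$ width bound match the cited result, and you are right that the two places where the proof is genuinely delicate are (i) preventing interface sets from accumulating across recursion levels, and (ii) getting $O(kn)$ rather than $O(kn\log n)$ time. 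Since you explicitly defer to~\cite{BodlaenderH98} for those two points, your proposal is in essence the same as what the paper does, which is to rely on that reference. If you wanted a fully self-contained proof you would need to nail down exactly how the boundary set is defined for each recursive call and why it never exceeds $2(k+1)$ before the current bag is added (which is what gives $3(k+1)$ overall), and you would need to spell out the amortized or bottom-up accounting that avoids the extra $\log n$ factor; as it stands those are gaps filled by citation rather than by argument, which matches the paper's own level of detail.
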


When implementing the construction of Wood for 2-connected components of $H$, the running time is dominated by $O(n)$ queries to find a balanced separator with respect to a set $S$ of size $k^{O(1)}$. After a preprocessing in time $O(kn)$, we can do this in time $k^{O(1)}d$ where $d$ is the diameter of our tree decomposition.
We first obtain a binary balanced decomposition using Lemma~\ref{lem:balanced-tw}, then reindex the vertices in such a way that we can check if a vertex is in some bag of a given subtree of tree decomposition in constant time. Using this, we can in time $k^{O(1)}$ determine whether a bag is a balanced separator of $S$, and if not move to the subtree containing the most vertices of $S$. This procedure will consider at most $d$ bags, hence the total running time of $k^{O(1)}d$. Since the decomposition has depth $O(\log n)$, it also has diameter $d=O(\log n)$. Hence, the construction of Wood can be executed in time $k^{O(1)}n \log n$.

\begin{lemma}
\label{lem:greedywood}
    We can compute a tree partition of width $O(\Delta w)$ in time $O(k^{O(1)}n \log n)$ when given a tree decomposition of width $w=k^{O(1)}$.
\end{lemma}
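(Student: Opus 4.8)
The plan is to assemble the pieces already set up in this section into a single algorithmic statement. The claim is that, given a tree decomposition of $G$ of width $k^{O(1)}$, we can produce a tree-partition of $G$ of width $O(\Delta\tw)$ in time $O(k^{O(1)}n\log n)$. I would first invoke Lemma~\ref{lem:balanced-tw} on the given decomposition to obtain a rooted binary tree decomposition of width $O(k^{O(1)})$ and depth $O(\log n)$ in time $O(k^{O(1)}n)$; this is the preprocessing that makes the balanced-separator queries cheap. Then I would run the construction of Wood (as captured by Lemma~\ref{lem:wood}), whose bottleneck, as already discussed in the paragraph preceding the statement, is $O(n)$ calls to a balanced-separator oracle with respect to a set $W$ of size $k^{O(1)}$. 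Each such call costs $k^{O(1)}d = k^{O(1)}\log n$ by walking down the binary decomposition and, at each node, testing in time $k^{O(1)}$ whether the current bag is a balanced separator of $W$ (using the constant-time ``is a given vertex in this subtree'' check obtained from the reindexing). Multiplying, the total running time is $k^{O(1)}n\log n$, and the width bound $O(\Delta\tw) = O(\Delta w) = O(\Delta k^{O(1)})$ is exactly what Wood's construction guarantees for a graph of treewidth at most the decomposition width and maximum degree $\Delta$.

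Concretely, the steps in order are: (1) apply Lemma~\ref{lem:balanced-tw} to get a balanced binary decomposition of depth $O(\log n)$; (2) reindex the vertices so that subtree-membership queries are constant time (e.g.\ by a DFS interval labelling of the decomposition tree together with, for each vertex of $G$, the interval of decomposition nodes whose bags contain it); (3) run Wood's recursive partitioning, each recursive step selecting a balanced separator of the relevant set $W$ of already-placed boundary vertices by descending the decomposition from the root, at each node spending $k^{O(1)}$ time to decide whether that bag splits $W$ in a balanced way and otherwise recursing into the child subtree containing the majority of $W$; (4) read off the resulting tree-partition and observe its width is $O(\Delta w)$ by Lemma~\ref{lem:wood}. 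Since there are $O(n)$ separator queries and each costs $k^{O(1)}\log n$, and all other bookkeeping is linear, the claimed bound follows.

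The main obstacle I expect is the careful justification that a single balanced-separator query really costs only $k^{O(1)}d$ rather than something depending on $n$ per decomposition node. This requires that, after the reindexing in step (2), at each decomposition node we can in time $k^{O(1)}$ both enumerate the $\le k^{O(1)}$ vertices of $W$ that lie on each side of the current bag and decide which child subtree to descend into; the constant-time subtree-membership test is what makes this work, but one must check that computing, for the fixed set $W$ of size $k^{O(1)}$, how it is partitioned by a given bag indeed takes only $O(|W|) = k^{O(1)}$ time per node using these precomputed labels. A second, more routine point is verifying that the ``balanced separator'' notion Wood's construction needs is precisely a bag that is a balanced separator with respect to $W$ in the decomposition tree — i.e.\ that such a bag exists and is found by the greedy descent — which is standard but should be stated. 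The width and correctness of the output are immediate from Lemma~\ref{lem:wood} and the fact that the treewidth witnessed by the given decomposition is at most its width, so no additional work is needed there.
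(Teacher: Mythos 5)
Your proposal is correct and follows essentially the same approach as the paper: obtain a balanced binary decomposition via Lemma~\ref{lem:balanced-tw}, preprocess for constant-time subtree-membership queries, and answer each of the $O(n)$ balanced-separator queries in Wood's construction by descending the decomposition from the root in $k^{O(1)}\log n$ time. You also spell out the subtree-membership reindexing (DFS interval labelling) slightly more concretely than the paper does, but the argument and bounds are the same.
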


To improve on the function of $n$ in the running time of our procedure to compute tree-partitions of width $O(\Delta w)$, we can use some of the techniques introduced in \cite{BodlaenderDDFLP16}. If we use separators that are balanced with respect to the subgraph that has to be decomposed, we obtain a balanced decomposition as observed by Reed \cite{Reed92} which gave an approximation algorithm for tree decompositions with running time $f(k)n\log n$. If, in addition, we stop processing once we reach components of size $O(\log n)$, we have computed at most $f(k)n/\log n$ separators which can each be found in time $f(k)\log n$ using the data structure introduced in \cite{BodlaenderDDFLP16}. This means that getting to components of size $O(\log n)$ takes time $f(k)n$. On each of the obtained components, we can either apply our previous $f(k) n \log n$ construction, which leads to an $f(k)n \log \log n$ algorithm, or apply our construction recursively to obtain an $f(k) n \log^{(\alpha)} n$ algorithm where $\log^{(\alpha)}$ is the $\alpha$-fold composition of $\log$.

\begin{lemma}
\label{lem:logalpha}
    A balanced tree-partition of width $O(\Delta w)$ can be computed in time $2^{O(k)}n \log^{(\alpha)} n$, for all integers $\alpha \geq 1$, for a graph of maximum degree $\Delta = k^{O(1)}$ when given a tree decomposition of width $w=O(k)$.
\end{lemma}

\begin{proof}
    We first observe that to make the decomposition balanced, we only need to add a balanced separator once per bag of the tree-partition which still gives a width of $O(\Delta w)$.
    For each bag of the constructed tree-partition, we compute a balanced separator once and we compute $O(\Delta)$ $S$-balanced separators. These can be computed in time $2^{O(k)}\log n$ by the data structure
    after its initialization in time $2^{O(k)}n$. One might worry that because we look at sets $S$ of size polynomial in $k$ and not linear unlike \cite{BodlaenderDDFLP16}, the data structure does not work or has a running time $2^{O(|S|)}$ instead of $2^{O(k)}$. However, the exponential part in the analysis is an exponential in the width of the tree decomposition. The size of $S$ does have an impact on the running time, but only appears in a polynomial factor. Since we bound the size of $S$ by a polynomial in $k$, the polynomial in $|S|$ still gives a polynomial in $k$ in our setting, and it is still dominated by the exponential in $k$.

    We prove the existence of an algorithm of running time $O(2^{O(k)}n \log^{(\alpha)}n)$ for every $\alpha \geq 1$ by induction on $\alpha$.

    First, we initialize the data structure to compute separators of \cite{BodlaenderDDFLP16}.
    Then we use this data structure to compute separators. We will compute only $k^{O(1)}$ separators per bag, each in time $2^{O(k)}\log n$, which takes $2^{O(k)}\log n$ time per bag.
    If $\alpha=1$, we fully process subinstances and obtain a running time of $2^{O(k)}n \log n$.
    Otherwise, we stop processing subinstances once they have size $O(\log n)$.
    If we stop processing subinstances when they reach size $O(\log n)$, we compute only $O(n /\log n)$ bags because the tree-partition is balanced see \cite[Lemma 4.3 and Claim 4.4]{BodlaenderDDFLP16}.
    We then compute new tree decompositions for each of the $O(\log n)$ size components in total time $2^{O(k)}n$. For each of them, we apply the algorithm with $\alpha'=\alpha-1$, the total running time is then bounded by $2^{O(k)}n \log^{(\alpha')} \log n = 2^{O(k)}n \log^{(\alpha)} n$.
\end{proof}

The above algorithm can be turned into an algorithm running in time $2^{O(k)}n$ using the following observation: the number of different configurations to be solved for components of size $O(\log\log(n))$ is small enough to allow us to enumerate all distinct configurations in sublinear time.

\begin{claim}
\label{claim:enum-loglog}
    We can enumerate all configurations of pairs $(H,S)$ with $H$ a graph on $O(\log\log(n))$ vertices, $S$ a subset of vertices of $H$, compute a tree-partition of $H$ having a superset of $S$ as its root bag for each configuration, and tabulate all results in time $o(n)$ in some table $T$.
\end{claim}

\begin{proof}
    The number of pairs $(H,S)$ is $2^{O((\log\log(n))^2)}\cdot2^{O(\log\log(n))}=2^{O((\log\log(n))^2)}$.
    Let $n'$ be the number of vertices of $H$. If $k \geq n'$, we may use a single bag containing all vertices of $H$.
    Otherwise, for each such pair, we can compute a tree-partition using the algorithm running in time $2^{O(k)}n' \log n'$. Using the fact that $k \leq n'$, we can bound this running time by $2^{O(n')}n' \log n'=2^{O(n')}=2^{O(\log\log(n))}$.
    We can then store the tree-partition of $H$ which has size $O(n')$ in table entry $T[H,S]$.

    The total computation time and space is bounded by $2^{O((\log\log(n))^2)}=o(n)$
\end{proof}

\begin{lemma}\label{lem:linear}
    A balanced tree-partition of width $O(\Delta w)$ can be computed in time $2^{O(k)}n$ for a graph of maximum degree $\Delta = k^{O(1)}$ when given a tree decomposition of width $w=O(k)$.
\end{lemma}

\begin{proof}
    We first compute suitable decompositions for all pairs $(H,S)$ of graphs on $O(\log\log(n))$ vertices and subsets of vertices as described in Claim~\ref{claim:enum-loglog}.
    We then recursively decompose similarly to Lemma~\ref{lem:logalpha} with $\alpha=2$ except that now when reaching components of size $O(\log\log(n))$, we simply have to look at a relevant entry of table $T$ to obtain a decomposition for the component.

    This is done as follows. We first compute an arbitrary bijection of the vertex set of the component with $[1,n']$ where $n'$ is the size of the component. It is then straightforward to obtain an adjacency list describing the component and the list of vertices with neighbors outside the component, both using the new indices. Using this description, we obtain a pair $(H,S)$ allowing us to lookup a decomposition at entry $T[H,S]$. We then obtain a tree-partition with the new indices, and only have to replace all indices by the original indices of vertices in the component. This procedure takes time linear in the size of the component.
    
    The overall running time of the algorithm is $2^{O(k)}n$, since we spend $n+o(n)$ time in total for components of size $O(\log\log(n))$, $2^{O(k)}n\log(n)/\log(n)=2^{O(k)}n$ to reduce the graph to components of size $O(\log(n))$, and $2^{O(k)}(\sum |C_i|\log|C_i|/\log|C_i|)=2^{O(k)}n$ to reduce each component $C_i$ of size $O(\log(n))$ to components of size $O(\log\log(n))$.
\end{proof}

By combining the previous algorithms we obtain the following theorem.

\begin{theorem}
\label{thm:main}
    There is a polynomial time algorithm that constructs a tree-partition of width $O(k^5\log k)$ or reports that the tree-partition-width is more than $k$.

    There is an algorithm running in time $2^{O(k \log k)}n$ that computes a tree-partition of width $O(k^5)$ or reports that the tree-partition-width is more than $k$.

    There is an algorithm running in time $k^{O(1)}n^2$ that computes a tree-partition of width $O(k^7)$ or reports that the tree-partition-width is more than $k$.
\end{theorem}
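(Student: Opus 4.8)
The plan is to run the generic five-step scheme of Section~\ref{section:approx} once for each of the three choices of Step~1 algorithm listed above (adjusting the implementation of Step~2 accordingly), and to read off in each case the width $O(wbk^3)$ of the produced tree-partition together with the total running time. In every run I take $b=\max\{2k-1,w+1\}$, the smallest threshold the scheme permits, and for the polynomial-time statement I may also assume $k<n$, since otherwise $\tpw(G)\le n\le k$ and the one-bag tree-partition is already a valid answer.

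Before substituting numbers I would verify once and for all that the scheme is correct for every admissible pair $(w,b)$. Each ``report $\tpw(G)>k$'' must be justified: Step~1 does so only when $\tw(G)>2k-1$, which forces $\tpw(G)>k$ via $\tw+1\le 2\tpw$; Step~2 does so only when a connected component of $G^b$ has more than $k$ vertices, and by Claim~\ref{claim:large-cuts} (second bullet, using $b\ge 2k-1$, applied transitively along the component) all those vertices would have to share one bag; Step~4 does so only when some $2$-connected component of $H$ has maximum degree above $Cbk^2$, which Claim~\ref{claim:degree-bound} rules out when $\tpw(G)\le k$. And when the scheme outputs a decomposition I would check it is a tree-partition of $G$ of width $O(wbk^3)$: Claim~\ref{claim:dec-red} (applicable since $b\ge w+1$) turns the width-$w$ tree decomposition of $G$ into one of $H$, hence of each block of $H$; Corollary~\ref{corol:combine-blocks}, applied to each block with its parent cutvertex $c$, produces a tree-partition of that block of width $O(\Delta_H w)=O(wbk^2)$ in which $\{c\}$ forms its own bag, adjacent to a bag holding every neighbour of $c$ inside the block; gluing these block tree-partitions along the rooted block-cut tree keeps the width, since each cutvertex is isolated in all of its blocks except the one nearest the root, so each trivial bag $\{c\}$ may be deleted and its unique neighbour reattached to the bag that already contains $c$ in the near-root block; and Step~5 blows up each vertex of $H$ into the at most $k$ vertices of its $G^b$-component (the bound being exactly what Step~2 certifies), so bag sizes grow by a factor at most $k$, and the tree-partition axioms survive because every edge of $G$ either lives inside a single $G^b$-component or joins two components linked by an edge of $H$.

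It then remains to do the running-time bookkeeping for the three instantiations. Using Feige et al.~\cite{FeigeHL08} in Step~1 gives $w=O(k\sqrt{\log k})$ in $n^{O(1)}$ time, hence $b=O(k\sqrt{\log k})$ and width $O(wbk^3)=O(k^5\log k)$; computing $G^b$ by the max-flow method costs $O(wbkn^2)=k^{O(1)}n^2$ and Steps~3--5 cost $k^{O(1)}n\log n$, so the whole algorithm is polynomial. Using Korhonen~\cite{Korhonen21} in Step~1 gives $w=2k+1$ in $2^{O(k)}n$ time and width $O(k^5)$; here I would compute $G^b$ by the bag dynamic programming of Step~2 at cost $2^{O(w^2)}n=2^{O(k^2)}n$ (the max-flow method would cost a spurious $n^2$), for a total of $2^{O(k^2)}n+k^{O(1)}n\log n$. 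Using Fomin et al.~\cite{FominLSPW18} in Step~1 gives $w=O(k^2)$ in $O(k^7n\log n)$ time and width $O(k^7)$; here $G^b$ is again computed by max-flow in $k^{O(1)}n^2$ time, which dominates, giving total time $k^{O(1)}n^2$.

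The step I expect to demand the most care is Step~4. Evaluated literally, Wood's construction makes $O(n)$ balanced-separator queries each costing $\Theta(n)$, which would insert an extra factor $n$ and spoil the first two running times; to avoid this I would first apply Lemma~\ref{lem:balanced-tw} to replace each block's tree decomposition by a balanced binary one of depth $O(\log n)$, and then, after an $O(kn)$-time reindexing that makes ``is vertex $v$ in some bag of this subtree?'' an $O(1)$ query, answer each balanced-separator query in $k^{O(1)}\cdot O(\log n)$ time, so the construction runs in $k^{O(1)}n\log n$. The only remaining work is bookkeeping: confirming that the width estimates of Claim~\ref{claim:degree-bound}, Corollary~\ref{corol:combine-blocks} and Step~5 multiply out exactly to $O(wbk^3)$, and that the three running-time sums collapse as claimed.
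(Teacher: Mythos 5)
Your proposal is correct and follows essentially the same route as the paper: instantiate the five-step scheme with the tree-decomposition algorithms of Feige et al., Korhonen, and Fomin et al.\ respectively, pair the first and third with the max-flow computation of $G^b$ and the second with the bag-based dynamic programming, use Lemma~\ref{lem:balanced-tw} to make Wood's construction run in $k^{O(1)}n\log n$, and multiply out $O(wbk^3)$ with $b=\max\{2k-1,w+1\}$ to obtain the three width bounds. The only cosmetic divergence is that for the purely polynomial-time algorithm the paper notes one can skip the balancing step and compute $G^b$ naively, whereas you reuse the tighter implementations throughout; this makes no difference to the claimed guarantees.
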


\begin{proof}
    The first algorithm uses the algorithm of Feige et al. to compute the tree decomposition, then naively computes $G^b$, and then finds balanced separators for Wood's construction using the tree decomposition in polynomial time (no need to balance the decomposition).

    The second algorithm uses Korhonen's algorithm to compute the tree decomposition, then computes $G^b$ using the dynamic programming approach, and then applies Lemma~\ref{lem:linear} to implement the tree-partition construction on each 2-connected component of $H$.

    The third algorithm uses the algorithm of Fomin et al. to compute the tree decomposition, then computes $G^b$ via a maximum-flow algorithm in time $O(wbkn^2)=O(k^5n^2)$, and then computes the tree-partition for each 2-connected component of $H$ using Lemma~\ref{lem:greedywood}.

    The guarantees on the width follow from the analysis of our scheme (see \ref{subsec:scheme} and \ref{subsec:scheme-correctness}).
\end{proof}

\section{XALP-completeness of Tree-Partition-Width}\label{section:exact}

In this section, we show that the {\sc Tree-Partition-Width} problem is XALP-complete, even when we use the target width and the degree as combined parameter.
As a relatively simple consequence, we obtain that {\sc Domino Treewidth} is XALP-complete.

XALP is the class of all parameterized problems that can be solved in $f(k)n^{O(1)}$ time and $f(k)\log n$ space on a nondeterministic Turing Machine with access to a push-down automaton, or equivalently the class of problems that can be solved by an alternating Turing Machine in $f(k)n^{O(1)}$ treesize and $f(k)\log(n)$ space. An alternating Turing Machine (ATM) is nondeterministic Turing Machine with some extra states where we ask for all of the transitions to lead to acceptance. This creates independent configurations that must all lead to acceptance, and we call `co-nondeterministic step' the process of obtaining these independent configurations.

XALP is closed by reductions using at most $f(k)\log n$ space and running in FPT time. These two conditions are implied by using at most $f(k) + O(\log n)$ space. We call reductions respecting the latter condition \emph{parameterized log-space} reductions (or pl-reductions).

This class is relevant here because the problems we consider are complete for it. Completeness for XALP has the following consequences: W[$t$]-hardness for all positive integers $t$, membership in XP, and there is a conjecture that XP space is required for algorithms running in XP time. If the conjecture holds, this roughly means that the dynamic programming algorithm used for membership is optimal.

\begin{lemma}
{\sc Tree-Partition-Width} is in XALP.
\label{lemma:tpwinxalp}
\end{lemma}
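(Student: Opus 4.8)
The plan is to exhibit an alternating Turing machine with the resource bounds defining XALP (namely $f(k)n^{O(1)}$ treesize and $f(k)\log n$ space), that accepts exactly the graphs $G$ admitting a tree-partition of width at most $k$. The natural approach is to nondeterministically traverse a rooted version of an (unknown) optimal tree-partition $(T,(B_i)_{i\in V(T)})$ in a depth-first manner, using the pushdown/stack to remember the information along the current root-to-node branch, and using co-nondeterministic steps to recurse into the children of the current node. As is standard for such XALP membership proofs, the key is to bound the amount of information one must carry down a branch of $T$ so that it fits in $f(k)\log n$ space.

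Concretely, I would first argue that it suffices to consider rooted tree-partitions in which every node has at most some bounded number of children \emph{that are relevant to the current branch} — or, more cleanly, avoid bounding the degree of $T$ and instead, when the machine is at a node $i$, it guesses the bag $B_i$ (a set of at most $k$ vertices of $G$, described in $O(k\log n)$ bits) together with, for each vertex $v$ of $G$ appearing in $B_{\mathrm{parent}(i)}$, whether $v$ has a neighbour in the subtree rooted at $i$, and then verifies local consistency: that $B_i$ is disjoint from all previously seen bags (this is where we need to be careful — we cannot store all previously seen bags), that every edge of $G$ with both endpoints chosen so far is realised either inside a bag or between parent and child, and that every vertex of $G$ is placed exactly once. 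The disjointness and "placed exactly once" conditions are global, so the machine must instead maintain, along the branch, only the set $B_i \cup B_{\mathrm{parent}(i)}$ and push a small "interface" record when descending; global coverage (every vertex placed) and global disjointness are then enforced by a counting/bookkeeping trick rather than by remembering the whole partition.

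The main obstacle — and the step I expect to require the most care — is exactly this tension between the global nature of "each vertex of $G$ lies in exactly one bag" and the $f(k)\log n$ space bound, which forbids remembering which vertices have already been used. The clean resolution, which I would carry out, is to reformulate the acceptance condition so that the machine, at each node $i$, guesses $B_i$ and, crucially, for the parent bag $B_{\mathrm{parent}(i)}$ guesses a partition of the "demand" (which not-yet-placed vertices and which pending edges) among the subtrees hanging off $i$; the pushdown stores, for the current branch, the chain of parent bags and the residual demand, all of size $O(k\log n)$ per level but only $O(k\log n)$ simultaneously since the stack is LIFO and we only need the current branch. One verifies that the branch has length $O(n)$ (we can take $T$ to have $O(n)$ nodes, since empty bags can be suppressed and a tree-partition of width $\le k$ has at least $n/k$ nonempty bags, and we may assume no two adjacent bags are both "useless"), so the treesize is $n^{O(1)}$ times the branching, which is controlled by guessing the children. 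Finally I would remark that since {\sc Tree-Chained Multicolor Independent Set} is XALP-complete and we will build a pl-reduction from it in the subsequent lemmas, together with this membership result we get XALP-completeness; here only membership is claimed, so the proof ends once the ATM and its resource analysis are in place.

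\begin{proof}[Proof sketch]
We describe an alternating Turing machine with a pushdown store that, on input $(G,k)$, accepts if and only if $\tpw(G)\le k$, using $f(k)n^{O(1)}$ treesize and $f(k)\log n$ space. We may assume that in an optimal tree-partition no bag is empty and no edge $ij$ of $T$ has $B_i$ adjacent (in $G$) to no vertex of $B_j$ once we root $T$ and contract such edges upward; hence $T$ has at most $n$ nodes and every root-to-leaf path has length at most $n$.

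The machine guesses a root node and its bag $B_{r}\subseteq V(G)$ with $|B_r|\le k$, checks that all edges of $G$ inside $B_r$ are present, and then processes nodes of $T$ in a depth-first fashion. When at a node $i$ with (already guessed) bag $B_i$ and with its parent bag $B_{p}$ available on the stack, the machine: (i) guesses the number $d\le n$ of children of $i$ and, for each child, the bag of that child together with the set of vertices of $B_i$ having a neighbour below that child; (ii) checks that the child bags are pairwise disjoint, disjoint from $B_i$ and from $B_p$, that every $G$-edge between $B_i$ and a child bag is consistent with $T$-adjacency, and that every edge inside a child bag is present; (iii) guesses, for each vertex of $V(G)$ not yet accounted for, which child subtree will contain it, recording only an $O(k\log n)$-size summary (the bag and the residual ``demand'' interface) to be pushed; (iv) uses a co-nondeterministic step to spawn one branch per child, each receiving on the stack the pair $(B_i$, child bag$)$ plus the residual demand for that subtree, and recurses. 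A leaf accepts iff its residual demand is empty and all its internal edges are present. Global coverage (each vertex placed exactly once) is enforced because the residual demand is partitioned exactly among children at every step and must be exhausted at the leaves; global disjointness is enforced because a vertex assigned to a given subtree is, by construction of the demand-passing, forbidden from every bag outside that subtree, and locally we check disjointness of a bag from its parent and from its siblings.

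Each configuration stores the current node's bag, its parent bag, and the residual-demand interface for the current subtree, which is $O(k\log n)$ bits, plus a pointer of $O(\log n)$ bits; the pushdown at any moment holds one such $O(k\log n)$-bit record per ancestor, but the working space used by the finite-control part of the ATM at any time is $O(k\log n)=f(k)\log n$. The computation tree has depth $O(n)$ (length of a branch of $T$) and branching bounded by $n$ (number of children), and each node does $f(k)n^{O(1)}$ work, so the treesize is $f(k)n^{O(1)}$. Correctness in the forward direction is immediate: given a tree-partition of width $\le k$, the machine can guess it and all checks pass. Conversely, an accepting computation yields a rooted tree $T$ with bags $B_i$ of size $\le k$ in which every vertex is placed exactly once (by the demand bookkeeping) and every edge of $G$ is realised within a bag or across a $T$-edge (by the local checks), i.e.\ a tree-partition of width $\le k$. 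Hence {\sc Tree Partition Width} is in XALP.
\end{proof}
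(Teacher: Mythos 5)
Your high-level plan (an ATM with pushdown that depth-first traverses a guessed tree-partition, co-nondeterministically spawning one branch per child and keeping only the current bag, its parent, and a small interface on the work tape) matches the general shape of the paper's argument. But the step you flag yourself as ``the step I expect to require the most care'' is exactly where the proposal breaks down, and the fix you propose does not actually close the gap.

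The problem is the ``residual demand interface.'' You describe the demand for a subtree as ``which not-yet-placed vertices and which pending edges,'' and you propose to \emph{guess} how to partition this demand among children. But the set of not-yet-placed vertices destined for a given subtree is an arbitrary subset of $V(G)$ of size up to $n-k$; writing it down requires $\Omega(n)$ bits, not $O(k\log n)$, so it can neither be produced as a guess on the $f(k)\log n$-bounded work tape nor summarised in ``$O(k\log n)$ bits'' as you claim. You never say what the compressed summary actually is, and no such summary exists in general: two different demand sets would have to map to the same $O(k\log n)$-bit record, and the verifier could not then distinguish a run that double-places a vertex from one that doesn't. So global coverage and global disjointness are not in fact enforced by the proposal as written.

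The missing idea, which the paper uses, is that in a tree-partition the ``demand'' of a subtree is not a free choice at all: once you fix the current bag $B$ and the parent bag $P$, the vertices that must be placed below $B$ are \emph{exactly} those in the connected components of $G-B$ that do not touch $P$, and moreover each such component must go entirely into one child subtree. So there is nothing to guess and nothing to store — the demand is determined by connectivity, and connectivity of undirected graphs is decidable in logspace (Reingold). Concretely, the paper enumerates the relevant components of $G-B$ by their $\sigma$-least vertex representatives, which fit in $O(\log n)$ bits, and splits co-nondeterministically into ``process this component'' versus ``advance to the next representative.'' It also verifies the key local condition that $N(B)$ restricted to the current component is contained in the guessed child bag $B'$, which is what forces each component to be handled by a single child and makes disjointness and coverage come for free. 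Without this connectivity observation (or an equivalent one), the proposal's bookkeeping cannot be carried out within $f(k)\log n$ space.
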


\begin{proof}
To keep things simple, we will use as a black box the fact that reachability in undirected graphs can be decided in log-space \cite{Reingold}.
We assume that the vertices have some arbitrary ordering $\sigma$.

For now, assume that the given graph is connected.

We begin by guessing at most $k$ vertices to form an initial bag $B_0$, and have an empty parent bag $P_0$ .
We will recursively extend a partial tree-partition in the following manner.
Suppose that we have a bag $B$ with parent bag $P$, we must find a child bag for $B$ in each connected component of $G-B$ that does not contain a vertex of $P$.
We use the fact that a connected component can be identified by its vertex appearing first in $\sigma$, that the restriction of $\sigma$ to these representatives gives an ordering on $\sigma$, and that we can compute such representatives in log-space.
Let us denote by $c$ the current vertex representative of a connected component of $G-B$.
$c$ is initially the first vertex in $\sigma$ that is not in $B$ and cannot reach $P$ in $G-B$.
We do a co-nondeterministic step so that in one branch of the computation we find a tree-partition for the connected component with representative $c$, and in the other branch we find the representative of the next connected component. The representative $c'$ of the next component is the first vertex in $\sigma$ such that it cannot reach a vertex appearing before $c$ (inclusive) in $\sigma$, nor a vertex of $P$ in $G-B$. When found, $c$ is replaced by $c'$ and we repeat this computation. If we don't find such a vertex $c'$, then $c$ must have represented the last connected component, so we simply accept.

Let us now describe what happens in the computation branch where we compute a new bag.
We can iterate on vertices in the component of $c$, by iterating on vertices of $G-B$ and then skipping if they are not reachable from $c$ in $G-B$. In particular, we can guess a subset $B'$ of size at most $k$ of vertices from this component. We then check that the neighborhood of $B$ in this component is contained in $B'$. If it is the case, we can set $P:=B$ and $B:=B'$ and recurse. If not, we reject.

If the graph is not connected, we can iterate on its connected components by using the same technique of remembering a vertex representative. For each of these components, we apply the above algorithm, with the modification that in each enumeration of the vertices we skip the vertex if it is not contained in the current component.

During these computations, we store at most $3k+O(1)$ vertices and use log-space subroutines.
Furthermore, the described computation tree is of polynomial size.
\end{proof}

The following problem is shown to be XALP-complete in \cite{XALP} and is the starting point of our reduction.

\defparaproblem{\textsc{Tree-Chained Multicolored Independent Set}}{A tree $T=(V_T,E_T)$, an integer $k$, and for each $i \in V_T$, a collection of $k$ pairwise disjoint sets of vertices $V_{i,1},\ldots,V_{i,k}$ and a graph $G$ with vertex set $V=\bigcup_{i \in V_T, j \in [1,k]} V_{i,j}$}{$k$}{Is there a set of vertices $W \subset V$, such that $W$ contains exactly one vertex from each $V_{i,j}$ ($i \in V_T, j \in [1,k]$), and for each pair $V_{i,j},V_{i',j'}$ with $i=i'$ or $ii' \in E_T$, $j,j' \in [1,k]$, $(i,j)\neq(i',j')$, the vertex in $W \cap V_{i,j}$ is non-adjacent to the vertex in $W \cap V_{i',j'}$?}

We remark that in the definition above, $V_{i,j}\cap V_{i',j'}=\emptyset$ whenever $(i,j)\neq (i',j')$.
We further use that we can assume the tree $T$ to be binary without loss of generality (see \cite{XALP} for more details). See Figure \ref{fig:tcmis} for a graphical representation of how the instance is arranged locally.

\begin{figure}[ht]
    \centering
    \includegraphics{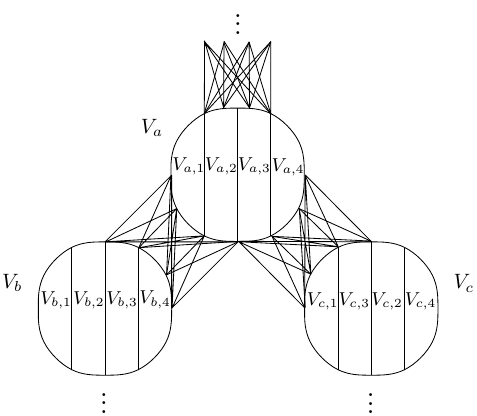}
    \caption{Local structure of \textsc{Tree-Chained Multicolored Independent Set}. For each $ab \in E_T$, the subgraph of $G$ induced by $V_a \cup V_b$ is a \textsc{Multicolored Independent Set} instance.}
    \label{fig:tcmis}
\end{figure}

We first give a brief sketch of the structure of the hardness proof. We have a \emph{trunk} gadget to enforce the shape of the tree from the \textsc{Tree-Chained Multicolored Independent Set}. On the trunk are attached \emph{clique chains} which are longer than the part of trunk between their endpoints, and have some wider parts at some specific positions. The length of the chain gives us some slack which will be used to encode the choice of a vertex for some subset $V_{i,j}$. Based on the edges of $G$, we adjust the width along the trunk so that only one clique chain may place its wider part on each position of the trunk. In other words, part of the trunk is a collection of gadgets representing edges of $G$ that allow for only one incident vertex to be chosen. See Figure \ref{fig:cliquechains} for a high level graphical representation of the gadgets.

\begin{figure}[ht]
    \centering
    \includegraphics[scale=0.3]{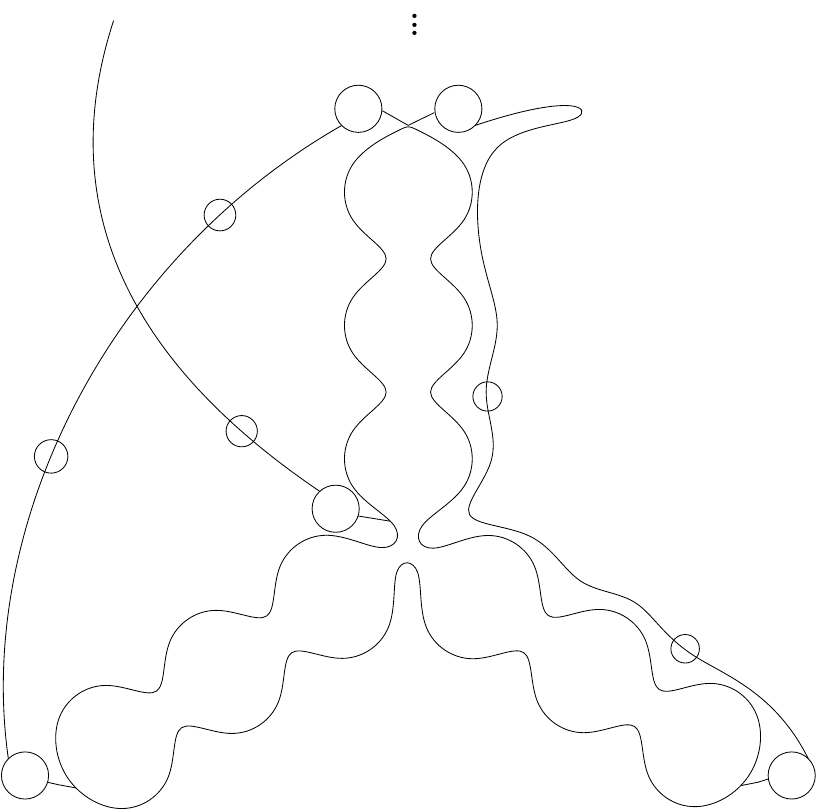}
    \caption{Local structure of our construction : a trunk with attached clique chains. All clique chains have to be folded at their endpoints to fit on the trunk as illustrated by the clique chain on the right.}
    \label{fig:cliquechains}
\end{figure}

\begin{lemma}
{\sc Tree-Partition-Width} with target width and maximum degree as combined parameter is XALP-hard.
\label{lemma:tpwxalphard}
\end{lemma}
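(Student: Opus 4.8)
The plan is to reduce from \textsc{Tree-Chained Multicolor Independent Set} with a binary tree $T$, following the sketch already given in the excerpt. I would build a host graph $G'$ whose every tree-partition of width at most some target $k'$ (polynomial in $k$) is forced to have an underlying tree that mirrors $T$, and along the ``trunk'' realizing $T$ I would attach ``clique chains'' encoding the choice of a representative vertex from each set $V_{i,j}$. Concretely, the trunk is a path/caterpillar of cliques of carefully chosen sizes so that, by Claim~\ref{claim:large-cuts}, consecutive trunk cliques must lie in consecutive (or identical) bags, pinning down the shape; branching points of $T$ are realized by trunk cliques with two child-subtrees attached, and the binary assumption keeps the degree bounded. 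For each pair $(i,j)$ I attach a clique chain that is strictly longer than the trunk segment it spans, so it has ``slack'': the position at which its unique wide block sits along the trunk encodes which vertex of $V_{i,j}$ is selected (say, the $\ell$-th position encodes the $\ell$-th vertex). The width budget along each trunk position is set so that at most one wide block can be accommodated there; using the edge set of $G$, whenever $uv \in E(G)$ with $u \in V_{i,j}$, $v\in V_{i',j'}$ and $i=i'$ or $ii'\in E_T$, I route the two relevant chains so that their ``choose $u$'' and ``choose $v$'' positions collide at a common trunk gadget, making it impossible to pick both — exactly encoding the independence constraint.

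The key steps, in order, are: (1) fix the reduction parameter $k' = \mathrm{poly}(k)$ and describe the trunk gadget as a sequence of cliques whose sizes alternate between a ``baseline'' value and slightly larger ``edge-gadget'' values, proving via Claim~\ref{claim:large-cuts} (large minimum separators force same/adjacent bags) that in any width-$k'$ tree-partition the trunk cliques occupy a path of bags isomorphic to a subdivision of $T$, and that each branch vertex of $T$ gets exactly two child branches; (2) describe the clique chain gadget for $(i,j)$: a path of small cliques, longer than its trunk span by exactly $|V_{i,j}|$ extra cliques, with a single ``wide'' clique whose free position along the span ranges over $|V_{i,j}|$ choices, and argue that the only way to fit the chain within the width budget is to place exactly this wide clique at one of the allowed positions, yielding a bijection between valid placements and vertices of $V_{i,j}$; (3) set up the edge gadgets on the trunk so that the residual width at a trunk position that simultaneously lies in the ``choose $u$'' slot of chain $(i,j)$ and the ``choose $v$'' slot of chain $(i',j')$ is too small to host both wide blocks, and show every other position has enough room; (4) verify the degree of $G'$ is bounded by a function of $k$ (each trunk clique meets $O(1)$ chains and $O(1)$ trunk neighbours, each of bounded size), so that the combined parameter (target width plus maximum degree) is bounded by $f(k)$; (5) prove both reduction directions — a solution $W$ to the instance yields a width-$k'$ tree-partition by placing each chain's wide block at its $W$-position, and conversely any width-$k'$ tree-partition reads off a consistent choice of one vertex per $V_{i,j}$ that must be independent by the edge-gadget collisions; (6) check that the construction is computable by a pl-reduction (logspace up to the parameter, FPT time), which is routine since all gadget sizes are explicit polynomials in $k$ and the output size is polynomial in the input.

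The main obstacle I expect is step (3) combined with the ``slack accounting'' in step (2): one must choose the clique sizes and width budgets so that (a) each chain individually has just enough slack to encode its $|V_{i,j}|$ choices and no more — otherwise the correspondence to vertices breaks — and (b) the budgets interact correctly at shared trunk positions so that a collision is infeasible but every non-colliding configuration is feasible, all while (c) keeping every separator between ``non-adjacent'' gadget pieces large enough (at least $k'+1$, or $2k'-1$ where we need same-bag) to invoke Claim~\ref{claim:large-cuts} and rigidify the layout. Getting these inequalities to hold simultaneously, and in particular ensuring that the slack of a chain cannot be ``spent'' in some unintended way (e.g.\ by bunching cliques or by drifting off the intended trunk segment), is the delicate part; I would handle it by making the chain cliques large enough that each must occupy its own bag (forcing a rigid path), padding the trunk with ``buffer'' cliques between consecutive edge gadgets so placements are well-separated, and choosing the edge-gadget trunk clique to have size exactly $k'$ minus (one wide block) minus (baseline chain overlap), so that a single wide block fits but two do not. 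The remaining parts — membership in XALP is already done in Lemma~\ref{lemma:tpwinxalp}, and the degree bound and pl-reduction conditions are straightforward bookkeeping — should be routine once the gadget parameters are pinned down.
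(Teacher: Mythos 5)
Your plan follows the same high-level architecture as the paper's reduction (trunk following a subdivision of $T$, clique chains with slack encoding the chosen vertex, width budgets at edge positions forcing at most one incident endpoint to be chosen), but there is a real gap at the point where you rely on ``making the chain cliques large enough that each must occupy its own bag.'' That inference is false: in a tree-partition of width $k'$, a clique of size up to $k'$ can always be split across two adjacent bags, and more generally the vertices of any clique can sit in up to two bags. So large size alone does not rigidify the layout; you cannot conclude from Claim~\ref{claim:large-cuts} plus clique size that each trunk clique or chain clique occupies a single bag, which is exactly what your steps (1) and (2) need in order to read the underlying tree structure and to make the ``wide block position'' well-defined.

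The paper closes this gap with a separate \emph{cluster gadget}: to each clique $Z$ that must live in a single bag, it attaches a fresh clique $C_Z$ of size exactly $2L$ (so $C_Z$ is forced to be split into two full bags of size $L$), and makes $Z$ together with the first $L$ vertices of $C_Z$ another clique; a short pigeonhole argument then shows that those $L$ vertices of $C_Z$ must all share one bag, which in turn forces all of $Z$ into a single adjacent bag. This gadget is not optional bookkeeping --- without it (or something functionally equivalent) the rigidity argument does not go through, because a malicious tree-partition could smear a trunk clique across two bags, absorb some chain slack there, and break the intended bijection between wide-block placements and vertices of $V_{i,j}$. You would need to add such a gadget (and re-check that its cliques have the right interactions, e.g.\ that $|Z| + L \le 2L$ so the auxiliary clique argument applies), at which point your steps (1)--(3) can be carried out along the lines you sketch. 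Two further minor discrepancies: each original trunk node actually meets $\Theta(k)$ chain endpoints (one per color class at $i$, plus those from its grandchildren), not $O(1)$, though the degree bound $O(k^2)$ still holds; and the chains in the paper span from $i$ to the \emph{grandparent} of $i$ in $T$ (distance $2(N+1)$ in $T'$), which is needed for the offset arithmetic of the two endpoints of an edge $e_j$ to land on the same trunk node $i_{e_j}$.
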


\begin{proof}
We reduce from {\sc Tree-chained Multicolored Independent Set}.

Suppose that we are given a binary tree $T=(V_T,E_T)$, and for each node $i\in V_T$, a $k$-colored vertex set $V_i$.
We denote the colors by integers in $[1,k]$, and write $V_{i,j}$ for the set of vertices in $V_i$ with color $j$.
We are also given a set of edges $E$ of size $m$. Each edge in $E$ is a pair of vertices in $V_i \times V_{i'}$ with $i=i'$ or
$ii'$ an edge in $E_T$. We can assume the edges are numbered: $E= \{e_1, \ldots, e_m\}$.

In the {\sc Tree-chained Multicolored Independent Set} problem, we want to choose one vertex from each set $V_{i,j}$, $i\in V_T$,
$j\in [1,k]$, such that for each edge $ii'\in E_T$, the chosen vertices in $V_i \cup V_{i'}$ form an independent set (which thus
will be of size $2k$).

We assume that each set $V_{i,j}$ is of size $r$. (If not, we can add vertices adjacent to
all other vertices in $V_{i,j'}$, for all $j'\in [1,k]$. Such vertices cannot be in the solution.)
Write $V_{i,j} = \{v_{i,j,1}, v_{i,j,2}, \ldots, v_{i,j,r}\}$.

Let $N= 2(m+1)r$.
Let $L = 84k+5$.

\paragraph*{Cluster Gadgets}
In the construction, we use a {\em cluster gadget}. Suppose $Z$ is a clique. Adding a
cluster gadget for $Z$ is the following operation on the graph that is constructed.
Add a clique with vertex set $C_Z = \{c_{Z,1}, c_{Z,2}, \ldots, c_{Z,2L}\}$ of size $2L$ to the graph, and add an edge between each vertex in $Z$ and each vertex $c_{Z,j}$, $1\leq j\leq L$, i.e,
$Z$ with the first $L$ vertices in $C_Z$ forms a clique.

In a tree partition of a graph, the vertices of a clique can belong to at most two different bags.
The cluster gadget ensures that the vertices of clique $Z$ belong to exactly one bag. This cluster
gadget will be used in two different steps in the construction of the reduction.

\begin{claim}
Suppose a graph $H$ contains a clique $Z$ with the cluster gadget for $Z$. In each tree partition
of $H$ of width at most $L$, there is a bag that contains all vertices from $Z$.
\end{claim}

\begin{proof}
There must be two adjacent bags that contain the vertices of $C_Z$ and no other vertices.
Similarly, there must be two adjacent bags containing all vertices in $Z\cup \{c_{Z,1},\dots,c_{Z,L}\}$. This forces all vertices in $\{c_{Z,1},\dots,c_{Z,L}\}$ to be in a single bag, and all vertices in $Z$ to be in a single adjacent bag.
\end{proof}

\paragraph*{A subdivision of $T$}
The first step in the construction is to build a tree
$T'=(V_{T'},E_{T'})$, as follows. Choose an arbitrary node $i$ from $V_T$. Add a new neighbor $i'$ to $i$,
Add a new neighbor $r_0$ to $i'$. Now subdivide each edge $N$ times. The resulting tree is $T'=(V_{T'},E_{T'})$. We view $T'$ as a rooted
tree, with root $r_0$. We will use the word \emph{grandparent} to refer to the parent of the parent of a node. The nodes that do not result from the subdivisions are referred to as \emph{original nodes} (i.e. $V_T \cup \{i',r_0\}$ is the set of original nodes).
Nodes $i\in V_T$ and their copies in $T'$ will be denoted with the same name.

The graph $H$ consists of two main parts: the \emph{trunk} and the \emph{clique chains}. To several cliques in these parts,
we add cluster gadgets.

\paragraph*{The trunk}
The trunk is obtained by taking for each node $i\in V_{T'}$ a clique $A_i$. We specify below the size of these cliques.
For each edge $ii'$ in $T'$, we add an edge between each vertex in $A_i$ to each vertex in $A_{i'}$.
We add for each $A_i$ a cluster gadget.

To specify the sizes of sets $A_i$, we first need to give some definitions:
\begin{itemize}
    \item For each node $i'\in V_{T'}$, we let $p(i')$ be the number of nodes $i \in V_T$ (i.e., `original nodes'), such that
$i'$ is on the path (including endpoints) in $T'$ from $i$ to the vertex that is the grandparent of $i$ in $T$. I.e., for each original node $i$,
we look to the grandparent of $i$ (if it exists), and then add 1 to the count of each node $i'$ on the path between them in $T'$.
    \item For each edge $e_j\in E$, let $g(e_j) = 2jr$.
    \item For each edge $e_j = \{v_{i,c,s}, v_{i',c',s'}\}$, we have that $i=i'$ or $i'$ is a child of $i$. Let $i_{e_j}$ be the node in $T'$, obtained by making $g(e_j)$ steps up in $T'$ from $i$: i.e., $i_{e_j}$ is the ancestor of $i$ with distance $g(e_j)$ in $T'$.
\end{itemize}

Now, for all nodes $i\in V_{T'}$,
\begin{itemize}
    \item $|A_i|$ equals $L - 6 k \cdot p(i) - 1$, if $i= i_{e_j}$ for some $e_j\in E$. At this node, we will verify
    that a choice (encoded by the clique chains, explained below), indeed gives an independent set: we check that we did not choose both endpoints of $e_j$.
    \item $|A_i|$ equals $L - 6 k \cdot p(i) - 2$, otherwise.
\end{itemize}

\paragraph*{The clique chains}
For each $i\in V_T$, and each color class $c\in [1,k]$, we have a clique chain with
$2 N + r + 5$ cliques, denoted $CC_{i,c,\gamma}$, $\gamma\in [1,2N+r+5]$.
All vertices in the first clique $CC_{i,c,1}$ are made incident to all vertices in $A_i$.
All vertices in the last clique $CC_{i,c,2N+r+5}$ are made incident to all vertices in $A_{i'}$ with $i'$ the parent of
the parent (i.e., the grandparent) of $i$ in $T$. (Notice that the distance from $i$ to $i'$ in $T'$ equals $2(N+1)$.)
All vertices in $CC_{i,c,\gamma}$ are made incident to all vertices in $CC_{i,c,\gamma+1}$, i.e., all vertices in a clique are
adjacent to all vertices in the next clique in the chain.

To each clique $CC_{i,c,\gamma}$ we add a cluster gadget.

The cliques have different sizes, which we now specify. Informally, the first and last clique are large enough to enforce the way they attach to the trunk, every other clique is of constant size, with some cliques being slightly larger to enforce constraints corresponding to edges of $E$.
More precisely, the edges within some $V_i$ are checked in the first half of a chain, while the edges between $V_a$ and $V_{b}$ with $a$ the parent of $b$ are checked in the first half of chains encoding at $V_a$ and the second half of the chains encoding $V_b$ (these parts of the chains share the same section of the trunk).
Consider $i\in V_T$, $c\in [1,k]$, $\gamma \in [1, 2N+r+5]$.
 The size of $CC_{i,c,\gamma}$ equals:
\begin{itemize}
    \item $L-7$, if $\gamma=1$ or $\gamma= 2N+r+5$ (i.e., for the first and last clique in the chain.)
    \item 7, if there is an edge $e_j \in E$ with one endpoint in $V_{i,c}$ for which one of the following cases holds:
    \begin{itemize}
        \item $e_j = \{v_{i,c,\alpha}, v_{i,c',\alpha'}\}$, $c\neq c'$, i.e., one endpoint is in $V_{i,c}$, and the other endpoint
        is in another color class in $V_i$, and $\gamma = g(e_j)+1+\alpha$.
        \item $e_j = \{v_{i,c,\alpha}, v_{i',c',\alpha'}\}$, $i'$ is a child of $i$, and
        $\gamma = g(e_j)+1+\alpha$.
        \item $e_j = \{v_{i,c,\alpha}, v_{i',c',\alpha'}\}$, $i$ is a child of $i'$, and
        $\gamma = g(e_j)+N+2+\alpha $.
    \end{itemize}
    \item 6, otherwise
\end{itemize}

\begin{figure}[ht]
    \centering
    \includegraphics{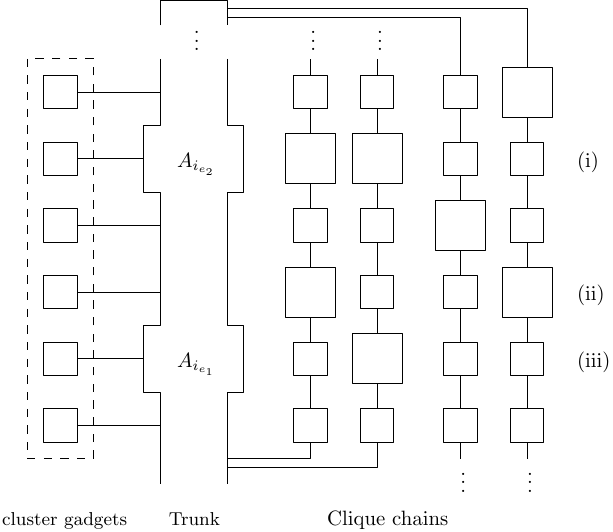}
    \caption{Local structure of the gadgets. (i) If two larger cliques from the clique chains are aligned with the edge constraint of the trunk (encoding that we chose the two endpoints of the edge), then the width is slightly too large locally. (ii) If two larger cliques from clique chains are aligned but not aligned with an edge constraint, this fits the expected width (there cannot be more than 2). (iii) If at most one larger clique from clique chains is aligned with an edge constraint of the trunk, it fits the expected width.}
    \label{fig:gadgets}
\end{figure}

Let $H$ be the resulting graph (see Figures \ref{fig:cliquechains} and \ref{fig:gadgets}).

\begin{claim}
$H$ has tree-partition-width at most $L$, if and only if the given instance of {\sc Tree-chained Multicolored Independent Set}
has a solution.
\end{claim}

\begin{proof}
Suppose we have a solution of the {\sc Tree-chained Multicolored Independent Set}. We define $h(i,c)$ for
each class $V_{i,c}$, such that the vertex $v_{i,c,h(i,c)}$ is the vertex chosen by the solution in this class. Now, we can construct the tree partition as follows.
First, we take the tree $T'$, and for each node $i$ in $T'$, we take a bag initially containing the vertices in $A_i$;
we later add more vertices to these bags in the construction.

Now, we add the chains, one by one. For a chain for $V_{i,c}$, take a new bag that contains $CC_{i,c,1}$, and make
this bag adjacent to the bag of $i$. We add the vertices of $CC_{i,c,h(i,c)+1}$ to the bag of $i$. If $h(i,c)>1$, then
we place the vertices of cliques $CC_{i,c,\alpha+1}$ with $1<\alpha<h(i,c)$ in bags outside the trunk: $CC_{i,c,h(i,c)}$
goes to the bag with $CC_{i,c,1}$; to this bag, we add an adjacent bag with $CC_{i,c,2} \cup CC_{i,c,h(i,c)-1}$; to this, we add an adjacent bag with $CC_{i,c,3} \cup CC_{i,c,h(i,c)-2}$, etc.

Now, add the vertices of $CC_{i,c,h(i,c)+2}$ to the bag of the parent of $i$, and continue this: each next clique
is added to the next parent bag, until we add a clique to the bag of grandparent of $i$ in $T$; name this
node here $i''$. Add a new bag incident to $i''$ and put $CC_{i,c,2N+r+5}$ in this bag (i.e., the last clique of the chain). Similar as at the start of the chain, fold the end of the chain (with possibly some additional new bags) such that a bag containing $CC_{i,c,2N+r+4}$ is adjacent to the bag with $CC_{i,c,2N+r+5}$.

Finally, for each cluster gadget, add two new bags, with the first incident to the bag containing the respective clique.

One easily verifies that this gives a tree partition of $H$. For bags outside the trunk, one easily observes that
the size is at most $L$. Bags $i$ in the trunk contain a set $A_i$, and precisely $p(i)\cdot k$ cliques of the clique chains:
for each path that counts for the bag, and each color class in $[1,k]$, we have one chain with one clique.
Each of these cliques has size six or seven. Now, we can notice that a clique of size 7 corresponds to an edge $e_{j'}$ with endpoint in the class. This clique will be mapped to a node in the trunk that
equals $i_{e_{j'}}$, if and only if this endpoint is chosen; otherwise, the clique will be mapped to
a trunk node with distance less than $r$ to $i_{e_{j'}}$. Thus, there are two cases for a trunk node $i$:
\begin{itemize}
    \item There is no edge $e_j$ with $i=i_{e_j}$. Then, a close observation of the clique chains shows that there are
    at most two clique chains with size 7 mapped to $i$. Indeed, the construction is such that each edge has its private interval, and affects the trunk both between $i$ and its parent $i'$, and between $i'$ and its parent $i''$.
    \item $i=i_{e_j}$. Now, at most one endpoint of $e_j$ is in the solution. The clique chain of the color class of that endpoint can have a clique of size 7 mapped to $i$. The `offset' of the clique chain for the color class of the other
    endpoint is such that there is a clique of size 6 for that chain at $i$.
\end{itemize}
In both cases, the total size of the bag at $i$ is at most $L$. Thus, the width of the tree partition is at most $L$.

\smallskip

Suppose we have a tree partition of $H$ of width $L$.
First, by the use of the cluster gadgets, each clique $A_i$ is in one bag. A bag cannot contain two cliques $A_i$ as
each has size larger than $L/2$. Now, the bags containing $A_i$ form a subtree of the partition tree that is
isomorphic to $T'$. For each clique chain of a class $V_{i,c}$, we have that the first clique $CC_{i,c,1}$
is in a bag adjacent to $i$,
and the last clique $CC_{i,c,2N+r+5}$ is in a bag adjacent to the trunk bag that corresponds to
the grandparent of $i$ in $T$, say $i''$. Each trunk bag from $i$ to $i''$ thus must contain
a clique (of size 6 or 7) from the clique chain of $V_{i,c}$. It follows that each trunk bag
$i'$ contains at least $p(i')\cdot k$ cliques of size at least 6 each of the clique chains.
Now, $|A_{i'}|+ 6p(i')\cdot k \in \{L-1,L-2\}$, and thus we cannot add another clique of a clique
chain to a trunk bag.

For a clique chain of $V_{i,c}$, there is a clique mapped to the trunk bag of $i$.
Suppose $CC_{i,c,h(i,c)}$ is mapped to $i$. We claim that choosing from each $V_{i,c}$
the vertex $v_{i,c,h(i,c)}$ gives an independent set, and thus, we have a solution of
the {\sc Tree Chained Multicolored Independent Set} problem.

The vertices of $CC_{i,c,h(i,c)+2}$ must be mapped to the bag of the parent of $i$, as otherwise,
$i$ will contain an additional clique of size at least 6, and the size of the bag of $i$ will become larger than $L$. By induction, we have that the $\alpha$th parent of $i$, $\alpha\in [1,2N+2]$
contains the vertices of $CC_{i,c,h(i,c)+\alpha+1}$. (Note that the $(2N+2)$nd parent equals the
node corresponding the grandparent of $i$ in $T$.)

We now consider the node $i_{e_j}$ for edge $e_j\in E$. Suppose $e_j = v_{i,c,\alpha}v_{i',c',\alpha'}$.
Without loss of generality, suppose $i=i'$ or $i'$ is a child of $i$; otherwise, switch roles of $i$ and $i'$.
For each endpoint of this edge, if the endpoint is chosen (i.e., $\alpha = h(i,c)$ or $\alpha'= h(i',c')$),
then the corresponding clique chain has a clique of size 7 in the bag $i_{e_j}$.
This can be seen by the following case analysis:
\begin{itemize}
    \item By assumption, $CC_{i,c,h(i,c)+1}$ is placed in the bag of $i$. As each successive clique in the chain is placed in one higher bag along the path from $i$ to the grandparent of $i$ (in $T$), we have that
    $CC_{i,c,h(i,c)+g(e_j)+1}$ is placed in the bag of $i_{e_j}$, as this node is the $g(e_j)$th parent of $i$ in $T'$. This clique has size 7.
    \item If $i'=i$, the same argument shows that $CC_{i',c',h(i',c')+1}$ is a clique of size 7 placed in the bag of $i_{e_j}$.
    \item If $i'$ is a child of $i$ in $T$, then $CC_{i',c',h(i',c')+1}$ is placed in the bag of $i'$.
    Again, each successive clique in the chain of $V_{i',c'}$ is placed in the next parent bag, for all
    nodes on the path from $i'$ to the grandparent of $i'$ in $T$ (which is the parent of $i$ in $T$.)
    This implies that $CC_{i',c',h(i',c')+N+2}$ is placed in the bag of $i$ and
    $CC_{i',c',h(i',c')+N+2+g(e_j)}$ is placed in the bag of $i_{e_j}$; this bag has size 7.
\end{itemize}
Now, if both endpoints were to be chosen, then the size of the bag of $i_{e_j}$ would be larger than $L$: it
contains $A_{i_{e_j}}$ (which has size $L-6kp(i_{e_j})-1$), $kp(i_{e_j})$ cliques from clique chains, of which
all have size at least 6 and two have size 7; contradiction. So, at most one endpoint is chosen, so
choosing vertices $v_{i,c,h(i,c)}$ gives an independent set.
\end{proof}

The maximum degree of a vertex in $H$ is less than $5kL+5L = O(k^2)$:
\begin{itemize}
    \item Vertices in cluster gadgets have maximum degree less than $2L$.
    \item A vertex in a trunk clique $A_i$ of a node $i$ that resulted from a subdivision has maximum degree less than $4L$ as $i$ has two incident nodes, each with a trunk clique of size less than $L$, and there is a cluster gadget attached to $A_i$.
    \item A vertex in a trunk clique $A_i$ of a node $i$ that is an original node (i.e., also in $T$) has
    less than $L$ neighbors in $A_i$, less than $3L$ neighbors in $A_{i'}$ with $i'$ incident to $i$ in $T'$,
    less than $kL$ neighbors of cliques $CC_{i,c,1}$ (one clique of size $L-7$ for each class $c\in [1,k]$),
    less than $4kL$ neighbors of cliques $CC_{i',c,2N+r+5}$ (one clique of size $L-7$ for each
    node of which $i$ is the grandparent in $T$ for each class $c\in [1,k]$), and less than $L$ neighbors in the cluster gadget attached to $A_i$.
\end{itemize}

Finally, we conclude that the transformation can be carried out in $f(k)\log n$ space, thus the result follows.
\end{proof}

From Lemmas~\ref{lemma:tpwinxalp} and Lemma~\ref{lemma:tpwxalphard}, the following result directly follows.

\begin{theorem}
{\sc Tree-Partition-Width} is XALP-complete, when the target width is the parameter, or when the target width plus the maximum degree is the parameter.
\end{theorem}

\begin{theorem}
{\sc Domino Treewidth} is XALP-complete.
\end{theorem}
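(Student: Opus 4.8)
The plan is to prove the two halves of XALP-completeness separately: membership, and XALP-hardness. Membership is obtained by adapting the algorithm of Lemma~\ref{lemma:tpwinxalp} essentially verbatim. A domino tree decomposition differs from a tree-partition only in that a vertex may lie in two adjacent bags rather than exactly one; rooting the decomposition and guessing bags from the root downwards, it suffices to remember the current bag, its parent bag, and the set of vertices of the current bag that are promised to reappear in a child bag -- still only $O(k)$ vertices. The recursion into the connected components of the graph minus (current bag minus its overlap with the parent) is handled co-nondeterministically with the same ``first vertex in a fixed ordering $\sigma$ that cannot reach an earlier representative'' trick, and reachability in undirected graphs is in logspace. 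One must only be slightly careful that a vertex shared between a bag and its child is not double-counted and that its not-yet-covered neighbours are tracked correctly, which adds a constant amount of bookkeeping. The resulting computation tree has polynomial size and the machine uses $f(k)\log n$ space, giving membership in XALP.

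For hardness I would reduce from \textsc{Tree-Chained Multicolor Independent Set} by adapting the reduction behind Lemma~\ref{lemma:tpwxalphard}. The structural facts it exploits transfer to domino tree decompositions: the vertices of a clique still occupy at most two bags, and an appropriately sized cluster gadget still forces a designated clique into a \emph{single} bag -- concretely, attach a clique $C$ of size $m$ to a clique $Z$ by joining $Z$ completely to half of $C$, with $m$ and $|Z|$ tuned so that $|Z|+m/2$ fits in two bags but $|Z|+m$ does not; then $Z$ cannot be split. One then rebuilds the $T'$-shaped trunk of cliques $A_i$ (completely joined along edges of $T'$) together with the long clique chains whose slack encodes the choice of a vertex in each colour class, and recalibrates the bag-size arithmetic at the verification nodes $i_{e_j}$ so that at most one endpoint of $e_j$ can be ``selected''. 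The genuinely new constraint is that a graph of domino treewidth at most $L'$ has maximum degree at most $2L'$; unlike in the tree-partition reduction, where trunk nodes had degree $\Theta(k^2)$, the construction must have maximum degree $O(L')=O(k)$. I would achieve this by spreading out attachments: rather than completely joining each of the $O(k)$ incident clique chains (including grandchild chains) to the whole $\Theta(k)$-vertex clique $A_i$, attach each chain to its own dedicated $O(1)$-size sub-clique of $A_i$. Since $|A_i|=\Theta(k)$ there is room for $\Theta(k)$ such dedicated interfaces, and the forcing arguments only need these bounded-size interfaces.

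The main obstacle is re-verifying both implications of the correspondence under the laxer rule that \emph{every} vertex (not only clique vertices) may lie in two adjacent bags: the overlap structure of a domino tree decomposition is messier than that of a tree-partition, so the inductive argument that successive cliques of a chain are pushed one trunk bag further towards the root, and the counting at the edge-verification nodes, must be redone carefully with the recalibrated sizes and interface cliques. A secondary, routine point is that the transformation is computable in $f(k)\log n$ space, which holds because all gadget sizes depend only on $k$. Finally, XALP-completeness of \textsc{Domino Treewidth} follows by combining membership and hardness, exactly as for \textsc{Tree Partition Width}.
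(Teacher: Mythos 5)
Your membership argument is essentially the paper's: guess bags top-down with a logspace reachability subroutine, keeping a certificate of size $f(k)\log n$ that records the current bag plus some bookkeeping about which current-bag vertices already appeared in the parent and which neighbours have already been covered. That is fine.

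For hardness you diverge from the paper in a way that is more work and leaves real gaps. The paper does \emph{not} redo the \textsc{Tree-Chained Multicolor Independent Set} machinery; it reduces from \textsc{Tree Partition Width} parameterized by \emph{target width plus maximum degree} -- this is precisely why Lemma~\ref{lemma:tpwxalphard} was stated with the combined parameter. Given $G$ of maximum degree $d$ and target $k$, it replaces each vertex $v$ by a clique $C_v$ of size $L=kd+1$, attaches to every $w\in C_v$ a vertex $y_w$ of degree $2M-2$ (where $M=(k+1)L-1$) so that $w$ is forced to appear in exactly two bags, one of which is ``used up'' by $y_w$ and its neighbours, and replaces each edge $vw$ of $G$ by a single vertex $z_e$ joined to $C_v\cup C_w$. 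The bound on degree is automatic since $L,M=k^{O(1)}$. This is a much lighter reduction than rebuilding the trunk/clique-chain construction.

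Two concrete problems with your direct route. First, your ``cluster gadget'' reasoning is off: in a tree decomposition (domino or not), every clique $Z$ is already contained in a single bag by the Helly property -- there is nothing to force. What actually has to be controlled in the domino setting is the \emph{second} bag each vertex may occupy, which the paper handles via the high-degree $y_w$ vertices; your proposed half-joined clique $C$ does not obviously give this control. Second, your fix for the degree bound -- attaching each clique chain to a dedicated $O(1)$-size sub-clique of $A_i$ rather than all of $A_i$ -- would undermine the forcing arguments of the original construction: the proof that successive chain cliques are pushed up the trunk one node at a time, and that bag sizes at the nodes $i_{e_j}$ detect both endpoints being chosen, relies on complete joins so that the chain's cliques have no choice but to sit in the trunk bags. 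Weakening those joins to bounded-size interfaces would have to be re-verified from scratch, and you have not shown it survives. You flag these as ``to be redone carefully,'' but in the domino setting the overlap freedom of vertices between adjacent bags makes the counting genuinely delicate, and the proposal as written does not close the argument. I would recommend the paper's two-step route: prove TPW XALP-hard under the combined parameter, then give the short degree-controlled reduction to Domino Treewidth.
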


\begin{proof}
We first prove membership in XALP.

We use the fact that the maximum degree of the graph is bounded by $2k$ where $k$ is the domino treewidth. We can discard an instance where this condition on the maximum degree is not satisfied in log-space. We first assume that the given graph is connected.

The ``certificate'' used for this computation will be of size $O(k^2+k\log(n))$ and consists of:
\begin{itemize}
    \item The current bag and for each of its vertices whether it was contained in a previous bag or not.
    This requires at most $k + k\log(n)$ bits.
    \item For each neighbor of the bag, whether it was already covered by a bag. This requires $O(k^2)$ bits (We use some fixed ordering on the vertices to decode which bit corresponds to which vertex in log-space).
\end{itemize}

The algorithm works as follows.
Given the current certificate, if all neighbors have been covered we accept.
Otherwise, we guess a new child bag by picking a non-empty subset of $k+1$ vertices among the vertices of the current bag that were contained only in this bag, and the neighbors that were not already covered. We then check that each vertex that is in both the current bag and child bag has all of its neighbors in these two bags. We then guess for each not already covered neighbor of the current bag if it should be covered by the subtree of this child. These vertices are then considered as covered in the current bag certificate. In the child bag certificate, the non covered neighbors are these vertices and the neighbors of the child bag that are not neighbors of the parent bag.
We then recurse with both certificates, and accept if both recursions accept.

This computation uses $O(k^2 + k\log n)$ space and the computation tree has polynomial size.

We can handle disconnected graphs by iterating on component representatives and discarding vertices that are not reachable using the fact that reachability in undirected graphs can be computed in log-space (see membership for \textsc{Tree-Partition-Width} for more details).

Hardness follows from a reduction from {\sc Tree-Partition-Width} when we use the target value and maximum degree as parameter.

Suppose we are given a graph $G=(V,E)$ of maximum degree $d$ and an integer $k$.

Let $L= kd+1$, and $M =(k+1)L -1$. Now, build a graph $H$ as follows.
For each vertex $v\in V$, we take a clique $C_v$ with $L$ vertices.
For each vertex $w\in C_v$, we add a set $S_w$ with $2M-2$ vertices, and make one of the vertices in $S_w$ incident to
$w$ and to all other vertices in $S_w$; call this vertex $y_w$.

For each edge $e=vw\in E$, we add a vertex $z_e$, and make $z_e$ incident to all vertices in $C_v$ and all vertices
in $C_w$. Let $H$ be the resulting graph.

\begin{claim}
$G$ has tree-partition-width at most $k$, if and only if $H$ has domino treewidth at most $M-1$.
\end{claim}

\begin{proof}
Suppose $H$ has domino treewidth at most $M-1$. Suppose $(\{X_i|i\in I\}, T=(I,F))$ is a domino tree decomposition of $H$
of width at most $M-1$, i.e., each bag has size at most $M$.

First, consider a vertex $w$ in some $C_v$. The vertex $y_w$ has degree $2M-2$, which implies that there are two adjacent bags
that each contain $y_w$, and $M-1$ neighbors of $y_w$. One of these bags contains $w$.

For each $v\in V$, there must be at least one bag that contains all vertices of $C_v$, by a well known property of
tree decompositions. There can be also at most one such bag, because each vertex $w\in C_v$ is in another bag
that is filled by $w$, $y_w$, and $M-2$ other neighbors of $y_w$.

For each $i\in I$, let $Y_i \subseteq V$ be the set of vertices $v\in V$ with $C_v \subseteq X_i$. We claim that
$(\{Y_i|i\in I\}, T=(I,F))$ is a tree partition of $G$ of width at most $k$ (some bags are empty). First, by the discussion above,
each vertex $v\in V$ belongs to exactly one bag $Y_i$. Second, as $M < (k+1)L$, each $Y_i$ has size at most $k$.
Third, if we have an edge $e=vw \in E$, then $z_e$ is in the bag that contains $C_v$, and $z_e$ is in the bag
that contains $C_w$. As $z_e$ is in at most two bags, these two bags must be the same, or adjacent, so
in $(\{Y_i|i\in I\}, T=(I,F))$, $v$ and $w$ are in the same set $Y_i$ or in sets $Y_i$ and $Y_{i'}$ with $i$ and $i'$ adjacent in the decomposition tree $T$.

\smallskip

Now, suppose $G$ has tree-partition-width at most $k$, say with tree partition $(\{Y_i|i\in I\}, T=(I,F))$.
For each $i\in I$, let $X_i = \bigcup_{v\in Y_i} C_v \cup \{ z_e ~|~ \exists v\in Y_i, w\in V: e = vw\}$.
For each $v\in V$, $w\in C_v$, add two bags, one containing $w$, $y_w$, and $M-2$ other neighbors of $y_w$, and the other
containing $y_w$ and the remaining $M-1$ neighbors of $y_w$, and make these bags adjacent, with the first adjacent to
the bag in $T$ that also contains $w$. One easily verifies that this results in a domino tree decomposition of $H$
with maximum bag size at most $M$, hence $H$ has domino treewidth at most $M-1$.
\end{proof}

It is easy to see that $H$ can be constructed from $G$ with $f(k)\log |V|$ memory. So, the hardness of
{\sc Domino Treewidth} follows from the previous lemma.
\end{proof}

\section{Tree-cut width and the stability of tree-partition-width}\label{section:tcw}
In this section, we consider the relation of the notion of tree-cut width
with (stable) tree-partition-width. Tree-cut width was introduced by Wollan~\cite{Wollan15}. Ganian et al.~\cite{Ganian0S15} showed that several problems that are $W[1]$-hard with treewidth as parameter are fixed parameter tractable with tree-cut width as parameter.

We begin by defining the \emph{tree-cut width} of a graph $G=(V,E)$.
A tree-cut decomposition $(T,\mathcal{X})$ consists of a rooted tree $T$ and a family $\mathcal{X}$ of \emph{bags} $(X_i)_{i \in V(T)}$ which form a near partition of $V(G)$ (i.e. some bags may be empty, but nonempty bags form a partition of $V(G)$). For $t \in V(T)$, we denote by $e(t)$ the edge of $T$ incident to $t$ and its parent. For $e \in E(T)$, let $T_1,T_2$ denote the two connected components of $T-e$. We denote by $\cut(e)$ the set of edges with an endpoint in both of $\bigcup_{i \in V(T_1)} X_i$ and $\bigcup_{i \in V(T_2)} X_i$. The adhesion of $t \in V(T)$ is $\text{adh}(t)=|\cut(e(t))|$, and its torso-size is $\text{tor}(t)=|X_t|+b_t$ where $b_t$ is the number of edges $e \in E(T)$ incident to $t$ such that $|\cut(e)| \geq 3$. The width of the decomposition is then $\max_{t \in V(T)} \{\text{adh}(t),\text{tor}(t)\}$. Note that edges are allowed to go between vertices that are not in the same bag.
The tree-cut width of a graph is the minimal width of tree-cut decomposition. When $|\cut(e)|\geq 3$, the edge $e$ is called \emph{bold}, and otherwise, $e$ is called \emph{thin}. When $\text{adh}(t)\leq 2$, node $t$ is called \emph{thin}, otherwise it is called \emph{bold}.
In \cite{Ganian0S15}, it is shown that a tree-cut decomposition can be assumed to be \emph{nice}, meaning that if $t \in V(T)$ is thin then $N(Y_t) \cap \left( \bigcup_{b \text{ sibling of }t} Y_b \right) = \varnothing$, where $Y_i$ is the union of $X_j$ for $j$ in the subtree of $i$.

Wollan~\cite{Wollan15}  shows that having bounded tree-cut width is equivalent to only having wall immersions of bounded size.

\begin{observation}\label{obs:tcwK3n}
    $\tpw(K_{3,n}) \leq 3$ and $\tcw(K_{3,n})=\Theta(\sqrt{n})$.
\end{observation}

\begin{proof}
    Let $(A,B)$ denote the bipartition of $K_{3,n}$ with $|A|=3$.

    A tree-partition of $K_{3,n}$ achieving width 3 is the partition with $A$ in one bag and every other vertex in a separate bag.

    It is easy to see that $K_{3,n}$ has a tree-cut decomposition of width $O(\sqrt{n})$: we place $A$ as the center of a star, with about $\sqrt{n}$ leaves of size $\sqrt{n}$.
    Now we consider an arbitrary tree-cut decomposition of $K_{3,n}$ achieving width $O(\sqrt{n})$. We first note that the vertices of $A$ cannot be split into separate bags because if they were, any edge of the decomposition on the path between such bags would have adhesion at least $n/2$.
    Hence, there is a bag containing $A$ and we may root the tree of our decomposition in this bag. Each subtree of the decomposition will contribute to the torso-size, and each vertex will contribute linearly to the adhesion of the edge from its subtree to the root. Since we assume the width to be $O(\sqrt{n})$, we must have at most $O(\sqrt{n})$ vertices per subtree. Consequently, there must be $\Omega(\sqrt{n})$ subtrees, so the torso-size of the root is $\Omega(\sqrt{n})$.
\end{proof}

Note that any graph on $n$ vertices with maximum degree $3$ can be immersed in $K_{3,n}$. In particular, this works for any wall. The lower bound given by Wollan shows that the tree-cut width of $K_{3,n}$ is $\Omega(n^{\frac{1}{4}})$.

We denote by $\underline{\tpw}(G)$ the minimum tree-partition-width over subdivisions of $G$ (stable tree-partition-width), and by $\overline{\tpw}(G)$ the maximum tree-partition-width of subdivisions of $G$.
We will show that both are polynomially tied to the tree-partition-width of $G$, which proves useful in polynomially bounding tree-partition-width by tree-cut width due to the following lemma.

\begin{lemma}\label{lem:tpw-tcw}
    $\underline{\tpw} = O(\tcw^2)$.
\end{lemma}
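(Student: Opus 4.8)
The plan is to take a tree-cut decomposition $(T,\mathcal{X})$ of $G$ of width $k=\tcw(G)$, which we may assume is nice, and build from it a tree-partition of a suitable subdivision of $G$. The starting point is that $T$ itself already partitions $V(G)$ into the bags $X_t$. The two obstructions to $(T,(X_t)_t)$ being a tree-partition of $G$ are: (i) a bag $X_t$ may be too large — but the torso-size bound only controls $|X_t|+b_t\le k$, so in fact $|X_t|\le k$ already, which is fine; and (ii) an edge $uv\in E(G)$ may run between bags $X_s,X_t$ with $s,t$ non-adjacent in $T$. To fix (ii) we subdivide each such edge and route the subdivision vertices along the path from $s$ to $t$ in $T$, placing one new vertex into each bag along that path. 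This is exactly the standard way one turns a ``laminar''/tree-structured edge routing into a tree-partition, and after doing it every (subdivided) edge connects consecutive bags along $T$.

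The quantitative heart of the argument is bounding how many subdivision vertices land in a single bag $X_t$, i.e. bounding the number of routed edge-paths that pass through node $t$. An edge-path through $t$ either enters $t$ through $e(t)$ (the edge to the parent) or through an edge to a child of $t$; so the number of paths through $t$ is at most $\mathrm{adh}(t)$ plus the sum of $|\cut(e)|$ over edges $e$ from $t$ to its children. The adhesion term is at most $k$. For the child edges, split them into thin edges ($|\cut(e)|\le 2$) and bold edges ($|\cut(e)|\ge 3$, of which there are at most $b_t\le k$). Bold child edges contribute at most $\sum |\cut(e')|$ where the sum is over bold edges incident to $t$; here I would use the structure of nice tree-cut decompositions together with the width bound to argue this is $O(k^2)$ — each bold edge has cut size at most $k$ and there are at most $k$ of them. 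Thin child edges are the delicate case: there can be many of them, but for a thin child $c$, niceness gives $N(Y_c)\cap\bigcup_{b\text{ sibling of }c}Y_b=\varnothing$, so the (at most $2$) edges in $\cut(e(c))$ go either into $X_t$ or up out of $t$; the ones going up out of $t$ are already counted in $\mathrm{adh}(t)$, and the ones landing in $X_t$ can be charged to vertices of $X_t$, each of which (having bounded... ) — more carefully, each vertex of $X_t$ receives at most, say, its degree many such, but we instead charge globally: the total number of thin edges whose cut lands inside $X_t$ is $O(k\cdot|X_t|)=O(k^2)$ after observing each such edge can be charged to the pair (endpoint in $X_t$, the at most one other cut-edge of that thin node) — this is where one must be slightly careful, but it is the same counting Ganian et al.\ use. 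Altogether the number of paths through $t$ is $O(k^2)$, so the new bag has size $|X_t|+O(k^2)=O(k^2)$, and the resulting tree-partition of the subdivision has width $O(k^2)$, giving $\underline{\tpw}(G)=O(\tcw(G)^2)$.

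The main obstacle I expect is exactly the bookkeeping in the previous paragraph: getting a clean $O(k^2)$ bound on the number of thin child-edge cuts that are ``absorbed'' into $X_t$ rather than passing through, which is what stops the count from blowing up with the (unbounded) number of thin children. Niceness of the decomposition is the key tool here: it prevents thin subtrees hanging off $t$ from talking to each other, forcing their boundary edges to terminate in $X_t$ or to continue upward through $e(t)$, both of which are controllable. Once that lemma is in hand, the rest — choosing which subdivision to use (subdivide each edge enough times so that a distinct vertex is available for every path position, which only needs each edge subdivided $\le |V(T)|$ times, and $|V(T)|$ can be taken linear in $|V(G)|$ for a nice decomposition), checking the three tree-partition axioms, and assembling the width bound — is routine.
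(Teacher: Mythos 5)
Your overall approach is the same as the paper's: take a nice tree-cut decomposition of width $k$, subdivide every edge according to its tree-distance, and route the subdivision vertices along the tree path, then bound how many subdivision vertices can land in any one bag. You also correctly identify thin children as the delicate case. However, the resolution you sketch for the thin case contains a genuine error.

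You propose to bound the number of edges going from thin-child subtrees into $X_t$ by $O(k\cdot|X_t|)=O(k^2)$ via a charging scheme. This bound is simply false: a node $t$ can have arbitrarily many thin children (think of a star $K_{1,n}$, where the centre sits alone in the root bag and each leaf is a thin child with adhesion $1$), so the number of thin-child cut-edges with an endpoint in $X_t$ can be $\Theta(n)$ with $k=O(1)$. Your proposed charging to pairs (endpoint in $X_t$, the other cut-edge of the thin node) does not save this, both because a thin node with adhesion $1$ has no ``other'' cut-edge to charge to, and because the set of possible second coordinates (cut-edges of thin children) is itself unbounded. No counting argument can bound these edges by $O(k^2)$ because they genuinely are not bounded.

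The correct observation, which the paper uses, is that these edges need not be counted at all: an edge from $Y_c$ to $X_t$ \emph{terminates} at $t$ rather than passing through it, so with the natural convention of placing subdivision vertices only at the \emph{interior} nodes of the tree path between the two endpoint bags (i.e.\ subdividing $d_T(u,v)-1$ times, or equivalently never placing a subdivision vertex in an endpoint's own bag), such edges contribute zero new vertices to $X_t$. The only edges that force a subdivision vertex into $X_t$ are those whose tree path passes strictly through $t$: (i) edges in $\cut(e(t))$, giving at most $k$; and (ii) edges between two distinct child subtrees, which by niceness can only occur between \emph{bold} children, of which there are at most $k$ each with cut at most $k$, giving $O(k^2)$. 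Together with the $|X_t|\le k$ original vertices this gives the desired $O(k^2)$ bound. So the skeleton of your proof is fine, but you need to replace the incorrect ``charge the terminating edges'' step with the observation that terminating edges do not need to be charged.
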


\begin{proof}
Consider a nice tree-cut decomposition $(T,\mathcal{X})$ of a graph $G$ of width $k$.
We will construct a tree-partition for a subdivision of $G$. Note that the bags are already disjoint, but some edges are not between neighboring bags of $T$.

Each edge $uv$ of $G$ is subdivided $d_T(u,v)$ times, which is the distance between the nodes containing $u$ and $v$ respectively in their bag (recall that the bags form a near partition). We then add the vertices of the subdivided edge in the bags on the path in the decomposition between the bags containing their endpoints.
This is sufficient to make the decomposition a tree-partition $T'$ of a subdivision of $G$.

We now argue that $T'$ has a width of $O(k^2)$. A bag $Y_t$ of $T'$ contains at most:
\begin{itemize}
    \item $k$ initial vertices
    \item $\max(2,k)$ vertices from subdivisions of edges in $\cut(e(t))$ accounting for edges going from a child of $t$ to an ancestor of $t$
    \item $k^2/2$ vertices from edges that are between bold children of $t$. For $u,v$ children of $T$, there are only edges between $T_u$ and $T_v$ if both are bold. There are at most $k$ bold children in a tree-cut decomposition. There are also at most $k$ such edges incident to $T_u$ for any child $u$ of $T$, and we may divide by 2 since each edge will be counted twice this way. We stress that thin children do not contribute because the tree-cut decomposition is nice.
\end{itemize}

Hence, $\underline{\tpw}(G) \leq k + (k+2) + k^2/2 = O(\tcw(G)^2)$
\end{proof}

Next, we consider the parameters $\tpw$ and $\overline{\tpw}$.

\begin{lemma}\label{lem:tpw-fold-subdivisions}
    $\tpw \leq \overline{\tpw} \leq \tpw(\tpw+1)$
\end{lemma}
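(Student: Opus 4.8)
The lower bound $\tpw(G) \le \overline{\tpw}(G)$ is immediate, since $G$ is itself a subdivision of $G$ (with zero subdivisions), so the maximum over all subdivisions is at least the value at $G$. The content is the upper bound, and the plan is to show that starting from an optimal tree-partition of $G$ of width $k := \tpw(G)$, one can accommodate \emph{any} subdivision without blowing the width past $k(k+1)$. The key observation is that a subdivided edge $uv$ is replaced by a path, and we have freedom in where we route this path through the tree; the only hard constraint is that the new vertices must lie in bags forming a connected subpath between $i(u)$ and $i(v)$ in $T$, and since $i(u)$ and $i(v)$ are equal or adjacent, that subpath is a single node or a single edge of $T$.

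The main idea is to \emph{insert new nodes into the tree} rather than overloading existing bags. Fix a tree-partition $(T,(B_i))$ of $G$ of width $k$. For each edge $uv$ of $G$ with $i(u) = i(v) = i$, a subdivision path $u = x_0, x_1, \dots, x_\ell, x_{\ell+1} = v$ must have all of $x_1,\dots,x_\ell$ placed in bags that are either $B_i$ or neighbours of $B_i$; the cleanest choice is to create a fresh pendant path of new bags hanging off node $i$, each new bag containing the $x_j$'s of that one subdivided edge. For an edge $uv$ with $i(u)i(v) \in E(T)$, subdivide similarly but route the internal vertices through new nodes subdividing the tree-edge $i(u)i(v)$ — or again, more simply, hang all but (at most) one new vertex off one endpoint. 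Crucially, I would try to route each subdivided edge through at most \emph{one} new vertex that needs to sit in an original bag, while the remaining subdivision vertices go into newly created bags dedicated to that single edge. With this scheme, each original bag $B_i$ gains at most one extra vertex per edge of $G$ incident to a vertex of $B_i$ and leaving to an adjacent original bag (or staying inside); since $|B_i| \le k$ and each vertex of $B_i$ can send such contributions, a careful count should give at most $k$ original vertices plus at most $k$ added vertices per bag, i.e. width at most... — and here the bound $\tpw(\tpw+1)$ suggests the intended accounting is: $k$ original vertices, and for each of them at most $k$ subdivision-vertices that are forced into this bag (one per "direction" or per adjacent bag that a neighbour lies in). So the target is $|B_i| + |B_i|\cdot k \le k + k^2 = k(k+1)$... wait, that overshoots by the right shape — the correct reading is $k$ original plus $k^2$ added is $k(k+1)$, matching the claim exactly, so the accounting must be: each of the $\le k$ vertices $v \in B_i$ contributes at most $k$ subdivision-vertices to $B_i$ (at most one for each of the $\le k$ edges at $v$ that could be forced through $B_i$ — and degree within a fixed bag-neighbourhood is what bounds this, not global degree).

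The step I expect to be the real obstacle is making precise \emph{why} at most one subdivision vertex of each subdivided edge is genuinely forced into an original bag, and why the count of such forced vertices in $B_i$ is at most $k|B_i|$ rather than unbounded (a priori, $v \in B_i$ could have arbitrarily many neighbours $u$ with $i(u) = i(v)$, each contributing a subdivision vertex that "wants" to be near $B_i$). The resolution should be: when $i(u) = i(v) = i$, route the \emph{entire} subdivision path (all $\ell$ internal vertices) into brand-new pendant bags off node $i$, so zero vertices are forced into $B_i$; when $i(u)i(v)$ is an edge of $T$, route all internal vertices into new nodes subdividing that tree-edge, again forcing nothing into original bags. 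If this fully-external routing works — and it should, since connectivity of the placement subpath is the only requirement — then original bags never grow at all, giving width $k$, which is too strong and contradicts the stated bound; so the intended construction must be the more constrained one where, to keep $T'$ a tree-partition of a \emph{specific} subdivision with a \emph{prescribed} number of subdivisions per edge, one sometimes cannot add enough new nodes and must reuse original bags. I would therefore structure the proof around: (1) showing any subdivision's tree-partition can be built from $(T,(B_i))$ by replacing each original tree-node $i$ by a small tree of nodes (a "blow-up"), (2) bounding how many subdivision-vertices land in the central bag of that small tree, and (3) the counting argument $|B_i|(1 + \deg\text{-in-neighbourhood}) \le k(k+1)$. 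The delicate point throughout is degree: it is the number of neighbours of $v$ lying in $B_{i(v)}$ or its $T$-neighbours that matters, and this is exactly bounded by the total size of those bags, hence by $(k+1)k$ collectively — which is likely how the $\tpw(\tpw+1)$ bound genuinely arises.
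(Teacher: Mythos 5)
Your plan is essentially the right one and matches the paper's construction in outline --- fold each subdivision path into a fresh pendant branch of bags of size $2$, and arrange for at most one subdivision vertex per subdivided $G$-edge to land in an original bag, then count $k + k^2 = k(k+1)$. But as written the argument has a concrete gap that you yourself flag and do not close. You claim that when $i(u)i(v)$ is a tree edge you can ``route all internal vertices into new nodes subdividing that tree-edge, again forcing nothing into original bags,'' and you remark this ``should'' work since the only constraint is connectivity of the placement path. It does not work, and the reason is precisely the point the proof hinges on: subdividing the tree edge $i(u)i(v)$ destroys adjacency between the nodes $i(u)$ and $i(v)$ in the new tree, so \emph{every other} $G$-edge $u'v'$ with $u'\in B_{i(u)}$, $v'\in B_{i(v)}$ would now have its endpoints in non-adjacent bags --- fatal if $u'v'$ is not subdivided, and problematic whenever $u'v'$ is subdivided a different number of times. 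Since $\overline{\tpw}$ quantifies over arbitrary subdivisions, including ones that leave some edges untouched, you cannot insert nodes on an existing tree edge. This is exactly why one extra vertex per crossing edge is forced into an original bag, and why the bound is $k(k+1)$ rather than $k$.

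The fix (the paper's actual construction) is: for an edge $uv$ with $u$ in the child bag and $v$ in the parent bag, place the \emph{penultimate} subdivision vertex (the one adjacent to $v$) into the child bag $B_{i(u)}$, and fold the remaining interior vertices into a fresh branch of width $2$ hanging off $B_{i(u)}$ (bags $\{a_1,a_{\ell-1}\}, \{a_2,a_{\ell-2}\},\dots$); this preserves the original tree edge $i(u)i(v)$. Each original bag then gains at most one vertex per $G$-edge between it and its parent bag, and since both bags have at most $k$ vertices there are at most $k^2$ such edges, giving width at most $k+k^2$. Your closing count (``bounded by the total size of those bags, hence by $(k+1)k$'') also does not go through as stated: a node can have many tree-neighbours, so summing over all adjacent bags is not a priori bounded; the correct count restricts attention to edges toward the \emph{parent} only, which is what makes the $k^2$ bound tight. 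Without identifying why the tree edge cannot be subdivided and without the penultimate-vertex device, the proof is incomplete.
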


\begin{proof}
The lower bound is immediate. We prove the upper bound.

Consider a graph $G$ with a tree-partition $(T,\mathcal{X})$ of width $k$, and a subdivision $G'$ of $G$.
We construct a tree-partition $(T',\mathcal{X}')$ of $G'$ of width at most $k(k+1)$.

We root the decomposition $T$ arbitrarily.

Suppose that $u,v$ are in the same bag of $T$ and the edge $uv$ was subdivided to form the path $u, a_1, \dots,$ $a_\ell, v.$ We add the vertices $a_i$ in new bags containing, $\{a_1,a_\ell\}$, $\{a_2,a_{\ell-1}\},\dots$ which corresponds to a new branch of the decomposition of width at most $2$.

Consider next the vertices obtained by subdividing an edge $uv$ for $u$ in the child bag of the bag of $v$.
If a subdivided edge was between two vertices of adjacent bags, we order the vertices of the path obtained by subdividing the edge from the vertex in the child bag to the vertex in the parent bag. We add the penultimate vertex to the child bag, and fold the remaining vertices of the path in a fresh branch of the decomposition of width at most 2 similarly to the previous case.

This gives a tree partition $T'$.
Bags of $T'$ that are not in $T$ have size at most $2$, and, to bags of $T'$ that are also in $T$, we added at most $k^2$ vertices (at most one per edge between the bag and its parent). We conclude that $T'$ has width at most $k(k+1)$.
\end{proof}

A result of Ding and Oporowski \cite{DingO96} shows that tree-partition-width is tied to a parameter $\gamma$ that is (by design) monotonic with respect to the topological minor relation.
We adapt their proof to derive the following stronger result.

\begin{theorem}\label{thm:tpw-obstructions}
    There exists a parameter $\gamma$ which is polynomially tied to the tree-partition-width, and is monotonic with respect to the topological minor relation.
    More precisely, $\tpw = \Omega(\gamma)$ and $\tpw = O(\gamma^{24})$.
\end{theorem}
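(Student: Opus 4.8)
The plan is to adapt the Ding--Oporowski construction of a topological-minor-monotone lower bound for tree-partition-width. First I would define, for a graph $G$, the parameter $\gamma(G)$ to be the largest $r$ such that $G$ contains a subgraph that is a subdivision of some graph $H$ which is ``$r$-connected-like'' in the sense needed for their obstruction; concretely, one takes $\gamma(G)$ to be the maximum, over all topological minors $H$ of $G$, of a quantity measuring how robustly $H$ fails to admit a narrow tree-partition --- for instance, the largest $r$ for which $H$ has a subgraph on $\Theta(r)$ vertices in which every vertex has $\Omega(r)$ internally disjoint paths to the rest of the subgraph, or equivalently a ``bramble-like'' or ``linkedness'' witness. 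Monotonicity under topological minors is then immediate by construction: if $H$ is a topological minor of $G$, every topological minor of $H$ is a topological minor of $G$, so $\gamma(H) \le \gamma(G)$.

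\textbf{The easy direction $\tpw = \Omega(\gamma)$.} For the lower bound I would show that a witness for $\gamma(G) = r$ forces every tree-partition of $G$ to have width $\Omega(r)$. This is essentially Claim~\ref{claim:large-cuts} run in reverse together with a counting argument: if many vertices pairwise have large minimum separators (or more generally the linkedness witness holds), then in any tree-partition of width $w$ these vertices must cluster into a bounded number of bags, forcing $w = \Omega(r)$. Since subdividing an edge never decreases $\tpw$ by more than the factor in Lemma~\ref{lem:tpw-fold-subdivisions} and $\gamma$ is defined via topological minors (i.e.\ via subdivisions of subgraphs), I would combine $\tpw(G) \ge \tpw(\text{subgraph}) $, the behaviour of $\tpw$ under subdivision (Lemma~\ref{lem:tpw-fold-subdivisions}, used to pass between $G$ and the subdivided witness), and the direct separator argument to conclude $\tpw(G) = \Omega(\gamma(G))$.

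\textbf{The hard direction $\tpw = O(\gamma^{24})$.} This is where the Ding--Oporowski argument is adapted and where the work lies. The strategy: given $G$ with $\gamma(G) \le r$, produce a tree-partition of width $\mathrm{poly}(r)$. I would first use the absence of a large linkedness/obstruction witness to bound $\tw(G)$ and the relevant ``local degree'' quantities polynomially in $r$ (an obstruction of the $\gamma$-type can be extracted from large treewidth combined with high connectivity, by the same extraction lemmas Ding and Oporowski use). Then, along the lines of the approximation scheme in Section~\ref{section:approx}, one contracts the highly-linked clusters (the sets forced together, as in the $b$-reduction), argues the resulting graph has bounded treewidth \emph{and} bounded maximum degree --- the degree bound is exactly where a residual large-degree vertex would yield a $\gamma$-obstruction, contradicting $\gamma(G) \le r$ --- and finally applies Wood's Lemma~\ref{lem:wood} (via Corollary~\ref{corol:combine-blocks}) to get a tree-partition of width $O(\tw \cdot \Delta) = \mathrm{poly}(r)$, then expands the clusters back at a further polynomial cost, mirroring Steps 2--5 of the approximation algorithm. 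Bookkeeping the exponents through the extraction lemmas, the contraction, Wood's bound, and the expansion is what produces the concrete exponent $24$.

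\textbf{Main obstacle.} The crux is choosing the definition of $\gamma$ so that it is simultaneously (a) manifestly topological-minor-monotone, (b) a genuine lower bound on $\tpw$ with only polynomial loss, and (c) extractable with only polynomial loss from the two features (large treewidth, large local connectivity) that obstruct a narrow tree-partition. Ding and Oporowski's original parameter achieves a tied relationship but with unspecified (and large) functions; the delicate part is redoing their case analysis --- in particular the step that finds a bounded-size highly-connected substructure inside a graph of large treewidth and bounded degree --- with explicit polynomial bounds at each stage, so that the composition of all the polynomial losses closes at degree $24$ rather than blowing up super-polynomially.
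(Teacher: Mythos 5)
Your high-level plan matches the paper's: define a topological-minor-monotone parameter $\gamma$, get the lower bound $\tpw = \Omega(\gamma)$ by observing $\gamma$'s obstruction structures themselves have large $\tpw$, and get the upper bound by running a version of Steps 2--5 of the approximation scheme (the $b$-reduction $G^b$, degree and treewidth bounds on the reduced graph $H$, Wood's Lemma~\ref{lem:wood}, then expansion). You also correctly identify where the difficulty lies.

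But you never resolve that difficulty, and the gap is genuine. The theorem asserts a \emph{specific} polynomial bound ($\gamma^{24}$), and the whole content is in the precise choice of $\gamma$. The paper's $\gamma$ is the maximum of three distinct quantities, each of which is a topological-minor obstruction and each of which is used for exactly one of the three bounds your upper-bound sketch requires: (i) the wall number $\wn(G)$, fed into the grid-minor theorem of Chuzhoy and Tan (Lemma~\ref{lem:wn}) to bound $\tw(H) = O(b)$ (Claim~\ref{claim:tw-red}); (ii) $\gamma_1(G)$, the largest $m$ such that $G$ contains an $m$-branching-fan as a topological minor, used to bound the degree within blocks of $H$ by $bm^3$ (Claim~\ref{claim:deg-red}); and (iii) $\gamma_2(G)$, the largest $m$ such that $G$ contains an $m$-multiple of a tree of order $m$, used to bound the components of $G^b$ by size $m$ (Claim~\ref{claim:small-comp}). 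Your single ``linkedness/bramble-like witness'' — vertices on a bounded subgraph each having $\Omega(r)$ internally disjoint paths to the rest — is at best a stand-in for the third item; it does not subsume the wall obstruction (a wall has no highly linked pair of vertices yet forces large $\tpw$ via treewidth), nor the branching-fan (which is about one vertex of high degree, not pairwise linkedness). Without all three obstruction types in $\gamma$, the upper-bound argument fails: a graph with no ``linkedness witness'' can still have a bounded $G^b$-reduction of unbounded treewidth, or of unbounded block degree, and Wood's lemma then gives nothing. So the proposal, while pointed in the right direction, does not actually prove the theorem.

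A smaller point: your lower-bound sketch (``run Claim~\ref{claim:large-cuts} in reverse'' and pass through Lemma~\ref{lem:tpw-fold-subdivisions}) is more convoluted than needed. The paper simply notes that each of the three obstruction families has $\tpw = \Omega(m)$, that $\tpw$ is subgraph-monotone, and that subdividing does not destroy these lower bounds; you should avoid invoking Lemma~\ref{lem:tpw-fold-subdivisions} here since it only bounds $\overline{\tpw}$ from above and gives nothing for $\underline{\tpw}$ in the direction you need.
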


We deduce the following statement.

\begin{corollary}\label{coro:stable-tpw}
    $\tpw$, $\overline{\tpw}$, and $\underline{\tpw}$ are polynomially tied.
\end{corollary}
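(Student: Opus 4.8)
The plan is to chain together the three results immediately preceding the corollary. We already have, from Lemma~\ref{lem:tpw-fold-subdivisions}, that $\tpw \le \overline{\tpw} \le \tpw(\tpw+1)$, so $\tpw$ and $\overline{\tpw}$ are polynomially tied with essentially no further work. It remains to handle $\underline{\tpw}$, and this is where Theorem~\ref{thm:tpw-obstructions} does the heavy lifting.

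First I would argue the easy inequality $\underline{\tpw}(G) \le \tpw(G)$: the graph $G$ is (trivially) a subdivision of itself, so the minimum tree-partition-width over all subdivisions of $G$ is at most $\tpw(G)$. The substance is the reverse direction, bounding $\tpw(G)$ polynomially in terms of $\underline{\tpw}(G)$. Here I invoke the parameter $\gamma$ from Theorem~\ref{thm:tpw-obstructions}, which is monotone under topological minors and polynomially tied to $\tpw$ ($\tpw = \Omega(\gamma)$ and $\tpw = O(\gamma^{24})$). The key observation is that if $G'$ is any subdivision of $G$, then $G$ is a topological minor of $G'$ (contract the subdivision vertices back), so $\gamma(G) \le \gamma(G')$ by topological-minor monotonicity. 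Applying this with $G'$ a subdivision achieving $\tpw(G') = \underline{\tpw}(G)$, we get
\[
\gamma(G) \le \gamma(G') = O(\tpw(G')) = O(\underline{\tpw}(G)),
\]
and then
\[
\tpw(G) = O(\gamma(G)^{24}) = O(\underline{\tpw}(G)^{24}).
\]
Combined with $\underline{\tpw} \le \tpw$, this shows $\tpw$ and $\underline{\tpw}$ are polynomially tied.

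Finally I would combine the two ties: $\tpw$ is polynomially tied to $\overline{\tpw}$ (via Lemma~\ref{lem:tpw-fold-subdivisions}) and to $\underline{\tpw}$ (via the above), and being polynomially tied is transitive (composition of polynomials is a polynomial), so all three of $\tpw$, $\overline{\tpw}$, $\underline{\tpw}$ are pairwise polynomially tied. I do not anticipate a genuine obstacle here: the only slightly delicate point is making sure the direction of topological-minor monotonicity is used correctly — we need $\gamma$ to not increase when we \emph{suppress} subdivision vertices (i.e.\ $G$ is a topological minor of its subdivision, not the other way around), which is exactly what "monotonic with respect to the topological minor relation" gives us. Everything else is bookkeeping with the polynomial bounds.
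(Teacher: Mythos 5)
Your proposal is correct and follows essentially the same route as the paper: Lemma~\ref{lem:tpw-fold-subdivisions} for $\overline{\tpw}$, the trivial inequality $\underline{\tpw}\le\tpw$, and then the chain $\gamma(G)\le\gamma(H)\le O(\tpw(H))$ with $\tpw(G)=O(\gamma(G)^{24})$ applied to a subdivision $H$ attaining $\underline{\tpw}(G)$. The paper states the chain slightly more compactly, but the key use of topological-minor monotonicity of $\gamma$ (with $G$ being a topological minor of its subdivision) is exactly the same.
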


\begin{proof}
    Lemma~\ref{lem:tpw-fold-subdivisions} shows that $\tpw$ and $\overline{\tpw}$ are polynomially tied.
    Note that, by definition, $\underline{\tpw} \leq \tpw$.
    Then, for a fixed graph $G$, consider $H$ a subdivision of $G$ achieving $\tpw(H)=\underline{\tpw}(G)$. Then
    $\tpw(G)^{O(1)} \leq \gamma(G) \leq \gamma(H) \leq \tpw(H)^{O(1)}=\underline{\tpw}(G)^{O(1)}$. The first and last inequalities come from the fact that $\gamma$ and $\tpw$ are polynomially tied. The middle inequality is because $\gamma$ is monotonic with respect to the topological minor relation.
\end{proof}

From Lemma~\ref{lem:tpw-tcw} and Corollary~\ref{coro:stable-tpw}, we deduce the following theorem.

\begin{theorem}\label{thm:tpw-tcw}
    The parameter tree-partition-width is polynomially bounded by the parameter tree-cut width. In other words, we show that there exist constants $C,c>0$ such that for any graph $G$, $\tpw(G)\leq C\tcw(G)^c$.
\end{theorem}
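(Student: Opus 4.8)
The plan is to chain together the two ingredients that precede this theorem. First recall that Lemma~\ref{lem:tpw-tcw} gives $\underline{\tpw}(G) = O(\tcw(G)^2)$, i.e.\ there is a subdivision $H$ of $G$ whose tree-partition-width is $O(\tcw(G)^2)$. Second, Corollary~\ref{coro:stable-tpw} tells us that $\tpw$ and $\underline{\tpw}$ are polynomially tied, so $\tpw(G) = \underline{\tpw}(G)^{O(1)}$. Composing these, $\tpw(G) = \underline{\tpw}(G)^{O(1)} = \bigl(O(\tcw(G)^2)\bigr)^{O(1)} = \tcw(G)^{O(1)}$, which is exactly the claimed polynomial bound; unravelling the exponents from Theorem~\ref{thm:tpw-obstructions} ($\tpw = O(\gamma^{24})$, $\tpw = \Omega(\gamma)$) and Lemma~\ref{lem:tpw-tcw} gives an explicit constant $c$ (of the order of $48$ before optimisation), and $C$ is whatever is needed to absorb the $O(\cdot)$ constants.

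Concretely, I would carry out the steps in this order. Step one: fix a graph $G$ and invoke Lemma~\ref{lem:tpw-tcw} to obtain a subdivision $H$ of $G$ with $\tpw(H) \le c_1 \tcw(G)^2$ for an absolute constant $c_1$. Step two: apply the chain of inequalities from the proof of Corollary~\ref{coro:stable-tpw}, namely $\tpw(G)^{O(1)} \le \gamma(G) \le \gamma(H) \le \tpw(H)^{O(1)}$, where the outer two steps use that $\gamma$ and $\tpw$ are polynomially tied (Theorem~\ref{thm:tpw-obstructions}) and the middle step uses topological-minor monotonicity of $\gamma$ together with the fact that $G$ is a topological minor of $H$ (indeed $H$ is a subdivision of $G$). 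Step three: substitute $\tpw(H) \le c_1\tcw(G)^2$ into the right-hand side and solve for $\tpw(G)$, giving $\tpw(G) \le C\,\tcw(G)^c$ for explicit $C,c$.

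There is essentially no obstacle here, since all the real work has been done in the earlier lemmas; the only thing to be slightly careful about is the direction of the topological-minor relation (we need $G \le_{\text{top}} H$, which holds because a subdivision of $G$ contains $G$ as a topological minor) and about keeping the polynomial bounds honest — i.e.\ making sure the ``$O(1)$'' exponents are genuinely absolute constants and not hiding a dependence on $G$, which they are not, as they come from the fixed exponents in Theorem~\ref{thm:tpw-obstructions} and the fixed quadratic bound in Lemma~\ref{lem:tpw-tcw}. I would write the proof as essentially one displayed chain of inequalities plus a sentence identifying the constant.

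\begin{proof}
Let $G$ be any graph. By Lemma~\ref{lem:tpw-tcw} there is a subdivision $H$ of $G$ with $\tpw(H) \le \underline{\tpw}(G) = O(\tcw(G)^2)$. Since $H$ is a subdivision of $G$, $G$ is a topological minor of $H$, so by the monotonicity of $\gamma$ with respect to the topological minor relation (Theorem~\ref{thm:tpw-obstructions}), $\gamma(G) \le \gamma(H)$. Using that $\gamma$ and $\tpw$ are polynomially tied (again Theorem~\ref{thm:tpw-obstructions}, which gives $\tpw = \Omega(\gamma)$ and $\tpw = O(\gamma^{24})$), we obtain
\[
\tpw(G) = O\bigl(\gamma(G)^{24}\bigr) = O\bigl(\gamma(H)^{24}\bigr) = O\bigl(\tpw(H)^{24}\bigr) = O\bigl(\tcw(G)^{48}\bigr).
\]
Hence there are constants $C,c>0$ (one may take $c = 48$) such that $\tpw(G) \le C\,\tcw(G)^c$ for every graph $G$.
\end{proof}
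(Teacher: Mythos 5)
Your proof is correct and follows exactly the same route as the paper, which simply cites Lemma~\ref{lem:tpw-tcw} together with Corollary~\ref{coro:stable-tpw}; you unpack the chain of inequalities inside that corollary (via $\gamma$ and Theorem~\ref{thm:tpw-obstructions}) and obtain the explicit exponent $c=48$, which the paper leaves implicit.
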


We now turn our focus to the technical proof of Theorem~\ref{thm:tpw-obstructions}, for which we first need some further definitions and results.

We define the $m$-grid as the graph on the vertex set $[m]\times[m]$ with edges $(i,j)(i',j')$ when $|i-i'| + |j-j'| \leq 1$. We then define the $m$-wall as the graph obtained from the $m$-grid by removing edges $(i,j)(i+1,j)$ for $i+j$ even. The \emph{wall number} of a graph $G$ is then defined as the largest $m$ such that $G$ contains the $m$-wall as a (topological)\footnote{The notions of minor and topological minor coincide for graphs of maximum degree at most 3.} minor, and the \emph{grid number} of $G$ is the largest $m$ such that $G$ contains the $m$-grid as a minor. We denote them by $\wn(G)$ and $\gn(G)$ respectively.

\begin{observation}
    The wall number and the grid number are linearly tied: $\wn(G)=\Theta(\gn(G))$.
\end{observation}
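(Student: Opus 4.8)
The plan is to prove the two inequalities $\wn(G) = O(\gn(G))$ and $\gn(G) = \Omega(\wn(G))$ separately, each essentially by exhibiting a topological-minor/minor relationship between walls and grids of related sizes. First I would establish the easy direction: every $m$-wall is a subgraph of the $m$-grid (we only deleted edges to obtain the wall), so if $G$ contains the $m$-grid as a minor then it certainly contains the $m$-wall as a minor as well, giving $\wn(G)\geq \gn(G)$. Hence $\gn(G)=O(\wn(G))$ is the content of one half (after relabelling), so really the work is the other inequality.

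For the reverse direction, I would show that the $m$-wall contains an $\Omega(m)$-grid as a minor, so that $\wn(G)\geq m$ forces $\gn(G) = \Omega(m)$, i.e.\ $\gn(G)=\Omega(\wn(G))$. The standard fact here is that the $m$-wall (which has maximum degree $3$) contains a grid of side $cm$ as a minor for some absolute constant $c>0$; one obtains this by contracting, for each ``brick'' of the wall, a suitable path along the top or side of the brick, so that the bricks become the cells of a coarser grid. Concretely, grouping consecutive rows in pairs and contracting the vertical rungs inside each pair turns the wall into (a subdivision of, hence a minor-equivalent of) a grid with roughly $m/2$ columns and $m/2$ rows; careful bookkeeping gives a grid of side at least, say, $\lfloor m/2\rfloor$. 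Since topological minors are minors, and since all graphs involved have maximum degree at most $3$ (so that for the wall the two notions coincide by the footnote), this is enough.

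Combining, there are constants $c_1,c_2>0$ with $c_1\,\gn(G)\leq \wn(G)\leq c_2\,\gn(G)$ for every graph $G$, which is the claimed statement $\wn(G)=\Theta(\gn(G))$.

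I do not expect any real obstacle here: both directions are folklore facts about walls and grids, and the only mild care needed is to track the constant in the wall-to-grid contraction so that the hidden constants are genuinely absolute (independent of $G$ and of $m$), and to note explicitly that ``minor'' and ``topological minor'' agree for the subcubic graphs at play.
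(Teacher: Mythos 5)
The paper states this observation without proof, so there is no in-text argument to compare against; your proposal supplies a proof and it is correct. The structure is exactly what one would expect: the $m$-wall is a subgraph of the $m$-grid, so a grid minor yields a wall minor, and since walls are subcubic the footnote upgrades this to a topological minor, giving $\wn(G)\ge\gn(G)$; conversely, the $m$-wall contains a $\Theta(m)$-grid as a minor (by a routine contraction of the brick structure), so a wall topological minor yields a grid minor of comparable side, giving $\gn(G)=\Omega(\wn(G))$. The only place to be careful is the exact constant and bookkeeping in the wall-to-grid contraction: your specific recipe of ``contracting the vertical rungs inside each pair of rows'' needs a little care with the parity of which rungs survive, and depending on how it is done the side length may come out closer to $m/4$ than $m/2$, but any absolute constant suffices for $\Theta$, and you correctly flag this as bookkeeping. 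Note also that you only need the minor/topological-minor coincidence for the wall (degree $3$), not the grid (degree $4$), which is fine since $\gn$ is defined via minors.
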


We use the following result of Chuzhoy and Tan \cite{ChuzhoyT21} (the bound is weakened to have a lighter formula).

\begin{lemma}[Chuzhoy and Tan~\cite{ChuzhoyT21}]\label{lem:wn}
    The treewidth is polynomially tied to the grid number: $\tw = \Omega(\gn)$ and $\tw = O(\gn^{10})$.

    Hence, the treewidth is polynomially tied to the wall number: $\tw = \Omega(\wn)$ and $\tw = O(\wn^{10})$.
\end{lemma}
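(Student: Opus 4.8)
The plan is to obtain Lemma~\ref{lem:wn} by assembling three ingredients: the elementary minor-monotone lower bound, the polynomial grid-minor theorem of Chuzhoy and Tan in quantitative form, and the Observation above that $\wn = \Theta(\gn)$.

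For the lower bound $\tw = \Omega(\gn)$, I would use that the $m$-grid has treewidth exactly $m$ and that treewidth is minor-monotone: if $G$ contains the $m$-grid as a minor then $\tw(G) \geq m$, so taking $m = \gn(G)$ gives $\tw(G) \geq \gn(G)$. For the upper bound $\tw = O(\gn^{10})$, I would quote the polynomial grid-minor theorem of \cite{ChuzhoyT21}: there is a constant $c>0$ such that every graph of treewidth at least $t$ contains a grid minor of side length at least $c\cdot t^{1/9}/\mathrm{polylog}(t)$. Since $t^{1/9}/\mathrm{polylog}(t) = \Omega(t^{1/10})$, there is a constant $c'>0$ with $\gn(G) \geq c'\cdot \tw(G)^{1/10}$ for every graph $G$; rearranging yields $\tw(G) = O(\gn(G)^{10})$. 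The exponent $10$ is deliberately loose — rounding up from the true exponent so that the polylogarithmic loss is absorbed and the statement has a clean form, which is exactly the ``weakening'' announced before the lemma.

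For the ``hence'' part, I would plug the Observation $\wn(G) = \Theta(\gn(G))$ into both inequalities: replacing $\gn$ by a quantity that differs from it by a constant factor leaves the $\Omega$ bound intact and changes the $O(\gn^{10})$ bound only by a constant factor hidden in the $O(\cdot)$, so the exponent $10$ is preserved. This gives $\tw = \Omega(\wn)$ and $\tw = O(\wn^{10})$.

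The only genuine mathematical content here is the cited Chuzhoy–Tan theorem, which is treated as a black box; the rest is bookkeeping with exponents and constants. The one spot that warrants a sentence of care is the passage from grid minors to wall minors: the lemma is phrased for ordinary minors, whereas for walls one really wants topological minors as well, but since the $m$-wall has maximum degree $3$ a minor model converts to a topological-minor model at the cost of a constant factor in $m$ (as in the footnote), so $\wn = \Theta(\gn)$ holds for both notions and the exponents are unaffected. I do not expect any other obstacle.
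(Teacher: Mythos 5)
Your proposal is correct and matches what the paper does implicitly: the lemma is stated as a direct citation of Chuzhoy--Tan with the exponent rounded up to $10$ to absorb the polylogarithmic factor, the lower bound follows from the $m$-grid having treewidth $m$ together with minor-monotonicity of treewidth, and the wall version follows by plugging in the Observation $\wn = \Theta(\gn)$. Your extra remark about passing from minors to topological minors for walls is also the same point the paper makes in its footnote (maximum degree $3$), so there is no gap.
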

We are now ready to prove the theorem.

\begin{proof}[of Theorem~\ref{thm:tpw-obstructions}]
We call \emph{$m$-fan} the graph that consists of a path of order $m$ with an additional vertex adjacent to all of the vertices of the path.
We call \emph{$m$-branching-fans} the graphs that consist of a tree $T$ and a vertex $v$ adjacent to a subset $N$ of the vertices of $T$ containing at least the leaves, such that $m$ is the minimum size of a subset of vertices $X$ of $T$ such that each component of $T-X$ contains at most $m$ vertices of $N$.
In particular, the $(m+1)^2$-fan is an $(m+1)$-branching-fan.
We call \emph{$m$-multiple of a tree of order $m$} a graph obtained from a tree of order $m$ after replacing its edges by $m$ parallel edges and then subdividing each edge once to keep the graph simple.

Let $\gamma_1(G)$ be the largest $m$ such that $G$ contains an $m$-branching-fan as a topological minor.
Let $\gamma_2(G)$ be the largest $m$ such that $G$ contains an $m$-multiple of a tree of order $m$ as a topological minor.

Let $\gamma(G)$ be the maximum of $\wn(G)$, $\gamma_1(G)$, and $\gamma_2(G)$.
\begin{claim}
    The parameter $\gamma$ is monotonic with respect to the topological minor relation.
\end{claim}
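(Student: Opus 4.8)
The plan is to observe that all three of the quantities combined into $\gamma$ are instances of the same template: each is the largest integer $m$ for which \emph{some} graph from a prescribed family $\mathcal{F}_m$ is a topological minor of the input graph. For $\wn$ the family $\mathcal{F}_m$ is the singleton $\{m\text{-wall}\}$; for $\gamma_1$ it is the set of all $m$-branching-fans; for $\gamma_2$ it is the set of all $m$-multiples of trees of order $m$. Since all graphs here are finite, each of $\wn$, $\gamma_1$, $\gamma_2$ is a well-defined finite maximum. The only fact needed is that the topological minor relation is transitive: if $F$ is a topological minor of $H$ and $H$ is a topological minor of $G$, then $F$ is a topological minor of $G$ (one composes the two witnessing subdivision-embeddings, routing a subdivided $F$-edge through the union of the $H$-edges it uses).

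Given this, the proof is short. Suppose $H$ is a topological minor of $G$, and fix any one of the three parameters, say $\gamma_1$; write $m=\gamma_1(H)$ and let $F\in\mathcal{F}_m$ be an $m$-branching-fan that is a topological minor of $H$. By transitivity, $F$ is a topological minor of $G$, so $\gamma_1(G)\ge m=\gamma_1(H)$. The identical argument, using the appropriate family $\mathcal{F}_m$, gives $\wn(H)\le\wn(G)$ and $\gamma_2(H)\le\gamma_2(G)$. Taking the maximum of the three inequalities yields $\gamma(H)=\max\{\wn(H),\gamma_1(H),\gamma_2(H)\}\le\max\{\wn(G),\gamma_1(G),\gamma_2(G)\}=\gamma(G)$, which is exactly topological-minor monotonicity of $\gamma$.

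There is no substantial obstacle: the claim is essentially a restatement of transitivity of the topological minor order, and the only point deserving a sentence of care is that $\gamma_1$ and $\gamma_2$ are defined via \emph{families} of graphs (all $m$-branching-fans, respectively all $m$-multiples of trees of order $m$) rather than one graph per value of $m$. This does not affect the argument, since at each value of $m$ we only ever pass a single witness from $H$ to $G$. (If desired, one can also note explicitly that the relevant maxima are nonempty and finite for every finite graph, so that $\gamma$ is well defined.)
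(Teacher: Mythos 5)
Your proof is correct and follows the same route as the paper: all three constituent parameters are defined as the largest $m$ with a topological-minor witness, so monotonicity of each (and hence of their maximum) reduces to transitivity of the topological minor relation. The paper states this in one sentence; your write-up just spells out the ``family of witnesses'' viewpoint more explicitly, which is fine but not a different idea.
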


\begin{proof}
    Let $G$ be a graph and $H$ be a topological minor of $G$. Any topological minor of $H$ is also a topological minor of $G$, hence $\wn(G) \geq \wn(H)$, $\gamma_1(G) \geq \gamma_1(H)$, $\gamma_2(G) \geq \gamma_2(H)$.
    We conclude that $\gamma(G) \geq \gamma(H)$.
\end{proof}

We remark that the $m$-branching-fans, the $m$-multiples of trees of order $m$ and the $m$-wall have tree-partition-width $\Omega(m)$. Hence, we have $\tpw(G) = \Omega(\gamma(G))$.

We fix a graph $G$ and let $m=\gamma(G)$. Note that $m \geq \gamma_2(G) >0$.

We denote by $G^b$ the graph on the vertex set of $G$, where $xy$ is an edge if and only if there are at least $b$ vertex disjoint paths from $x$ to $y$. We now consider $G^b$ for $b=\Omega(m^{10})$.

\begin{claim}\label{claim:small-comp}
    The connected components of $G^b$ have size at most $m$.
\end{claim}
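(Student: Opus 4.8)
The plan is to argue by contraposition: if some connected component of $G^b$ has more than $m$ vertices, I will extract from $G$ a topological minor that certifies $\gamma(G) > m$, contradicting $m = \gamma(G)$. The natural candidate obstruction is an $m$-multiple of a tree of order $m$ (controlling $\gamma_2$), but branching-fans and walls may also show up; the point is that a large, highly connected component of $G^b$ forces one of these. First I would pick a component $K$ of $G^b$ with $|V(K)| \geq m+1$, and take a spanning tree $S$ of $K$ using $m$ of its vertices, so $S$ has $m-1$ edges. Each edge $xy$ of $S$ is an edge of $G^b$, hence by definition there are at least $b = \Omega(m^{10})$ internally-vertex-disjoint $x$–$y$ paths in $G$. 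The obstacle is that these path systems, taken over different tree edges, share vertices with one another, so I cannot naively glue them into an $m$-multiple of $S$; I need to route them through $G$ in a globally disjoint way.

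Here is where I would invoke the treewidth/wall connection. If $\wn(G) \geq m$ we are already done (that term of $\gamma$ is violated), so assume $\wn(G) < m$; by Lemma~\ref{lem:wn} this gives $\tw(G) = O(\wn(G)^{10}) = O(m^{10})$, and in particular $G$ has a tree decomposition of width $O(m^{10})$. Now I want to use the standard fact that in a graph of treewidth $t$, any two vertices joined by $> t$ internally-disjoint paths lie in bags that overlap substantially — more precisely, the set of bags containing $x$ and the set containing $y$ must be "close" along the decomposition tree (otherwise a single bag of size $\leq t+1$ would be an $x$–$y$ separator, contradicting $b > t+1$ disjoint paths when $b = \Omega(m^{10})$ exceeds the treewidth bound, choosing the constant in $b$ large enough). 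Iterating this along the spanning tree $S$ of $K$: all $m$ vertices of $K$ must live in a bounded-diameter region of the decomposition tree, i.e. within $O(m)$ bags, for a total of $O(m^{11})$ vertices "nearby." This confines the whole component to a small, bounded-treewidth chunk of $G$.

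Within this confined region, the $\Omega(m^{10})$-connectivity between each adjacent pair of $S$ is incompatible with treewidth $O(m^{10})$ unless the routing structure itself is large — and a careful accounting shows it must contain either a large wall (contradicting $\wn(G) < m$), a large branching-fan (contradicting $\gamma_1(G) < m$, which we may assume by the same maximality), or an $m$-multiple of a tree of order $m$ as a topological minor (contradicting $\gamma_2(G) \geq m$ being tight). Concretely, I would take the spanning tree $S$ and, using that no small separator can cut the high-flow pairs, greedily carve out $m$ internally-disjoint realizations of each edge $xy$ of $S$ inside $G$; the budget $b = \Omega(m^{10})$ against a worst-case "interference" bound of $O(m^{10})$ from the treewidth leaves room for $m$ truly disjoint copies, which after subdividing once yields exactly the $m$-multiple of $S$ as a subgraph, hence a topological minor. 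That gives $\gamma_2(G) \geq |V(S)| = m$; combined with the forced contradiction in the wall/fan cases, we conclude $|V(K)| \leq m$.

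The main obstacle I anticipate is the disjoint-routing step: turning "each tree-edge of $S$ has $b$ disjoint paths individually" into "the whole $m$-multiple of $S$ embeds with all paths globally disjoint." This is exactly the kind of flow/linkage argument (reminiscent of Menger-type and Ding–Oporowski-style path-packing arguments) where the treewidth bound $O(m^{10})$ must be played off against the connectivity budget $b = \Omega(m^{10})$, so the hidden constants in the definition "$b = \Omega(m^{10})$" have to be chosen to dominate whatever interference constant the packing lemma produces. I would expect the authors to either cite a clean linkage lemma or to do this accounting directly, and the exponent $24$ in Theorem~\ref{thm:tpw-obstructions} presumably tracks the compounding of the $10$ from Lemma~\ref{lem:wn} with the losses in this step.
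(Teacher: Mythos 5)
Your high-level plan (spanning tree of the big component, greedy packing of disjoint path systems for each tree edge, contradiction via a topological-minor obstruction) matches the paper's opening moves, but the step that actually closes the argument is not carried out, and the detour through treewidth is both unnecessary and does not deliver what you need.

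The paper's proof never invokes Lemma~\ref{lem:wn} or any treewidth bound for this claim. The greedy process is: order the spanning-tree edges so every prefix is connected, and for each edge $uv$ try to add $m+1$ disjoint $u$--$v$ paths avoiding everything placed so far and avoiding $C$. If this succeeds for $m$ edges you already have the $(m+1)$-multiple of a tree of order $m+1$, contradicting $\gamma_2(G)\le m$. The only question is what happens when the greedy step fails at some edge $uv$, and here the paper uses a simple pigeonhole that you replace by a vague appeal to ``interference.'' Concretely: $G^b$ gives $b$ disjoint $u$--$v$ paths; at most $m$ of them touch $C$; so at least $b-m$ of them are blocked by the $\le m(m+1)$ paths placed so far; hence one already-placed path $P_0$ meets at least $\tfrac{b-m}{m(m+1)}\ge (m+1)^2$ of the $u$--$v$ paths. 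Now $P_0$ together with the $u$--$v$ paths it hits contains a subdivision of the $(m+1)^2$-fan, which is an $(m+1)$-branching-fan, contradicting $\gamma_1(G)\le m$. This is where the exponent in $b=\Omega(m^{10})$ actually gets used: it must dominate $m(m+1)\cdot(m+1)^2$, not a treewidth bound. (In fact $b=\Omega(m^4)$ already suffices for this claim; the $m^{10}$ is driven by Claim~\ref{claim:tw-red} later.)

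Your proposed substitute --- bound $\tw(G)=O(m^{10})$, confine the component to a small region of the decomposition tree, and then ``carefully account'' for interference using the treewidth --- does not give a proof. The assertion that treewidth $O(m^{10})$ yields a worst-case interference bound of $O(m^{10})$ against which $b=\Omega(m^{10})$ wins is not justified and, as stated, is not true: a single blocking path can intersect arbitrarily many members of a disjoint path family regardless of the treewidth, and the number of already-placed paths is $\Theta(m^2)$, not a treewidth quantity. The correct conclusion when the greedy fails is not ``contradiction with treewidth'' but ``a single blocker hits many disjoint $u$--$v$ paths, which is exactly a fan,'' and that is the $\gamma_1$ obstruction you listed but never actually deployed. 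So the gap is the missing pigeonhole-plus-fan argument; once you have that, the treewidth/wall discussion can be deleted entirely.
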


\begin{proof}
    We proceed by contradiction, and assume there is a connected component $C$ of size at least $m+1$.

    Since $C$ is connected, it contains a spanning tree $T$. We number its edges $e_1,\dots,e_\ell$ such that every prefix induces a connected subtree of $T$.
    We construct a subgraph $H$ of $G$ that should be an $(m+1)$-multiple of a tree of order $m+1$, contradicting the definition of $m$.
    For each edge $uv$, in order, we try to add to $H$ $m+1$ vertex disjoint paths from $u$ to $v$ that avoid vertices of $C$ and the vertices already in $H$.
    If we manage to do this for at most $m$ edges, then we have placed at most $m(m+1)$ paths.
    Let $uv$ be the first edge for which we could not find $m+1$ vertex disjoint paths that do not intersect previous vertices (except for $u$ or $v$). By definition of $G^b$, there are $b$
    vertex disjoint $u,v$-paths in $G$, we denote the set of such paths by $\pi$. At most $m$ of the paths of $\pi$ hit vertices of $C$ already in $H$. Then, since at least $b-m$ are hit by previous paths and there are at most $m(m+1)$ previous paths. By the pigeon hole principle, one of the previous paths $P_0$ must hit $\frac{b-m}{m(m+1)} \geq (m+1)^2$ paths in $\pi$. By considering $P_0$ and the paths it hits in $\pi$, we easily obtain a subdivision of an $(m+1)^2$-fan. This is a contradiction with the definition of $m$.
    Hence, we must have been able to process $m$ edges. Which means we obtained a subdivision of an $(m+1)$-multiple of a tree of order $m+1$. This is a contradiction to the definition of $m$.
    We conclude that the connected component must have size at most $m$.
\end{proof}

Let $H$ be the quotient of $G$ by the connected components of $G^b$. We call it the $b$-reduction of $G$.

\begin{claim}\label{claim:deg-red}
    The blocks of $H$ have maximum degree at most $bm^3$.
\end{claim}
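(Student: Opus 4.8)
The plan is to adapt the proof of Claim~\ref{claim:degree-bound} (which handled the analogous statement under a tree-partition-width bound), replacing every use of that bound by an argument based on the parameter $m=\gamma(G)$ — specifically on the absence of large branching-fans, fans and multiples of trees — and reusing the greedy disjoint-path-packing mechanism of Claim~\ref{claim:small-comp} to transfer structure from $H$ back to $G$.

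First I would set up the usual skeleton. Let $B$ be a block of $H$ and $u\in V(B)$ a vertex of maximum degree $D$; since $B$ is $2$-connected, $B-u$ is connected. Fix $N=N_B(u)$ (so $|N|=D$), take a spanning tree $T_0$ of $B-u$, and reduce it to a tree $T'$ by repeatedly deleting leaves outside $N$ and suppressing degree-$2$ vertices outside $N$; then every leaf and every degree-$2$ vertex of $T'$ lies in $N$, and $(T',u,N)$ — the tree $T'$ together with $u$ joined to $N$ (which contains the leaf set of $T'$) — is a branching-fan that is a topological minor of $B$, hence of $H$. Writing $\Delta'$ for $\Delta(T')$ and $\nu$ for the branching-fan number of $(T',u,N)$, a routine tree count (exactly as in Claim~\ref{claim:degree-bound}: removing $X$ with $|X|\le\nu$ from $T'$ leaves at most $1+\nu\Delta'$ components, each with at most $\nu$ vertices of $N$) gives $D=|N|=O(\nu^{2}\Delta')$, so it suffices to bound $\Delta'$ and $\nu$ polynomially in $b$ and $m$.

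To bound these I would use the following \emph{lifting procedure}, which is the heart of the argument: given a bounded-size topological-minor model in $H$, realize it in $G$ by replacing each node (a connected component $C$ of $G^{b}$ with $|C|\le m$ by Claim~\ref{claim:small-comp}) by a Steiner tree inside $C$ built from the $\le m-1$ $G^{b}$-edges of a spanning tree of $C$, each realized by a $G$-path, and by realizing each $H$-edge with a single $G$-edge. One processes the $\mathrm{poly}(m)$ many segments greedily: to realize a $G^{b}$-edge $\{p,q\}$ one asks for one of the $\ge b$ internally disjoint $p$–$q$ paths of $G$ avoiding the fewer-than-$b$ already-committed vertices; if none is free then some committed path meets $\ge(b-\mathrm{poly}(m))/\mathrm{poly}(m)\ge(m+1)^{2}$ of them, and that path together with those paths is a subdivision of an $(m+1)^{2}$-fan in $G$ — impossible, since the $(m+1)^{2}$-fan is an $(m+1)$-branching-fan and $\gamma_{1}(G)\le m$. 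Applying this with $d$ internally disjoint $w$–$u$ paths of $H$ (obtained from a vertex $w$ of $T'$-degree $d$ by extending its incident edges along $T_0$ and appending edges to $u$) and then pigeonholing over the $\le m^{2}$ endpoint pairs inside $C_w,C_u$ yields $\ge d/m^{2}$ internally disjoint paths in $G$ between two vertices of distinct $G^{b}$-components, forcing $d<bm^{2}$; so $\Delta'=O(bm^{2})$. Applying the procedure to $(T',u,N)$ itself yields a branching-fan topological minor of $G$ whose branching-fan number is $\nu/\mathrm{poly}(m)$, forcing $\nu=O(\mathrm{poly}(m))$ by $\gamma_{1}(G)\le m$. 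Combining with $D=O(\nu^{2}\Delta')$ and tracking constants exactly as in Claim~\ref{claim:degree-bound} gives $D\le bm^{3}$.

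The main obstacle, and the place where the argument genuinely differs from Claim~\ref{claim:degree-bound}, is the lifting procedure: a component $C$ of $G^{b}$ need not be connected in $G[C]$, so one cannot simply contract it, and the Steiner-tree realizations together with the $H$-edge realizations must be routed through $G$ without colliding. The point — already exploited in Claim~\ref{claim:small-comp} — is that $b$ is chosen so large ($b=\Omega(m^{10})$) that whenever a disjoint route cannot be found the obstruction is itself a large fan, contradicting $\gamma(G)=m$; hence either the lift succeeds and certifies the bounds above, or it fails and we are done immediately. The remaining bookkeeping — the tree-counting constants and the $\mathrm{poly}(m)$ factors lost when blowing a node up into a $\le m$-vertex branch set — is routine and is what pins the final bound to $bm^{3}$.
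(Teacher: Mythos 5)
Your proposal differs from the paper's proof in a fundamental way: you build the reduced tree $T'$ and the branching-fan $(T',u,N)$ inside $H$ and then try to \emph{lift} this structure back to $G$ via a greedy disjoint-path-packing procedure in the spirit of Claim~\ref{claim:small-comp}. The paper avoids lifting altogether. After observing that $X$ has more than $bm^3$ edges in $G$ to the rest of the block, it pigeonholes over the at most $m$ vertices of the $G^b$-component $X$ to isolate a \emph{single $G$-vertex} $v\in X$ with more than $bm^2$ neighbours in $G$. It then takes a spanning tree of the $G$-connected part of $B-X$, reduces it, and the resulting branching-fan is a topological minor of $G$ \emph{directly} (the tree is a subtree of $G$, and $v$ is $G$-adjacent to its leaves), so the bounds $\Delta(T')\le b-1$ and the existence of a small cut set $U$ follow with no intermediate translation.

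The gap in your argument is precisely at the lift. Your lifting procedure is stated for ``a bounded-size topological-minor model in $H$,'' and you invoke the Claim~\ref{claim:small-comp} mechanism, which relies on there being only $\mathrm{poly}(m)$ path segments to route: if a $G^b$-edge cannot be realised disjointly, one of the few previously committed paths must block $\ge (b-m)/\mathrm{poly}(m)\ge(m+1)^2$ of the $b$ candidate paths, producing a fan. But the objects you need to lift are not of bounded size. The $d$ internally-disjoint $w$--$u$ paths used to bound $\Delta'$, and the tree $T'$ used to bound $\nu$, have $\Omega(\Delta')$ and $\Omega(D)$ vertices respectively, so the number of $G^b$-edges to realise is on the order of $D\cdot m$ or $\Delta'\cdot m$, which in the regime you are contradicting ($D>bm^3$, $\Delta'>bm^2$) already exceeds $b$; the quantity $(b-\text{\#committed vertices})/\text{\#paths}$ is then negative, so no large fan is forced and the greedy cannot be pushed through. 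In addition, even granting the lift, your final bound $D=O(\nu^2\Delta')$ with $\Delta'=O(bm^2)$ and $\nu=O(m)$ gives $O(bm^4)$, not $bm^3$; the paper obtains $\Delta(T')\le b-1$ precisely because $T'$ lives in $G$, and then converts $|N_G(v)|=O(bm^2)$ to a degree bound of $O(bm^3)$ for $X$ using $|X|\le m$. Pigeonholing to a single $G$-vertex \emph{before} building the tree is the missing step that both closes the lifting gap and recovers the stated exponent.
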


\begin{proof}
    Assume by contradiction that the maximum degree is more than $bm^3$.
    Let $B$ be a block of $H$, and $X$ be one if its vertices of maximum degree.
    $X$ contains at most $m$ vertices of $G$ by Claim~\ref{claim:small-comp}. The vertices of $G$ in $B-X$ must be in the same connected component $C$ of $G$ since $b>m$.
    There are at least $bm^3 +1$ edges between $X$ and $C$. By the pigeon hole principle, one vertex $v$ of $X$ must have at least $bm^2 + 1$ neighbors in $C$. Consider a spanning tree $T$ of $C$. We iteratively remove leaves that are not neighbors of $v$, and then replace any vertex of degree $2$ that is not a neighbor of $v$ by an edge between its neighbors. We denote this reduced tree by $T'$.

    First, note that the degree in $T'$ is bounded by $b-1$ because incident edges can be extended to vertex disjoint paths to leaves of $T'$ which are neighbors of $v$ by construction. We now use the fact that $G$ contains no $(m+1)$-branching-fans as topological minors. In particular, there must be a set $U$ of vertices of $T'$ of size at most $m$ such that components of $T'-U$ contain at most $m$ neighbors of $v$. By removing at most $m$ vertices of degree at most $(b-1)$, we have at most $1+(b-2)m$ components in $T'-U$ meaning $v$ has degree bounded by $(b-1)m^2$. We found our contradiction.
\end{proof}

\begin{claim}\label{claim:tw-red}
    The treewidth of $H$ is at most $O(b)$.
\end{claim}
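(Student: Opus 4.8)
The goal is to show $\tw(H) = O(b)$, and since $b = \Omega(m^{10})$ it suffices to prove $\tw(H) = O(m^{10})$. The plan is to combine two ingredients that are already available: the treewidth--grid bound of Chuzhoy and Tan (Lemma~\ref{lem:wn}), to control $\tw(G)$, and a transfer of a tree decomposition from $G$ to its $b$-reduction $H$, exactly in the spirit of Claim~\ref{claim:dec-red}.

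First I would observe that $\wn(G) \le \gamma(G) = m$, so Lemma~\ref{lem:wn} gives $\tw(G) = O(\wn(G)^{10}) = O(m^{10})$. Fix a tree decomposition $(T,(X_i)_{i \in V(T)})$ of $G$ of width $w = O(m^{10})$. Since we are free to choose $b = \Omega(m^{10})$ with a sufficiently large hidden constant, we may assume $b \ge w+1$; this is the only hypothesis needed for the transfer. Here I would note that the definition of $G^b$ used in this section (pairs joined by $b$ vertex-disjoint paths) agrees with the separator-based definition by Menger's theorem, so Claim~\ref{claim:dec-red} applies.

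Then I would set $Y_i$ to be the set of connected components of $G^b$ meeting $X_i$ and reproduce the short argument of Claim~\ref{claim:dec-red} to see that $(T,(Y_i)_{i \in V(T)})$ is a tree decomposition of $H$: every component of $G^b$ occurs in some $Y_i$; every edge of $H$ lifts to an edge $uv$ of $G$ whose common bag $X_i$ puts both endpoints' components in $Y_i$; and if some component $C$ were absent from a bag $Y_i$ lying between two bags of $T$ that each contain $C$, then, since $C$ is connected \emph{in $G^b$}, there is an edge $uv$ of $G^b$ with $u,v \in C$ whose endpoints are separated by $X_i$ in $G$, so $X_i$ is a $u$--$v$ separator of size $\le w < b$, contradicting that $u$ and $v$ are joined by $b$ vertex-disjoint paths in $G$. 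Finally, distinct components meeting $X_i$ occupy distinct vertices of $X_i$, so $|Y_i| \le |X_i| \le w+1$, whence $\tw(H) \le w = O(m^{10}) = O(b)$.

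There is no genuinely hard step here; the only point requiring care is the connectivity axiom for the tree decomposition of $H$, where one must remember that a component of $G^b$ need not induce a connected subgraph of $G$ — it is connectivity \emph{in $G^b$}, together with the absence of a small $u$--$v$ cut in $G$, that prevents such a component from being split across a bag. Since this is precisely the subtlety already handled in Claim~\ref{claim:dec-red}, the cleanest write-up just checks $b \ge \tw(G)+1$ and cites that claim.
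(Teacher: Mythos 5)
Your proof takes essentially the same route as the paper: bound $\tw(G) = O(\wn(G)^{10}) = O(m^{10}) = O(b)$ via Lemma~\ref{lem:wn}, then transfer a width-$O(b)$ tree decomposition of $G$ to the $b$-reduction $H$ by replacing bags with the $G^b$-components they meet, exactly as in Claim~\ref{claim:dec-red}. Your extra remarks — that $b$ can be chosen large enough to ensure $b \ge w+1$, and that the path-based and separator-based definitions of $G^b$ agree via Menger — are sound and just make explicit points the paper leaves implicit.
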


\begin{proof}
    We first apply Lemma~\ref{lem:wn} to bound the treewidth of $G$ by $O(b)$.
    Consider a tree decomposition of $G$ of adequate width $\Theta(b)$, and replace each bag by the components of $G^b$ that intersect it. By Claim~\ref{claim:dec-red}, this is a decomposition of $H$.
\end{proof}

Using Claim~\ref{claim:deg-red} and Claim~\ref{claim:tw-red} and the construction of Wood as we did in the approximation algorithm, we obtain a tree-partition of $H$ of width $O(b^2m^3)$. We then replace components of $G^b$ by their vertices, obtaining a tree-partition of $G$ of width $O(b^2m^4)$ due to Claim~\ref{claim:small-comp}.

We have obtained a tree-partition of width $O(b^2 m^4)=O(m^{24})$. This concludes the proof of Theorem~\ref{thm:tpw-obstructions}.
\end{proof}

\section{Weighted Tree-Partition-Width}
\label{section:weightedtpw}

Recent investigations in algorithmic applications
of tree-partition-width \cite{BodlaenderCW22,BodlaenderMOPvL23} give
algorithms that are fixed parameter tractable with the weighted tree-partition-width as parameter. In these cases, all vertices have weight one, and all
edges have a positive weight. In this section,
we show XALP-completeness for \textsc{Weighted Tree-Partition-Width} 
when all vertices and edges have weight one, and give
an approximation algorithm similar to Theorem~\ref{thm:main}. The latter result can be regarded as a
corollary of Theorem~\ref{thm:main}.

\begin{corollary}
    There is an algorithm that given an $n$-vertex
    graph $G$ with vertex and edge weights, and an
    integer $k$, constructs a tree-partition of
    breadth $O(k^{15})$ for $G$ or reports 
    that $G$ has weighted tree-partition-width more than $k$, in time $k^{O(1)}n^2$.
\end{corollary}

\begin{proof}
    First, observe that when an edge $vw$ has weight more than $k$, then $v$ and $w$ must belong to the same bag in any tree-partition of breadth at most $k$.
    Repeat the following step: if there is
    an edge $vw\in E$ of weight larger than $k$,
    contract the edge. Suppose $x$ is the new
    vertex. Set the weight of $x$ to the sum of
    the weights of $v$ and $w$:
    $w_V(x) = w_V(v)+w_V(w)$.
    If a vertex $y$ is a neighbor to $v$ or $w$,
    then take an edge $xy$, with weight equal
    to the weight of an original edge $vy$ or
    $wy$ if one of these exists, or to the sum
    of these weights if both exist.

    If we have a vertex of weight larger than $k$, we can safely conclude that the weighted
    tree-partition-width of $G$ is larger than $k$,
    and we stop.

    Let $G'$ be the resulting graph. Note that
    all vertex and edge weights in $G'$ are at most $k$. 
    Now, run the third algorithm of Theorem~\ref{thm:main} on $G'$ ignoring all
    edge weights. Note that we can safely have the weight of vertices in the $b$-reduction of $G'$ to be the sum of the weights of the corresponding vertices of $G'$. If this algorithm concludes that
    the tree-partition-width of $G'$ is larger
    than $k$, then we can also conclude that
    the weighted tree-partition-width of $G$ is
    larger than $k$. Otherwise, we obtain
    a tree-partition $(T, (B_i)_{i\in V(T)})$
    of $G'$ of width at most $k$.
    Note that each bag $B_i$ has total weight at most $O(k^7)$. More precisely, in step 4 of the algorithm, we obtain a tree-partition of the $b$-reduction of $G'$ of width $O(wbk^2)=O(k^6)$, with each vertex of weight at most $k$.

    Now, obtain a tree-partition of $G$ by
    undoing all contractions. I.e., if we
    contracted $vw$ to $x$ and $x\in B_i$, then
    replace $x$ by $v$ and $w$ in $B_i$. Repeat this step in the reverse order of which we
    did the contraction steps.
The result is a tree-partition of $G$. 
Undoing contractions does not change the 
total weight of vertices in bags, so the
total weight of vertices in a bag is bounded
by $O(k^7)$. Now, for each pair $ii'$ of adjacent
bags, there are $O(k^{14})$ pairs of vertices
with one endpoint in each bag, and each edge between these bags has weight at most $k$,
so the total weight of edges with one endpoint
in $B_i$ and one endpoint in $B_{i'}$ is bounded
by $O(k^{15})$. Thus, the breadth of the
obtained tree-partition is $O(k^{15})$.
\end{proof}

\begin{corollary}
    There is an algorithm running in polynomial time that
    constructs a tree-partition of breadth
    $O(k^{11}\log^2 k)$ or reports that the
    weighted tree-partition-width is more than $k$.
    
        There is an algorithm running in time $2^{O(k \log k)}n$, that computes a tree-partition of breadth $O(k^{11})$ or reports that the weighted tree-partition-width is more than $k$.
\end{corollary}

\begin{proof}
    Use the same approach as in the previous
    algorithm, but use different subroutines
    to compute the approximate tree-partition, as in the different cases in Theorem~\ref{thm:main}.
\end{proof}

As tree-partition-width is the special case 
of weighted tree-partition-width with all vertex weights one, and all edge weights zero,
we directly have that deciding weighted tree-partition-width is XALP-hard. However, in applications, the case where all edges have
positive weights is of main interest~\cite{BodlaenderCW22,BodlaenderMOPvL23}. In the following, we show that similar hardness also holds in some simple cases with edges having positive weights.
We give a few intermediate results.

\begin{lemma}
    \textsc{Weighted Tree-Partition-Width} is in XALP. \label{lemma:weightedtwpinxalp}
\end{lemma}

\begin{proof}
    This can be proved in the same way as Lemma~\ref{lemma:tpwinxalp}.
\end{proof}

\begin{lemma}
    \textsc{Weighted Tree-Partition-Width} with
    all edges weight 1 is XALP-hard.
\end{lemma}

\begin{proof}
    Take an instance of \textsc{Tree-Partition-Width}: $(G,k)$. Now, set the weight of
    all edges to 1, and all vertices to $L=k^2+1$. Now, for each tree-partition of $G$, its width (where we ignore weights) is at most $k$, if and only if the maximum
    over all bags of the total weight of vertices in a bag is at most $kL$, if and only if its breadth is at most $kL$.
    The latter follows as there are at most $k^2$ edges between bags each of weight one.
    The result now directly follows.
\end{proof}

\begin{theorem}
    \textsc{Weighted Tree-Partition-Width} with
    all vertex and edge weights equal to 1 is XALP-complete.
\end{theorem}

\begin{proof}
    For membership in XALP, see Lemma~\ref{lemma:weightedtwpinxalp}.

    For hardness, we take an instance $(G,k)$ of
    \textsc{Weighted Tree-Partition-Width} with
    all edge weights equal to 1. We may assume that
    all vertices have weight at most $k$, otherwise we have a trivial no-instance.
    
    Now, for every
    vertex $v$ with $w_v=r$, we replace $v$ by a
    subgraph with $r+k+2$ vertices. 
    Take a clique 
    $C_{v,1}$ with $r$ vertices, a second
    clique $C_{v,2}$ with $k$ vertices, and 
    two vertices $x_{v,1}$, $x_{v,2}$.
    
Add the following edges: $x_{v,1}$ and $x_{v,2}$ are adjacent to all vertices in $C_{v,2}$.
Take a vertex $x_{v,3}\in C_{v,2}$ and make
it adjacent to all vertices in $C_{v,1}$.
For each edge $\{v,w\}\in E(G)$, add an edge
between an arbitrary vertex from $C_{v,1}$ and
an arbitrary vertex from $C_{w,1}$. See Figure \ref{fig:enter-label} for an illustration.

Let $G'$ be the resulting graph. All vertices and edges in $G'$ have weight 1.

\begin{figure}
    \centering
    \includegraphics{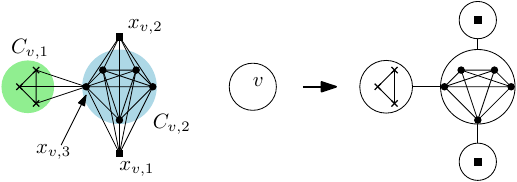}
    \caption{An example of the subgraph for a
    vertex $v$ of weight 3; here $k=5$, and an illustration of how a 
    tree-partition of $G$ is transformed to a 
    tree-partition of $G'$.}
    \label{fig:enter-label}
\end{figure}

The following observation follows directly from the definition.

\begin{claim}
    Let $C$ be a clique in a graph $G$. In each tree-partition of $G$, all vertices in $G$ are in one bag, or in two adjacent bags.
    \label{claim:cliquebags}
\end{claim}

\begin{claim}
    For each vertex $v\in V$ and each tree-partition of $G'$ of breadth at most $k$,
    all vertices of $C_{v,2}$ belong to the same bag.
    \label{claim:samebag2}
\end{claim}

\begin{proof}
    Suppose $(T, (B_i)_{i\in V(T)})$ is a tree
    partition of $G'$ of breadth at most $k$.

    Suppose the vertices in $C_{v,2}$ do not belong to the same bag. Then, by Claim~\ref{claim:cliquebags}, they belong to
    two adjacent bags, say $i_1$ and $i_2$. Now,
    $x_{v,1}$ and $x_{v,2}$ must be in bags, adjacent to $i_1$ and adjacent to $i_2$,
    so $x_{v,1}, x_{v,2} \in B_{i_1} \cup B_{i_2}$. 
    $B_{i_1}\cup B_{i_2}$ contain all $k+2$
    vertices in $C_{v,2}\cup \{x_{v,1},x_{v,2}\}$; thus, within this set, there are
    at least $2k$ pairs of vertices with
    one endpoint in $B_{i_1}$ and one endpoint in
    $B_{i_2}$; however, only one pair of vertices in this set is not adjacent (namely,
    $x_{v,1}$ and $x_{v,2}$), so at least $2k-1$ edges cross the cut from $i_1$ to $i_2$, which contradicts the breadth of the tree-partition when $k>1$.
\end{proof}

\begin{claim}
    For each vertex $v\in V$ and each tree-partition of $G'$ of breadth at most $k$,
    all vertices of $C_{v,1}$ belong to the same bag.
    \label{claim:samebag1}
\end{claim}

\begin{proof}
    Suppose $x_{v,3}\in B_i$. Now, as $x_{v,3}\cup C_{v,1}$ is a clique, all
    vertices in $x_{v,3}\cup C_{v,1}$ are in two incident bags, say $B_i\cup B_{i'}$. As
    $B_i$ contains the $k$ vertices from
    $C_{v,2}$ (by Claim~\ref{claim:samebag2}),
    $C_{v,1}\subseteq B_{i'}$.
\end{proof}

\begin{claim}
    $G$ has weighted tree-partition-width at most $k$, if and only if $G'$ has weighted tree-partition-width at most $k$.
\end{claim}

\begin{proof}
Suppose the weighted tree-partition-width of $G'$ is at most $k$.
    Suppose $(T, (B_i)_{i\in V(T)})$ is
    a tree-partition of breadth at most $k$ of $G'$.
Let for all $i\in V(T)$, $B'_i = \{v\in V ~|~ X_{v,1}\subseteq B_i\}$. It is easy to check that
$(T, (B'_i)_{i\in V(T)})$ is a tree-partition of $G$ of breadth at most $k$. By Claim~\ref{claim:samebag1}, each vertex $v\in V(G)$ belongs to one set $B'_i$. As we replace
a clique with $w_V(v)$ vertices of weight one by one vertex with weight $w_V(v)$, the total weight of each bag is still bounded by $k$.
For each edge $\{v,w\}\in E(G)$, the bags
containing $C_{v,1}$ and $C_{w,1}$ must be the
same or adjacent, as there is an edge between
a vertex in $C_{v,1}$ and a vertex in $C_{w,1}$.
So, the weighted tree-partition-width of $G$ is
at most $k$.

Now, suppose the weighted tree-partition-width of $G$ is
at most $k$.  Suppose $(T, (B_i)_{i\in V(T)})$ is
    a tree-partition of breadth at most $k$ of $G$. Build a tree-partition of $G'$ as
    follows: set $B''_i = \bigcup_{v\in B_i} C_{v,1}$, i.e., we replace each vertex $v$ by the set of vertices $C_{v,1}$, for all $i\in V(T)$.
For each $v\in V(G)$, we add three extra bags:
one bag $i_{v,2}$ containing all vertices
in $C_{v,2}$, one bag containing only the
vertex $x_{v,1}$ which is adjacent to bag $i_{v,2}$, and one bag containing only the
vertex $x_{v,1}$ which also is adjacent to bag $i_{v,2}$. One easily checks that this is a tree-partition of $G'$ of breadth  at most $k$.
\end{proof}

One can check that the transformation can be done with $O(k^{O(1)}+\log n)$ memory. So, we have a pl-reduction from an XALP-complete
problem, and the result follows.
\end{proof}

\section{Conclusion}
\label{section:conclusion}

We settle the question of the exact computation of tree-partition-width, and show that its approximation is tractable. However, many questions remain regarding approximation algorithms:
\begin{itemize}
    \item Is a constant factor approximation tractable?
    \item Can we improve the approximation ratios with similar running times?
\end{itemize}

Some building blocks of the algorithm could possibly be improved in terms of running time.
Is there some value of $b$ polynomial in $w$ and $k$ such that we can compute $G^b$ in time $k^{O(1)}n \log n$ or in time $2^{o(k \log k)}n$ ? This would directly give faster running times for the approximation algorithm, possibly at the cost of a worse approximation ratio.

We gave an algorithm to approximate the weighted
tree-partition-width, which was introduced in
\cite{BodlaenderCW22}, but with a relatively large
factor; we leave it as an open problem to find
better approximations for weighted tree-partition-width.

Another interesting direction is to study the complexity of computing (approximate) tree decompositions on graphs of bounded tree-partition-width.

%\nocite{*}
%\bibliographystyle{abbrvnat}
\bibliographystyle{abbrv}
\bibliography{main}
\label{sec:biblio}

\end{document}